\newtheorem{assumption}{Assumption}[section]
\newtheorem{remark}{Remark}[section]
\newtheorem{example}{Example}
\let\ALP \mathcal
\let\FLD \mathfrak
\let\VEC \mathbf
\newcommand\transpose{\mathsf{T}}
\newcommand\IND{\mathds{1}}
\newcommand{\ind}[1]{\IND_{\{#1\}}}
\newcommand{\beq}[1]{\begin{eqnarray} #1 \end{eqnarray}}
\newcommand{\beqq}[1]{\begin{eqnarray*} #1 \end{eqnarray*}}
\renewcommand{\Re}{\mathbb{R}}
\newcommand{\Na}{\mathbb{N}}
\newcommand{\ex}[1]{\mathbb{E}\left[#1\right]}
\newcommand{\pr}[1]{\mathbb{P}\left\{#1\right\}}
\newcommand{\ws}{\overset{w^*}{\rightharpoonup}}
\newcommand{\infnorm}[1]{\|#1\|_{\infty}}
\newcommand{\lf}[1]{\lim_{#1\rightarrow \infty}}
\newcommand{\lif}[1]{\underset{#1\rightarrow \infty}{\lim\inf}\:}
\newcommand{\ca}{\textsf{ca}}
\newcommand{\ball}[1]{\mathtt B(#1)}
\let \SF \mathscr
\newcommand{\pj}{\textsf{Pr}}
\renewcommand{\transpose}{\texttt{T}}
\newcommand{\ic}[1]{\textsc{IC}(#1)}
\title{On the Existence of Optimal Policies for a Class of Static and Sequential Dynamic Teams}
\author{Abhishek Gupta \and Serdar Y\"{u}ksel\thanks{Serdar Y\"{u}ksel is with Department of Mathematics and Statistics, Queen's University, Canada. Email: \texttt{yuksel@mast.queensu.ca}.} \and Tamer Ba\c{s}ar \and C\'{e}dric Langbort
\thanks{Abhishek Gupta, Tamer Ba\c{s}ar and C\'edric Langbort are with Coordinated Science Laboratory at the University of Illinois at Urbana-Champaign (UIUC), USA. Email: {\tt \{gupta54,basar1,langbort\}@illinois.edu}. Research was supported in part by AFOSR MURI Grant FA9550-10-1-0573 and in part by NSF Grant CCF 11-11342. Sections \ref{sec:staticteam} and \ref{sec:unbounded} are generalizations of the results reported in \cite{gupta2014teamacc} at 2014 American Control Conference (ACC). Sections \ref{sec:staticreduction} and \ref{sec:lqgteam} are generalizations of the setting considered in \cite{gupta2014teamcdc}, which has been submitted recently to 53rd IEEE Conference on Decision and Control, 2014.}}
\begin{document}
\maketitle
\begin{abstract}
In this paper, we identify sufficient conditions under which static teams and a class of sequential dynamic teams admit team-optimal solutions. We first investigate the existence of optimal solutions in static teams where the observations of the decision makers are conditionally independent or satisfy certain regularity conditions. Building on these findings and the static reduction method of Witsenhausen, we then extend the analysis to sequential dynamic teams. In particular, we show that a large class of dynamic LQG team problems, including the vector version of the well-known Witsenhausen's counterexample and the Gaussian relay channel problem viewed as a dynamic team, admit team-optimal solutions. Results in this paper substantially broaden the class of stochastic control and team problems with non-classical information known to have optimal solutions.
\end{abstract}

\section{Introduction}
A team is a group of agents who act collectively, but not necessarily sharing all the information, in order to optimize a common cost function. In stochastic teams, there may be probabilistic uncertainty about initial state, observations or cost functions, and/or the evolution of the state is disturbed by some external noise process. The statistics of the noise processes, state evolution equations and observation equations are common knowledge among the agents of a team. At every time step, each agent acquires some information about the state via its observation, may acquire other agents' past and current observations and past actions, and may recall its past observations and actions. If each agent's information is only determined by primitive/exogenous random variables, the team is said to be a static team. If at least one agent's information is affected by an action of another agent, the team is said to be dynamic. The information structure in a team determines such functional and probabilistic relations 
in that team.

A system in which there is a pre-specified order of actions is said to be a sequential team. In non-sequential teams, the ordering of ``who acts when'' is not known {\it a priori}, and ordering of actions may be determined by the outcome of some random process \cite{teneket1996}. In this paper, we focus our attention on the existence of optimal strategies in sequential dynamic teams. A sequential team in which the information available to the decision makers in forward time is non-contracting is said to be a classical team (such settings include the well-studied single-agent Markov Decision Problems with full memory). A sequential team, which is not classical, but has the property that whenever an agent's, say ${\bf A}k$'s, information is affected by the action of some other agent, say ${\bf A}j$, ${\bf A}k$ has access to ${\bf A}j$'s information, is said to have a quasi-classical (or partially nested) information structure. An information structure which is not quasi-classical is said to be non-classical. 
For an extensive discussion on classifications of information structures, the reader is referred to \cite{wit1971b} and \cite{yukselbook}.


We next provide a brief overview of the early developments in the area of teams with asymmetric information, as is relevant to this paper. In 1962, Radner published a seminal paper on team decision problems \cite{radner1962} where he showed that a class of static stochastic teams with continuously differentiable and strictly convex cost functions admit globally optimal (minimizing) solutions which are also the unique person-by-person-optimal solutions; this result was later extended by Krainak et al \cite{krainak1982} to settings with exponential cost functions. A byproduct of Radner's result is that in static linear-quadratic-Gaussian (LQG) team problems, linear strategies of the agents that are person-by-person optimal are also team-optimal. Furthermore, partially nested LQG teams also admit linear optimal solutions, as was observed by Ho and Chu \cite{ho1972a,chu1972}. When the information structure is non-classical, however, Witsenhausen showed that even seemingly simple LQG settings can be very 
difficult to solve \cite{witcount}: He devised a scalar LQG team problem which admits an optimal solution, which, however, is not linear.

Obtaining solutions of classical dynamic team problems is quite well understood, with dynamic programming providing the most convenient approach. For dynamic teams not of the classical type, however, there is no systematic approach which is universally applicable; see \cite{NayyarBookChapter}, \cite{yukselbook}, \cite{CDCTutorial} for a detailed coverage and analysis of the various solution approaches, which also depend on the underlying information structure.

In this paper, we prove that a class of sequential teams with a certain information structure (not-necessarily classical or quasi-classical) admits team-optimal solutions. This constitutes a first step toward understanding the most general conditions under which stochastic dynamic team problems admit optimal solutions. The results obtained and the general framework adopted are applicable to various models of team problems that are studied in economics, information theory, network information theory, and stochastic control.

\subsection{Previous Work}
Some of the earliest and most fundamental works on understanding the role of information in general dynamic stochastic teams were carried out by Witsenhausen in \cite{wit1971b,witcount,wit1971a,wit1988}. Among these fundamental contributions, in \cite{wit1988} Witsenhausen showed that all sequential team problems satisfying an absolute continuity condition of certain conditional measures can be transformed into an equivalent static team problem with a different cost function and with the agents observing mutually independent random variables. This transformation of a dynamic team into a static team problem with independent observations is called \textsl{``static reduction''} of the dynamic team problem (see Section 3.7 and  particularly p. 114 of \cite{yukselbook} both for a discussion on this reduction as well as an overview of Witsenhausen's contributions). In this paper, we make use of this equivalence between dynamic and static team problems to show the existence of optimal strategies in the original 
dynamic team problem. Recently, \cite{bambos2013b} has studied continuous-time stochastic team decision problems that are partly driven by Brownian motion, and has shown that the static reduction of the dynamic team problem can be carried out using the Girsanov transformation.

The existence of optimal strategies for Witsenhausen's counterexample was proved in Witsenhausen's original paper \cite{witcount}, where his proof relied on the structure of the cost function of the team. In that problem, the unique optimal strategy of the second agent involves the conditional mean of the control action of the first agent using an observation that is an additive noise corrupted version of the control action of the first agent, where the noise is a zero-mean unit variance Gaussian random variable. When substituted back, this makes the expected cost functional of the team a non-convex functional on the space of strategies of the first agent. Witsenhausen used several techniques from real analysis to show that an optimal strategy of the first agent exists when the initial state of the system is an arbitrary second-order random variable. A shorter proof of existence for this problem was later presented by Wu and Verd\'u \cite{wu2011} using tools from optimal transport theory. One variant of this 
problem is the Gaussian test channel; there are other variants as well, all with non-classical information \cite{basar2008}. For the Gaussian test channel, proof of existence of optimal strategies (and their derivations) is an indirect one. First, the cost function is lower bounded using the data processing inequality \cite{cover2006}, and then explicit linear strategies are constructed which make the cost achieve the lower bound. Proofs of existence of optimal strategies in some other teams with non-classical information using this method can be found in \cite{bansal1987}.

The main difference between the formulations of Witsenhausen's counterexample and the Gaussian test channel is in the cost functions (even though they are both quadratic) \cite{bansal1987,basar2008}. In the case of Witsenhausen's counterexample, the explicit forms of optimal strategies are not known, whereas in the case of the Gaussian test channel, linear strategies are optimal; more importantly, in both problems (where all random variables are jointly Gaussian), optimal solutions exist. If, however, the distributions of the primitive random variables are discrete with finite support, there is no optimal solution even in the class of behavioral strategies of the agents as illustrated in \cite[p. 90]{yukselbook}. Thus, the question of sufficient conditions for a team problem to admit an optimal solution is an important one, which is addressed in this paper.

Several authors have proven the existence of optimal solutions to stochastic optimization problems through ``lifting''. Specifically, the problem of optimizing an expected cost functional on the space of Borel measurable functions is lifted to an equivalent optimization problem in which the cost functional is defined on the space of probability measures. This technique is heavily used in proving the existence of optimal strategies in Markov decision processes; see for example, \cite{puterman1994, her1996,bert1978}, among several others. This is also the central concept for studying optimal transport problems \cite{amb2008, villani2009}, where the cost function is a measurable function of two random variables with given distributions, and the optimization is performed on the space of joint measures of the random variables given the marginal distributions. In this paper, we use a similar lifting technique to establish the existence of optimal strategies in a class of static and dynamic teams.

A variant of the problem of existence of optimal strategies in stochastic dynamic teams is that of the existence of optimal observation channels in such systems. The relevant question there is how to design observation channels (for example, quantizers) in a team problem so that the overall expected cost is minimized. This problem was studied in \cite{yuksel2012opt}, where some sufficient conditions were obtained on teams and sets of observation channels to ensure the existence of optimal quantizers. There again, the problem of designing quantizers was lifted to one of designing joint probability measures over the state and the observation of an agent satisfying certain constraints. In \cite{yuksel2012opt}, a topology of information channels was introduced based on a fixed input distribution. Related approaches can be found in \cite{borkar1993,borkar2005}.  In this paper, we further generalize these approaches.

\subsection{Outline of the paper}
We prove, under some sufficient conditions, the existence of optimal strategies in a class of sequential team problems. A general stochastic dynamic team problem is formulated in Section \ref{sec:dynamicproblem}. Thereafter, we study two related static team problems in Sections \ref{sec:staticteam} and \ref{sec:unbounded}. In Section \ref{sec:staticteam}, we show that if the cost function of the team is continuous and bounded, and action spaces are compact, then under mild conditions on the observation kernels, a team optimal solution exists. In Section \ref{sec:unbounded}, we extend the result to the case when the action spaces may not be compact and the cost function of the team is a non-negative continuous, possibly unbounded function. In Section \ref{sec:staticreduction}, we show, using the static reduction technique of Witsenhausen \cite{wit1988}, that a large class of dynamic team problems admit team-optimal solutions. We use this result to show, in Section \ref{sec:lqgteam}, that most LQG teams with 
no sharing of observation admit team-optimal solutions. In Section \ref{sec:examples}, we prove the existence of optimal strategies in several LQG team problems of broad interest. Finally, we present concluding remarks in Section \ref{sec:concteam}.

\subsection{Notation}
We introduce here some of the notation used throughout the paper. For a natural number $N$, we let $[N]$ denote the set $\{1,\ldots,N\}$. The set of all non-negative real numbers is denoted by $\Re^+$. If $\ALP X$ is a set and $\ALP A$ is a subset of $\ALP X$, then $\ALP A^\complement$ denotes the complement of the set $\ALP A$.

We use boldfaced letters $\VEC a,\VEC b,\ldots$ to denote generic elements in sets $\ALP A$, $\ALP B$ and so on. If the space $\ALP X$ is the real space, then we simply use $x$ to denote a generic element of $\ALP X$, with no boldface. Uppercase boldfaced letters, for example $\VEC X$, are used to denote random variables. A superscript denotes the index of an agent, while a subscript denotes the time step or an index of a sequence. For example, $\VEC U^i_t$ denotes the control action taken by Agent $i$ at time step $t$. If $\VEC a\in\Re^n$, then we use $\|\VEC a\|_{R}^2$ to denote $\VEC a^\transpose R \VEC a$, for a positive-definite matrix $R$.

Let $X$ be a non-empty set. For a set of elements $\{x_1,\ldots,x_n\}$ in $X$, we let $x_{1:n}$ denote this set. The set $\{x_1,\ldots,x_n\}\setminus\{x_i\}$ is denoted by $x_{-i}$. If $X_1,\ldots,X_n$ are non-empty sets, then $X_{1:n}$ and $X_{-i}$ are shorthand notations for the product sets $X_1\times\ldots\times X_n$ and $X_1\times\ldots\times X_{i-1}\times X_{i+1}\times\ldots\times X_n$, respectively. If we write $x_{1:n}\in X_{1:n}$, then it is construed as $x_1\in X_1$, $x_2\in X_2$, and so on. Similarly, $x^{1:n}$ denotes the set $\{x^1,\ldots,x^n\}$, and $x^{1:n}_{1:t}$ denotes $\{x^1_1,\ldots,x^n_1,\ldots,x^1_t,\ldots,x^n_t\}$, where each element is in the appropriate space.

Let $\ALP X$ be a topological space. The vector space of all bounded continuous functions on $\ALP X$ endowed with the supremum norm $\infnorm{\cdot}$ is denoted by $C_b(\ALP X)$, that is, $C_b(\ALP X):=\{f:\ALP X\rightarrow \Re: f \text{ is continuous and }\infnorm{f}<\infty\}$. The vector space $C(\ALP X)$ is the space of all continuous functions on the topological space $\ALP X$, which are possibly unbounded. Thus, $C_b(\ALP X)\subset C(\ALP X)$. If $\ALP X$ is a metric space, then the vector space $U_b(\ALP X)\subset C_b(\ALP X)$ denotes the set of all bounded uniformly continuous functions on $\ALP X$. The Borel $\sigma$-algebra on $\ALP X$ is denoted by $\FLD B(\ALP X)$. The spaces $\ca(\ALP X)$ and $\wp(\ALP X)$ denote, respectively, the vector space of countably additive signed measures on $\ALP X$ and the set of all probability measures on $\ALP X$ endowed with weak* topology. If $\mu^i\in\wp(\ALP A^i), i\in[N]$ are probability measures, then $d\mu^{1:N}$ denotes the product measure $\mu^1(d\VEC a^1)\cdots\mu^N(d\VEC a^N)$. We let $\ind{\cdot}$ denote the Dirac probability measure over the point $\{\cdot\}$.


We now define push-forward of a measure \cite[pp. 118]{amb2008}, which we use throughout the paper.

\begin{definition}[Push-Forward of a measure \cite{amb2008}]
Let $\ALP A$ and $\ALP B$ be two separable metric spaces, $\mu\in\wp(\ALP A)$, and $r:\ALP A\rightarrow\ALP B$ a Borel measurable map. Then, a push-forward of $\mu$ through $r$, denoted by $r_{\#}\mu\in\wp(\ALP B)$, is defined as $r_{\#}\mu(\SF B):=\mu(r^{-1}(\SF B))$ for all Borel sets $\SF B\subset\ALP B$.{\hfill$\Box$}
\end{definition}
\begin{remark}
Let $\pj^{\ALP  A}:\ALP A\times\ALP B\rightarrow\ALP A$ be the projection map. Let $\mu\in\wp(\ALP A\times\ALP B)$ be a joint measure. Then, for any Borel set $\SF A\subset\ALP A$, $\pj^{\ALP  A}_{\#}\mu(\SF A):=\mu(\SF A\times\ALP B)$ is the marginal of the joint measure $\mu$.{\hfill$\Box$}
\end{remark}

\section{Problem Formulation}\label{sec:dynamicproblem}
Within a state-space model, we consider an $N$-agent dynamic team problem with no observation sharing information structure. Let the state at time instant $t\in [T]$ be denoted by $\VEC X_t$, and the space of all possible states at time $t$ be denoted by $\ALP X_t$. The action of Agent $i$ at time $t$ lies in a space $\ALP U^i_t$, and its action is denoted by $\VEC U^i_t$. The agents observe the plant states and past control actions through noisy sensors (or channels), and the observation of Agent $i$ at time $t$ is denoted by $\VEC Y^i_t$, which lies in a space $\ALP Y^i_t$. Throughout the paper, the spaces $\ALP X_t,\ALP U^i_t,\ALP Y^i_t$ are assumed to be complete separable metric spaces (also called Polish spaces, such as $\Re^n$ or the space of measures over $\Re^n$) at all time steps $t\in[T]$ and for all agents $i\in[N]$.

The state of the system evolves according to
\beq{\label{eqn:xt1}\VEC X_{t+1} =\tilde f_t(\VEC X_{1:t},\VEC U^{1:N}_{1:t},\VEC W^0_t),\qquad t\in [T],}
where $\VEC W^0_t$ is the actuation noise on the system. We denote the realization space of all possible actuation noises by $\ALP W^0_t$. Agent $i$ at time $t\in[T]$ makes an observation which depends on past states and control actions according to
\beq{\label{eqn:yit}\VEC Y^i_t = \tilde h^i_t(\VEC X_{1:t},\VEC U^{1:N}_{1:t-1},\VEC W^i_t),}
where $\VEC W^i_t$, which takes values in the space $\ALP W^i_t$, is the observation noise of Agent $i$ at time $t$. We again assume that $\ALP W^i_t, i\in\{0\}\cup[N], t\in[T]$ are Polish spaces. We make the following assumption on the state transition functions and the observation functions of the agents.
\begin{assumption}\label{ass:stf}
The state transition functions $\tilde f_t$ and observation functions $\tilde h^i_t, i\in [N],$ are continuous functions of their arguments for all time steps $t\in [T]$. {\hfill$\Box$}
\end{assumption}

The random variables $\{\VEC X_1,\VEC W^{0:N}_{1:T}\}$ are primitive random variables, and are assumed to be mutually independent. We let $\xi_{\ALP X_1}$ denote the probability measure on $\ALP X_1$ and $\xi_{\ALP W^i_t}$ denote the probability measure on $\ALP W^i_t$ for $i\in \{0\}\cup [N]$ and $t\in [T]$.

\subsection{Information Structures and Strategies of the Agents}
At each instant of time, we assume that the only information each agent acquires is its own observation, that is, $\VEC I^i_t:=\VEC Y^i_t$. Each agent uses its information to determine its control action. Toward this end, we allow the agents to act in a predetermined fashion, and when they act they either choose a deterministic strategy, or a randomized or behavioral strategy. We define these two notions of strategies below:
\begin{definition}[Deterministic Strategy]
A {\it deterministic strategy} for an Agent $i$ at time $t$ is a Borel measurable map $\gamma^i_t:\ALP Y^i_t\rightarrow\ALP U^i_t$. Let $\ALP D^i_t$ be the space of all such maps, which we call the {\it deterministic strategy space} of Agent $i\in[N]$ at time $t\in[T]$.
\end{definition}

\begin{definition}[Behavioral Strategy]
A {\it behavioral strategy} of Agent $i$ at time $t$ is a conditional measure $\pi^i_t$ satisfying the following two properties:
\begin{enumerate}
\item For every $\VEC y^i_t\in\ALP Y^i_t$, $\pi^i_t(\cdot|\VEC y^i_t)\in\wp(\ALP U^i_t)$;
\item For every $\SF U\in\FLD B(\ALP U^i_t)$, $\VEC y^i_t\mapsto \pi^i_t(\SF U|\VEC y^i_t)$ is a $\FLD B(\ALP Y^i_t)$-measurable function.
\end{enumerate}
Let $\ALP R^i_t$ denote the behavioral strategy space of Agent $i\in[N]$ at time $t\in[T]$.{\hfill $\Box$}
\end{definition}

\begin{remark}
For an agent at any time step, any deterministic strategy is by definition also a behavioral strategy. For example, if $\gamma^i_t$ is a deterministic strategy of Agent $i$ at time $t$, then the corresponding (induced) behavioral strategy is $\pi^i_t(d\VEC u^i_t|\VEC y^i_t) = \ind{\gamma^i_t(\VEC y^i_t)}(d\VEC u^i_t)$. Thus, the set of behavioral strategies of an agent subsumes the set of deterministic strategies of that agent.{\hfill$\Box$}
\end{remark}

As a consequence of the remark above, throughout this paper, we will work with behavioral strategies of an agent with the understanding that this also covers deterministic strategies of that agent as well.

\subsection{Expected Cost Functional of the Team}
The team is equipped with a cost function $\tilde{c}$, which is assumed to be a non-negative continuous function of all states $\VEC X_{1:T+1}$, observations $\VEC Y^{1:N}_{1:T}$ and actions $\VEC U^{1:N}_{1:T}$ of the agents. However, we can substitute \eqref{eqn:xt1} recursively so that the cost function becomes purely a function of the primitive random variables $\{\VEC X_1,\VEC W^0_{1:T}\}$, observations $\VEC Y^{1:N}_{1:T}$ and the control actions of all agents, $\VEC U^{1:N}_{1:T}$. Therefore, for a fixed realization of the primitive random variables, observations and control actions of the agents, the cost incurred by the team can be written as $c(\VEC x_1,\VEC w^0_{1:T},\VEC y^{1:N}_{1:T},\VEC u^{1:N}_{1:T})$ for some $c$, which is clearly related to $\tilde c$. We have the following result on the cost function $c$.
\begin{lemma}
If Assumption \ref{ass:stf} holds, then the cost function $c$ is continuous.
\end{lemma}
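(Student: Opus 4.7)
The plan is to show that the recursive substitution implicit in the definition of $c$ preserves continuity, because each $\VEC X_{t}$ for $t\ge 2$ can be expressed as a continuous function of the primitive random variables and the control actions, and then $c$ is simply $\tilde c$ composed with these continuous maps together with the identity on $(\VEC y^{1:N}_{1:T},\VEC u^{1:N}_{1:T})$.

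More concretely, I would proceed by induction on $t\in\{1,\ldots,T+1\}$ to establish the existence of continuous maps
\beqq{F_t:\ALP X_1\times \ALP W^0_{1:t-1}\times \ALP U^{1:N}_{1:t-1}\rightarrow\ALP X_t}
such that $\VEC X_t=F_t(\VEC X_1,\VEC W^0_{1:t-1},\VEC U^{1:N}_{1:t-1})$ almost surely. The base case $t=1$ is trivial with $F_1$ being the identity on $\ALP X_1$. For the inductive step, assume $F_1,\ldots,F_t$ are continuous. The state recursion \eqref{eqn:xt1} gives
\beqq{\VEC X_{t+1}=\tilde f_t\bigl(F_1(\VEC X_1),\,F_2(\VEC X_1,\VEC W^0_1,\VEC U^{1:N}_1),\,\ldots,\,F_t(\cdot),\,\VEC U^{1:N}_{1:t},\,\VEC W^0_t\bigr),}
which defines $F_{t+1}$ as a composition of the continuous functions $F_1,\ldots,F_t$ (by the induction hypothesis) and $\tilde f_t$ (by Assumption \ref{ass:stf}), together with the continuous coordinate projections. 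Continuity of $F_{t+1}$ follows.

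Once $F_{1:T+1}$ are in hand, the definition of $c$ in the paragraph preceding the lemma amounts to
\beqq{c(\VEC x_1,\VEC w^0_{1:T},\VEC y^{1:N}_{1:T},\VEC u^{1:N}_{1:T})=\tilde c\bigl(F_1(\VEC x_1),\ldots,F_{T+1}(\VEC x_1,\VEC w^0_{1:T},\VEC u^{1:N}_{1:T}),\,\VEC y^{1:N}_{1:T},\,\VEC u^{1:N}_{1:T}\bigr).}
Because $\tilde c$ is continuous by hypothesis and each $F_t$ is continuous from the induction, $c$ is continuous as a composition of continuous maps between Polish spaces endowed with their product topologies.

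The only subtlety worth flagging is notational rather than mathematical: one must be careful to regard $(\VEC y^{1:N}_{1:T},\VEC u^{1:N}_{1:T})$ as free arguments of $c$ (not substituted via \eqref{eqn:yit}), so that the observation noises $\VEC W^{1:N}_{1:T}$ never enter $c$; this matches the statement's parametrization and is essential for the later static-reduction arguments. With that clarification the proof is a routine induction, and there is no serious obstacle.
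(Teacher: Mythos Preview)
Your proof is correct and follows essentially the same approach as the paper. The paper's argument is simply a terser version of yours: it writes $c$ as $\tilde c$ composed with the recursively substituted state maps and invokes continuity of $\tilde c$ and of the $\tilde f_t$'s, whereas you make the inductive construction of the continuous maps $F_t$ explicit.
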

\begin{proof}
Note that by construction, $c$ is generated by $\tilde c$ as 
\beqq{c(\VEC x_1,\VEC w^0_{1:T},\VEC y^{1:N}_{1:T},\VEC u^{1:N}_{1:T}) = \tilde c(\VEC x_1,\tilde f_1(\VEC x_1,\VEC u^{1:N}_1,\VEC w^0_1),\cdots,\tilde f_T(\VEC x_{1:T},\VEC u^{1:N}_{1:T},\VEC w^0_T), \VEC y^{1:N}_{1:T},\VEC u^{1:N}_{1:T}),}
where $\VEC x_t$ is substituted as a function of $\VEC x_1,\VEC w^0_{1:t-1}$ and $\VEC u^{1:N}_{1:t-1}$ using \eqref{eqn:xt1} for all $t\in[T]$. Since $\tilde c$ and $\{f_t\}_{t\in[T]}$ are continuous functions of their arguments, we conclude that $c$ is a continuous function on $\ALP X_1\times\ALP W^0_{1:T}\times\ALP Y^{1:N}_{1:T}\times\ALP U^{1:N}_{1:T}$.
\end{proof}

In standard optimal control problems, the cost function of the team is taken to be a sum of stage-wise cost functions, in which the cost function at every time step depends on the current state and actions of the agents. However, we do not assume such a structure on the cost function of the team problem considered in this paper. This general cost function encompasses ones that appear in certain classes of communication systems, economic systems, and feedback control over noisy channels.

Throughout this paper, we use $J:\ALP R^{1:N}_{1:T}\rightarrow\Re_+$ to denote the expected cost functional of the team, which is defined as
\beqq{J(\pi^{1:N}_{1:T}) = \ex{c(\VEC x_1,\VEC w^0_{1:T},\VEC y^{1:N}_{1:T},\VEC u^{1:N}_{1:T})},}
where the expectation is taken with respect to the measure induced on the random variables by the choice of behavioral strategies $\pi^{1:N}_{1:T}$. We make the following natural assumption on the team problem described above.
\begin{assumption}\label{ass:one}
There exists a set of behavioral strategies $\tilde{\pi}^{1:N}_{1:T}\in\ALP R^{1:N}_{1:T}$ of the agents, which results in finite expected cost to the team.{\hfill $\Box$}
\end{assumption}

\subsection{Solution Approach and the Proof Program}
Our proof of existence of optimal strategies in team problems formulated above follows the following steps:

\begin{enumerate}
\item We first show the existence of optimal strategies in a static team problem in Section \ref{sec:staticteam}, in which (i) the cost function of the team is  continuous and bounded function of its arguments, (ii) the action spaces of the agents are compact, and (iii) the observation channels of the agents satisfy a technical assumption. We refer to this static team problem as Team {\bf ST1}. We establish a tightness result on the joint measures over state, observation and action spaces of each agent. For any sequence of joint measures induced by behavioral strategies of the agents that achieves expected costs converging to the infimum of the expected cost of the team, we show that there exists a convergent subsequence of joint measures, which are induced by a set of behavioral strategies, whose limit achieves the infimum of the expected cost functional of the team.

\item Next, we show the existence of optimal strategies in a static team problem in which (i) the action spaces of the agents are non-compact and (ii) cost function of the team is continuous and has a coercive structure. We refer to this static team problem as Team {\bf ST2}. This result is established in Section \ref{sec:unbounded} using the results of Section \ref{sec:staticteam}.

\item Subsequently, we use Witsenhausen's static reduction technique to reduce a sequential dynamic team into a (reduced) static team with independent observations of the agents. We assume a certain structure on the cost function of the dynamic team. A challenge with this approach is that the cost function of the reduced static team problem may not satisfy a coercivity condition (introduced later) even though the cost function of the dynamic team satisfies that coercivity condition. To alleviate this problem, we provide a novel approach by restricting the search for optimal behavioral strategies to a compact set. The dynamic team problem is then solved under some mild assumptions that are delineated in Section \ref{sec:staticreduction}.
\end{enumerate}

In the next section, we establish the existence of team-optimal solutions to the first static team problem mentioned above.

\section{Existence of Optimal Solution in {\bf ST1}}\label{sec:staticteam}
In this section, we study the $N$-agent static team problem in which each Agent $i$ observes a random variable $\VEC Y^i$, correlated with the random variable $\VEC X$, and takes an action $\VEC U^i$. We let $\ALP X$ denote the state space of the team, and $\ALP Y^i$ and $\ALP U^i$ denote, respectively, the observation space and action space of Agent $i$. We assume that $\ALP U^i$ is a compact subset of a Polish space for all $i\in[N]$.

The team incurs a cost $c$, which is a non-negative continuous function of the state, observations and the control actions of all the agents, that is, $c:\ALP X\times\ALP Y^{1:N}\times\ALP U^{1:N}\rightarrow\Re^+$. The expected cost functional of the team, denoted by $J:\ALP R^{1:N}\rightarrow\Re^+$, as a function of behavioral strategies of the agents, is
\beqq{J(\pi^{1:N}) = \int_{\ALP X\times \ALP U^{1:N}}\int_{\ALP Y^{1:N}} c(\VEC x,\VEC y^{1:N},\VEC u^{1:N}) \prod_{i=1}^N\pi^i(d\VEC u^i|\VEC y^i) \pr{d\VEC x,d\VEC y^1,\ldots,d\VEC y^N}.}

We show that, under certain conditions, there exists a $\pi^{1:N\star}\in\ALP R^{1:N}$ such that
\beqq{J(\pi^{1:N\star}) = \inf_{\pi^{1:N}\in\ALP R^{1:N}} J(\pi^{1:N}).}

We first provide an outline of our approach to showing the existence of optimal strategies in the static team problem. Consider a sequence $\{\pi^{1:N}_n\}_{n\in\Na}\subset\ALP R^{1:N}$ of control strategies of the agents such that $\lf{n}J(\pi^{1:N}_n)  = \inf_{\pi^{1:N}\in\ALP R^{1:N}} J(\pi^{1:N})$. There are three issues that need to be resolved: The first issue is that the sequence of joint measures $\Big\{\prod_{i=1}^N\pi^i_n(d\VEC u^i|\VEC y^i) \pr{d\VEC x,d\VEC y^1,\ldots,d\VEC y^N}\Big\}_{n\in\Na}$ may not be a weak* convergent sequence. This can be remedied by considering a convergent subsequence of $\Big\{\prod_{i=1}^N\pi^i_n(d\VEC u^i|\VEC y^i) \pr{d\VEC x,d\VEC y^1,\ldots,d\VEC y^N}\Big\}_{n\in\Na}$. The second problem is to ensure that the limit of the convergent subsequence satisfies the informational constraint. This means that the conditional measure on the action space of Agent $i$ given the observation of that agent and the state of the limiting measure must be independent of the state for 
any $i\in[N]$. The third problem is that if for all $i\in[N]$, $\{\pi^i_n(d\VEC u^i|\VEC y^i)\pr{d\VEC x,d\VEC y^i}\}_{n\in\Na}$ converges in the weak* sense to a measure $\pi^i_0(d\VEC u^i|\VEC y^i)\pr{d\VEC x,d\VEC y^i}$ for some $\pi^i_0\in\ALP R^i$,  then the expected cost functional $J$ may not satisfy $\lf{n}J(\pi^{1:N}_n) = J(\pi^{1:N}_0)$. We overcome all these three challenges by employing the following steps:

\begin{enumerate}
\item We show that for any $g\in U_b(\ALP X\times\ALP Y^{1:N}\times\ALP U^{1:N})$, $\left\{\int g\:\pi^i_n(d\VEC u^i|\VEC y^i)\pr{d\VEC y^i|\VEC x}\right\}_{n\in\Na}$ is a uniformly equicontinuous and bounded sequence of functions  under some assumptions on the conditional measure $\pr{d\VEC y^i|\VEC x}$.
\item In order to satisfy the informational constraint of the limiting measure of any convergent subsequence of the sequence $\{\pi^i_n(d\VEC u^i|\VEC y^i)\pr{d\VEC x,d\VEC y^i}\}_{n\in\Na}$, we assume a specific structure on the conditional probability measure $\pr{d\VEC x|\VEC y^i}$.
\item We extract a weak* convergent subsequence
\beqq{\left\{\prod_{i=1}^N\pi^{i}_{n_k}(d\VEC u^i|\VEC y^i)\pr{d\VEC x,d\VEC y^1,\ldots,d\VEC y^N}\right\}_{k\in\Na}}
of the sequence of measures such that $\lf{k}J(\pi^{1:N}_{n_k})= J(\pi^{1:N}_0)$.
\item Once we show that there exists a set of behavioral strategies of the agents that achieves the minimum expected cost, we use Blackwell's irrelevant information theorem \cite[p. 457]{yukselbook} to conclude that there exists a set of deterministic strategies of the agents that achieves the minimum expected cost.
\end{enumerate}

In order to prove existence, we require several auxiliary results that are proved in the next subsection. The purpose of the auxiliary results is to establish results 1 and 2 above in a somewhat general setting. The existence of optimal strategies is proven in Subsection \ref{sub:existence}. We also show in that subsection the existence of optimal strategies in static teams when the state is degenerate and the observations of the agents are mutually independent random variables.

\begin{remark}
Without the further regularity conditions to be presented shortly, the approach above may not be sufficient to lead to the desired existence result for teams in which the information of the agents given the state are not conditionally independent. This is the case, for example, when observations are shared by the agents in a team. In Appendix \ref{app:counterexample}, a counterexample of a two-agent static team is given, in which one agent shares its observation with another agent. We show that even if sequences of the joint measures of each agent's action and information (which may include another agent's observation too) converge in the weak* sense, the corresponding sequence of joint measures over the product of all action, observation, and state spaces need not converge. This counterexample shows that additional regularity conditions on the cost function, observation channels, underlying distributions of primitive random variables, and topologies on measure spaces are needed to establish the existence of 
optimal strategies in general static and dynamic teams where observations may be shared.
{\hfill $\Box$}
\end{remark}

\subsection{Auxiliary Results}
In this subsection, we state a few lemmas that are needed to prove the existence of optimal strategies in the static team problem formulated above. Unless otherwise stated, $\ALP A$, $\ALP B^i$, $\ALP C$ and $\ALP Y^i, \: i\in[N]$ denote Polish spaces, with generic elements in these spaces denoted, respectively, by $\VEC a$, $\VEC b^i$, $\VEC c$ and $\VEC y^i$. We now introduce a condition on the conditional probability measures, which will be important in proving the auxiliary results.


\begin{definition}[Condition {\bf C1}]\label{def:obskernel}
Let $\VEC A$ and $\VEC Y$ be random variables such that $\pr{d\VEC y|\VEC a} = \eta(\VEC a,\VEC y)\nu(d\VEC y)$ for some non-negative measure $\nu\in\ca(\ALP Y)$. We say that the pair $(\eta,\nu)$ satisfies condition {\bf C1} if and only if
\begin{enumerate}
\item $\eta$ is a continuous function of its arguments, that is, $\eta\in C(\ALP A\times\ALP Y)$; and
\item there exists a bounded measurable function $h_{(\eta,\nu)}:\ALP A\times\ALP Y\rightarrow\Re^+$ satisfying $\sup_{\VEC a\in\ALP A} \int h_{(\eta,\nu)}d\nu <\infty$, such that for every $\epsilon>0$, there exists a $\delta>0$ such that for any $\VEC a_0\in\ALP A$ and for all $\VEC a\in\ALP A$ satisfying $d_{\ALP A}(\VEC a,\VEC a_0)<\delta$, we have
\beqq{|\eta(\VEC a,\VEC y)-\eta(\VEC a_0,\VEC y)|<\epsilon \:h_{(\eta,\nu)}(\VEC a_0,\VEC y).}
\end{enumerate}
We call $h_{(\eta,\nu)}$ the variation control (VC) function of the pair $(\eta,\nu)$. {\hfill $\Box$}
\end{definition}


If the observation is an additive noise corrupted version of the state, with the noise being Gaussian, then the above condition holds. This is noted in the following example.

\begin{example}
Let $\ALP A=\ALP Y=\Re^n$. A sufficient condition for a pair $(\eta,\nu)$ to satisfy condition {\bf C1} is $\VEC Y = \VEC A+\VEC W$ for some Gaussian random vector $\VEC W$ with density function $N(\cdot)$ and a positive definite covariance. In this case, $\nu$ is the usual Lebesgue measure on $\Re^n$, $\eta(\VEC a,\VEC y) = N(\VEC y-\VEC a)$, and the VC function for the pair $(\eta,\nu)$ is
\[h_{(\eta,\nu)}(\VEC a_0,\VEC y):=\max_{\VEC a\in \ball{\VEC a_0,1}}\bigg\|\frac{d\eta}{d\VEC a}\bigg\|_2.\]
Since $\frac{d\eta}{d\VEC a}$ decays exponentially as $\|\VEC y\|_2\rightarrow\infty$, $h_{(\eta,\nu)}$ has the property that $\int h_{(\eta,\nu)}d\nu <\infty$ and it is a constant function over $\ALP A$. {\hfill $\Box$}
\end{example}


Now, we make use of the uniform continuity of a function and Condition {\bf C1} on the conditional measure to prove the following result.

\begin{lemma}\label{lem:unifequi}
Let $\nu\in\ca(\ALP Y)$ be a non-negative measure and $\mu\in\wp(\ALP Y\times\ALP B)$. Let $\VEC A$, $\VEC B$, $\VEC C$ and $\VEC Y$ be random variables such that $\pr{d\VEC y|\VEC a} = \eta(\VEC a,\VEC y)\nu(d\VEC y)$, where the pair $(\eta,\nu)$ satisfies condition {\bf C1} with VC function $h_{(\eta,\nu)}$. Let $g:\ALP A\times\ALP B\times\ALP C\rightarrow\Re$ be a uniformly continuous and bounded function. Define the map $f:\ALP A\times\ALP C\rightarrow\Re$ by
\beqq{f_\mu(\VEC a,\VEC c) = \int_{\ALP B} g(\VEC a,\VEC b,\VEC c) \mu(d\VEC b|\VEC y)\eta(\VEC a,\VEC y) \nu(d\VEC y).}
Then, $\{f_\mu(\cdot,\cdot)\}_{\mu\in\wp(\ALP Y\times\ALP B)}$ is a uniformly equicontinuous and bounded map on its domain.
\end{lemma}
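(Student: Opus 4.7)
The plan is to establish boundedness and uniform equicontinuity separately, treating $\VEC a$ and $\VEC c$ on the same footing and using the two hypotheses (uniform continuity of $g$; condition \textbf{C1} on $(\eta,\nu)$) for two distinct pieces of the increment $f_\mu(\VEC a,\VEC c)-f_\mu(\VEC a_0,\VEC c_0)$. Throughout, I will read the integrand as being over $\ALP Y\times\ALP B$, with $\mu(d\VEC b|\VEC y)\nu(d\VEC y)$ disintegrating a measure on $\ALP Y\times\ALP B$ whose $\ALP Y$-marginal is absolutely continuous with respect to $\nu$.

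For boundedness, since $g$ is bounded, say $\|g\|_\infty\le M$, and $\mu(\cdot|\VEC y)$ is a probability measure while $\int_{\ALP Y}\eta(\VEC a,\VEC y)\nu(d\VEC y)=1$ (because $\eta(\VEC a,\cdot)$ is the density of the conditional law $\pr{d\VEC y|\VEC a}$), a straightforward application of Fubini gives $|f_\mu(\VEC a,\VEC c)|\le M$ uniformly in $\mu$, $\VEC a$, and $\VEC c$.

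For uniform equicontinuity, I add and subtract $g(\VEC a_0,\VEC b,\VEC c_0)\,\mu(d\VEC b|\VEC y)\,\eta(\VEC a,\VEC y)\,\nu(d\VEC y)$ (or symmetrically) to write
\[
 f_\mu(\VEC a,\VEC c)-f_\mu(\VEC a_0,\VEC c_0)=\mathrm{I}+\mathrm{II},
\]
where
\[
 \mathrm{I}=\int\!\!\int [g(\VEC a,\VEC b,\VEC c)-g(\VEC a_0,\VEC b,\VEC c_0)]\,\mu(d\VEC b|\VEC y)\,\eta(\VEC a,\VEC y)\,\nu(d\VEC y),
\]
\[
 \mathrm{II}=\int\!\!\int g(\VEC a_0,\VEC b,\VEC c_0)\,\mu(d\VEC b|\VEC y)\,[\eta(\VEC a,\VEC y)-\eta(\VEC a_0,\VEC y)]\,\nu(d\VEC y).
\]
Given $\epsilon>0$, uniform continuity of $g$ supplies $\delta_1>0$ (independent of $\VEC b$ and of the base point) such that $d_{\ALP A\times\ALP C}((\VEC a,\VEC c),(\VEC a_0,\VEC c_0))<\delta_1$ forces $|\mathrm{I}|\le\epsilon$, again using $\int\mu(d\VEC b|\VEC y)\eta(\VEC a,\VEC y)\nu(d\VEC y)=1$. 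For $\mathrm{II}$, condition \textbf{C1} yields $\delta_2>0$ (depending on $\epsilon$ but not on $\VEC a_0$) such that $d_{\ALP A}(\VEC a,\VEC a_0)<\delta_2$ gives the pointwise bound $|\eta(\VEC a,\VEC y)-\eta(\VEC a_0,\VEC y)|<\epsilon\,h_{(\eta,\nu)}(\VEC a_0,\VEC y)$, hence
\[
 |\mathrm{II}|\le M\epsilon\int h_{(\eta,\nu)}(\VEC a_0,\VEC y)\,\nu(d\VEC y)\le M\epsilon\,H,
\]
with $H:=\sup_{\VEC a_0\in\ALP A}\int h_{(\eta,\nu)}(\VEC a_0,\VEC y)\,\nu(d\VEC y)<\infty$ by the second bullet of \textbf{C1}. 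Taking $\delta=\min\{\delta_1,\delta_2\}$ and absorbing $1+MH$ into the choice of $\epsilon$ produces a modulus of continuity depending only on $g$, $\eta$, $\nu$, and not on $\mu$ or on the base point $(\VEC a_0,\VEC c_0)$.

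The only subtle point, and the main thing to be careful about, is that uniform equicontinuity requires $\delta$ to be independent both of $\mu\in\wp(\ALP Y\times\ALP B)$ and of $(\VEC a_0,\VEC c_0)\in\ALP A\times\ALP C$. Independence in $\mu$ is automatic because all $\mu$-integrals collapse via Fubini to mass-one factors; independence in the base point is exactly what the uniform continuity of $g$ and the second clause of \textbf{C1} (the \emph{supremum} over $\VEC a_0$ of the $\nu$-integral of the VC function) are designed to deliver. No further ingredient is needed.
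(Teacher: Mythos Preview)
Your proof is correct and follows essentially the same approach as the paper's: the paper also splits the increment into a piece controlled by the uniform continuity of $g$ and a piece controlled by condition \textbf{C1}, reaching the bound $(\infnorm{g}M+2)\epsilon$ with $M=\sup_{\VEC a}\int h_{(\eta,\nu)}\,d\nu$. The only cosmetic difference is which of the two symmetric add-and-subtract decompositions is used---the paper writes $g(\VEC a,\cdot)[\eta(\VEC a,\cdot)-\eta(\VEC a_0,\cdot)]+[g(\VEC a,\cdot)-g(\VEC a_0,\cdot)]\eta(\VEC a_0,\cdot)$ while you write $[g(\VEC a,\cdot)-g(\VEC a_0,\cdot)]\eta(\VEC a,\cdot)+g(\VEC a_0,\cdot)[\eta(\VEC a,\cdot)-\eta(\VEC a_0,\cdot)]$---and the resulting estimates are identical up to relabelling.
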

\begin{proof}
See Appendix \ref{app:unifequi}.
\end{proof}

We now have a corollary to this result.

\begin{corollary}\label{cor:fconti}
Under the same assumptions and notation as in Lemma \ref{lem:unifequi}, for every $\mu\in\wp(\ALP Y\times\ALP B)$, $f_{\mu}$ is a uniformly continuous function on its domain.
\end{corollary}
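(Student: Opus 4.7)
The plan is to derive the corollary directly from Lemma \ref{lem:unifequi} without any additional calculation. Recall that Lemma \ref{lem:unifequi} asserts uniform equicontinuity of the entire family $\{f_\mu\}_{\mu\in\wp(\ALP Y\times\ALP B)}$; unpacked, this means that for every $\epsilon>0$ there exists a single $\delta>0$, depending on $\epsilon$ alone, such that for every $\mu\in\wp(\ALP Y\times\ALP B)$ and every pair $(\VEC a,\VEC c),(\VEC a_0,\VEC c_0)\in\ALP A\times\ALP C$ with $d_{\ALP A\times\ALP C}((\VEC a,\VEC c),(\VEC a_0,\VEC c_0))<\delta$, one has $|f_\mu(\VEC a,\VEC c)-f_\mu(\VEC a_0,\VEC c_0)|<\epsilon$.

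The one-step argument I would write is: fix any particular $\mu\in\wp(\ALP Y\times\ALP B)$ and read the same inequality just for that $\mu$. What remains is precisely the definition of uniform continuity of $f_\mu$ on $\ALP A\times\ALP C$, with the common modulus $\delta(\epsilon)$ supplied by the lemma serving as a valid modulus for that chosen $\mu$. This is where I would emphasize that the ``uniform'' in the equicontinuity hypothesis is essential: if Lemma \ref{lem:unifequi} established only pointwise equicontinuity (with $\delta$ allowed to depend on the base point $(\VEC a_0,\VEC c_0)$), the conclusion would be mere continuity, not uniform continuity, of each $f_\mu$.

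Because the corollary is immediate, there is no genuine obstacle to overcome; the work has already been done inside Lemma \ref{lem:unifequi}. The reason I would record this statement separately is methodological: in the subsequent weak* compactness arguments of Subsection \ref{sub:existence}, one frequently needs to invoke continuity-type estimates for a specific behavioral strategy $\pi^i\in\ALP R^i$ (rather than for the whole family simultaneously), and Corollary \ref{cor:fconti} is the convenient pointwise-in-$\mu$ form of the lemma that gets used at that moment.
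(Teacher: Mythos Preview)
Your proposal is correct and matches the paper's treatment exactly: the paper states Corollary~\ref{cor:fconti} without proof, treating it as an immediate consequence of the uniform equicontinuity established in Lemma~\ref{lem:unifequi}. Your observation that fixing a single $\mu$ in the family-wide $\epsilon$--$\delta$ statement yields uniform continuity of that $f_\mu$ is precisely the intended (trivial) deduction.
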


We now state another important lemma, whose proof is similar to the proof of Lemma \ref{lem:unifequi}.
\begin{lemma}\label{lem:unifequi2}
Let $\nu^i\in\ca(\ALP Y^i)$ be a non-negative measure and $\mu^i\in\wp(\ALP Y^i\times\ALP B^i)$, $i\in[N]$. Let $\VEC A$, $\VEC B^i$, $\VEC C$ and $\VEC Y^i$ be random variables for $i\in[N]$ such that $\pr{d\VEC y^i|\VEC a} = \eta^i(\VEC a,\VEC y^i)\nu^i(d\VEC y^i)$, where each pair $(\eta^i,\nu^i)$ satisfies condition {\bf C1} with VC function $h^i$. Further, assume that random variables $\VEC Y^1,\ldots,\VEC Y^N$ are independent given $\VEC a$. Let $g\in U_b(\ALP A\times\ALP B^{1:N}\times\ALP C)$. Define the map $f:\ALP A\times\ALP C\rightarrow\Re$ by
\beqq{f_{\mu^{1:N}}(\VEC a,\VEC c) = \int_{\ALP B^{1:N}} g(\VEC a,\VEC b^{1:N},\VEC c) \prod_{i=1}^N\mu^i(d\VEC b^i|\VEC y^i)\eta^i(\VEC a,\VEC y^i) \nu^i(d\VEC y^i).}
Then, $\{f_{\mu^{1:N}}(\cdot,\cdot)\}_{\mu^i\in\wp(\ALP Y^i\times\ALP B^i)}$ is a uniformly equicontinuous and bounded map on its domain.
\end{lemma}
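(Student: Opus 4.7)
My plan is to mirror the single-kernel argument of Lemma \ref{lem:unifequi}, treating the product $\prod_{i=1}^N \eta^i(\VEC a,\VEC y^i)\nu^i(d\VEC y^i)$ (which, by the conditional-independence hypothesis, is exactly the joint law of $\VEC Y^{1:N}$ given $\VEC a$) as the analogue of the single factor $\eta(\VEC a,\VEC y)\nu(d\VEC y)$ used there. The only new ingredient is a telescoping identity across the $N$ factors that lets me transfer the pointwise bound supplied by condition {\bf C1} one factor at a time. Uniform boundedness is immediate: for each fixed $\VEC a$, the product measure in the definition of $f_{\mu^{1:N}}$ has total mass one (each $\mu^i(d\VEC b^i|\VEC y^i)$ integrates to $1$ in $\VEC b^i$, and each $\eta^i(\VEC a,\cdot)\nu^i$ is a probability measure), so $|f_{\mu^{1:N}}|\le \|g\|_\infty$ uniformly in $(\VEC a,\VEC c)$ and in $\mu^{1:N}$.

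For equicontinuity, I fix $(\VEC a_0,\VEC c_0)$ and write $f_{\mu^{1:N}}(\VEC a,\VEC c) - f_{\mu^{1:N}}(\VEC a_0,\VEC c_0)$ as a sum of two differences by adding and subtracting $\int g(\VEC a_0,\VEC b^{1:N},\VEC c_0)\prod_i \mu^i(d\VEC b^i|\VEC y^i)\eta^i(\VEC a,\VEC y^i)\nu^i(d\VEC y^i)$. The first difference only varies the integrand $g$ against a probability measure, so it is bounded by $\omega_g\!\left(d_{\ALP A}(\VEC a,\VEC a_0)+d_{\ALP C}(\VEC c,\VEC c_0)\right)$, where $\omega_g$ is the modulus of uniform continuity of $g$; this bound is clearly uniform in $\mu^{1:N}$. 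In the second difference, the factor $g$ is frozen at $(\VEC a_0,\VEC c_0)$ and the only source of variation is the product $\prod_i \eta^i(\VEC a,\VEC y^i) - \prod_i \eta^i(\VEC a_0,\VEC y^i)$. To this I apply the telescoping identity
\beqq{\prod_{i=1}^N \eta^i(\VEC a,\VEC y^i) - \prod_{i=1}^N \eta^i(\VEC a_0,\VEC y^i) = \sum_{k=1}^N \Bigl(\prod_{i<k}\eta^i(\VEC a,\VEC y^i)\Bigr)\bigl[\eta^k(\VEC a,\VEC y^k)-\eta^k(\VEC a_0,\VEC y^k)\bigr]\Bigl(\prod_{i>k}\eta^i(\VEC a_0,\VEC y^i)\Bigr),}
bound $|g|$ by $\|g\|_\infty$, integrate out each $\mu^i(d\VEC b^i|\VEC y^i)$ (unit mass), observe that the side-products $\prod_{i<k}\eta^i(\VEC a,\VEC y^i)\nu^i(d\VEC y^i)$ and $\prod_{i>k}\eta^i(\VEC a_0,\VEC y^i)\nu^i(d\VEC y^i)$ are themselves probability measures and integrate to one, and finally apply condition {\bf C1} to the $k$-th factor. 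For each $\epsilon>0$ and each $k$, condition {\bf C1} yields a $\delta_k(\epsilon)>0$ such that $d_{\ALP A}(\VEC a,\VEC a_0)<\delta_k(\epsilon)$ implies $|\eta^k(\VEC a,\VEC y^k)-\eta^k(\VEC a_0,\VEC y^k)|<\epsilon\, h^k(\VEC a_0,\VEC y^k)$, so the $k$-th summand is bounded by $\epsilon\,\|g\|_\infty\,\sup_{\VEC a_0\in\ALP A}\int h^k(\VEC a_0,\VEC y^k)\nu^k(d\VEC y^k)$, a finite quantity by {\bf C1}. Summing over $k\in[N]$ and taking $\delta:=\min_k \delta_k(\epsilon)$ produces a bound on the second difference that is uniform in $(\VEC a_0,\VEC c_0)$ and in $\mu^{1:N}$.

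The argument has essentially no analytic obstacle beyond what is already handled in Lemma \ref{lem:unifequi}; the main point requiring care is the bookkeeping in two places. First, the telescoping must be applied \emph{before} disintegration so that each $i\neq k$ factor remains a probability measure (in $\VEC a$ for $i<k$, in $\VEC a_0$ for $i>k$) and the conditional-independence hypothesis is precisely what allows this factored form to appear. Second, the $\delta$ controlling $|\VEC a-\VEC a_0|$ must be chosen to work simultaneously for all $N$ applications of {\bf C1}, which is why one takes the minimum over $k$. Everything else is a routine transcription of the single-kernel argument, and the proof can be deferred to an appendix parallel to that of Lemma \ref{lem:unifequi}.
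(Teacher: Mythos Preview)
Your proposal is correct and follows essentially the same route as the paper's proof: split $f_{\mu^{1:N}}(\VEC a,\VEC c)-f_{\mu^{1:N}}(\VEC a_0,\VEC c_0)$ into a $g$-variation term controlled by uniform continuity against a probability measure, plus a kernel-variation term handled by telescoping $\prod_i \eta^i(\VEC a,\cdot)-\prod_i \eta^i(\VEC a_0,\cdot)$ and applying condition {\bf C1} factor by factor while the remaining factors integrate to one. The only cosmetic differences are the direction of the telescoping (you put the $\VEC a$-factors before index $k$ and the $\VEC a_0$-factors after, whereas the paper reverses this) and which point anchors the probability measure in the $g$-variation term; neither affects the argument.
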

\begin{proof}
See Appendix \ref{app:corunifequi}.
\end{proof}

The result of Lemma \ref{lem:unifequi2} allows us to apply Arzela-Ascoli Theorem \cite{ali2006} on compact subsets of the domain to obtain a convergent subsequence that converges to some bounded continuous function pointwise (not in sup norm). We then need the following result.

\begin{lemma}\label{lem:lambdan}
Under the same assumptions and notation as in Lemmas \ref{lem:unifequi} and \ref{lem:unifequi2}, let $\{\mu^i_n\}_{n\in\Na}\subset\wp(\ALP Y^i\times\ALP B^i)$ be an arbitrary sequence of measures for $i\in[N]$. For every $n\in\Na$, define $f_n:= f_{\mu^{1:N}_n}$. Further, assume that $\ALP A,\ALP C,\ALP Y^i,\ALP B^i,i\in[N]$ are all $\sigma$-compact Polish spaces. If $\{\zeta_n\}_{n\in\Na}\subset\wp(\ALP A\times\ALP C)$ is a weak* convergent sequence of measures converging to $\zeta_0$, then there exists a subsequence $\{n_k\}_{k\in\Na}$ such that
\beqq{\lf{k}\bigg|\int_{\ALP A\times\ALP C} f_{n_k}d\zeta_{n_k} - \int_{\ALP A\times\ALP C} f_{n_k}d\zeta_{0}\bigg| = 0 .}
\end{lemma}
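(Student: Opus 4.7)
The plan is to combine three ingredients: (i) Lemma \ref{lem:unifequi2}, which gives uniform equicontinuity and uniform boundedness of $\{f_n\}$ independently of the sequence $\{\mu^{1:N}_n\}$; (ii) Arzel\`a--Ascoli together with the $\sigma$-compactness of $\ALP A\times\ALP C$, used in a diagonal fashion to extract a subsequence $f_{n_k}$ that converges uniformly on compact sets to some continuous bounded limit $f_\infty$; and (iii) Prokhorov's theorem applied to the weakly convergent sequence $\{\zeta_n\}$, which yields tightness and allows one to control the integrals off compact sets.

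\textbf{Step 1 (extracting a locally uniform limit).} By Lemma \ref{lem:unifequi2} the family $\{f_n\}_{n\in\Na}$ is uniformly equicontinuous and uniformly bounded, say by $\|g\|_\infty$. Since $\ALP A\times\ALP C$ is $\sigma$-compact, write $\ALP A\times\ALP C=\bigcup_{j\in\Na}K_j$ with $K_j$ compact and $K_j\subset K_{j+1}$. Apply Arzel\`a--Ascoli on $K_1$ to get a subsequence of $\{f_n\}$ converging uniformly on $K_1$; then extract a further subsequence converging uniformly on $K_2$, and so on. The diagonal subsequence, which I still call $\{f_{n_k}\}$, converges uniformly on every $K_j$ to a function $f_\infty$. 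Since the convergence is uniform on compact sets and the $f_{n_k}$ are continuous and uniformly bounded, $f_\infty\in C_b(\ALP A\times\ALP C)$ with $\|f_\infty\|_\infty\le\|g\|_\infty$.

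\textbf{Step 2 (splitting the difference).} For this subsequence, decompose
\beqq{\int f_{n_k}d\zeta_{n_k}-\int f_{n_k}d\zeta_0 = \int (f_{n_k}-f_\infty)d\zeta_{n_k} \;+\; \int f_\infty d\zeta_{n_k}-\int f_\infty d\zeta_0 \;+\; \int (f_\infty-f_{n_k})d\zeta_0.}
The middle term tends to $0$ because $f_\infty\in C_b(\ALP A\times\ALP C)$ and $\zeta_{n_k}\ws\zeta_0$. The first and third terms are handled by the same argument, which I now sketch.

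\textbf{Step 3 (using tightness).} Fix $\epsilon>0$. Since $\{\zeta_n\}_{n\in\Na}$ is weakly convergent on a Polish space, Prokhorov's theorem gives tightness: there exists a compact set $K\subset\ALP A\times\ALP C$ such that $\zeta_n(K^\complement)<\epsilon$ for all $n\in\Na\cup\{0\}$. Enlarging $K$, I may assume $K\subset K_j$ for some $j$, so that $f_{n_k}\to f_\infty$ uniformly on $K$. Hence for $k$ large,
\beqq{\bigg|\int (f_{n_k}-f_\infty)d\zeta_{n_k}\bigg|\le \sup_{K}|f_{n_k}-f_\infty|\cdot 1 + 2\|g\|_\infty\cdot \zeta_{n_k}(K^\complement) \le \epsilon + 2\|g\|_\infty\epsilon,}
and the same bound holds with $\zeta_0$ in place of $\zeta_{n_k}$. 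Letting $\epsilon\downarrow 0$ after sending $k\to\infty$ yields the claim.

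The main obstacle is Step 1: guaranteeing that the diagonal subsequence converges to a continuous function on the whole $\sigma$-compact space. This is where the uniform equicontinuity from Lemma \ref{lem:unifequi2}, which is \emph{independent} of the particular $\mu^{1:N}_n$, is essential. Steps 2 and 3 are then a standard weak-convergence/tightness argument, with the only subtlety being that we must absorb both the ``moving integrand'' $f_{n_k}$ and the ``moving measure'' $\zeta_{n_k}$ simultaneously, which is precisely why we use the intermediate continuous bounded function $f_\infty$ as a bridge.
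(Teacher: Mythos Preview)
Your proposal is correct and follows essentially the same approach as the paper: extract via Arzel\`a--Ascoli and diagonalization a subsequence $f_{n_k}$ converging uniformly on compacta to some $f_\infty\in C_b$, then combine this local uniform convergence with the tightness of the weakly convergent sequence $\{\zeta_n\}$ to push the integrals through. The paper packages your Steps~2--3 into a standalone lemma (showing that if $h_n\to h_0$ uniformly on compacta with a uniform bound and $\mu_n\ws\mu_0$, then $\big|\int h_n\,d\mu_n-\int h_n\,d\mu_0\big|\to 0$), whereas you carry out the same tightness-plus-uniform-convergence estimate inline via a three-term split with $f_\infty$ as a bridge; the content is identical.
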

\begin{proof}
See Appendix \ref{app:lambdan}.
\end{proof}

We have thus stated (and proved) all the major auxiliary results that are needed to establish the existence of optimal strategies. We next prove an additional result, which states that under some sufficient condition, if we take a weak* convergent sequence of measures satisfying a conditional independence property, then the limit also satisfies the conditional independence property. This result is useful to show that the weak* convergent sequence of joint measures over the state, observation and action of Agent $i$ does not converge to a limit in which the control action depends on both the state and the observation.

\begin{lemma}\label{lem:control2}
Let $\{\mu_n\}_{n\in\Na}\subset\wp(\ALP A\times\ALP B\times\ALP C)$ be a convergent sequence of measures such that $\mu_n(d\VEC a,d\VEC b,d\VEC c) = \mu_n(d\VEC c|\VEC b)\zeta(d\VEC a,d\VEC b)$, where $\zeta\in \wp(\ALP A\times\ALP B)$ with the property that $\zeta(d\VEC a|\VEC b) = \rho(\VEC a,\VEC b)\nu(d\VEC a)$ for some $\rho\in C_b(\ALP A\times\ALP B)$ and non-negative measure $\nu$ on $\ALP A$. Assume that $(\rho,\nu)$ satisfies Condition {\bf C1}. If $\mu_n\ws\mu_0$ for some $\mu_0\in\wp(\ALP A\times\ALP B\times\ALP C) $, then $\mu_0(d\VEC a,d\VEC b,d\VEC c) = \mu_0(d\VEC c|\VEC b)\zeta(d\VEC a,d\VEC b)$.
\end{lemma}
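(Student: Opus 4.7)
The approach is to show that the conditional independence of $\VEC A$ and $\VEC C$ given $\VEC B$ encoded by the factorization $\mu_n(d\VEC a,d\VEC b,d\VEC c)=\mu_n(d\VEC c|\VEC b)\zeta(d\VEC a,d\VEC b)$ survives the weak* limit, by testing against separable products of bounded continuous functions. First observe that for every $n$, the $(\VEC A,\VEC B)$-marginal of $\mu_n$ equals $\zeta$, since $\mu_n(\cdot|\VEC b)$ is a probability measure on $\ALP C$ for each $\VEC b$. As $\mu_n\ws\mu_0$ forces weak* convergence of marginals, the $(\VEC A,\VEC B)$-marginal of $\mu_0$ is also $\zeta$. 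Let $\zeta^B$ denote the $\ALP B$-marginal of $\zeta$; disintegrating the $(\VEC B,\VEC C)$-marginal of $\mu_0$, denoted $\mu_0^{BC}$, with respect to $\zeta^B$ yields a regular conditional $\mu_0(d\VEC c|\VEC b)$. The task then reduces to identifying $\mu_0$ with the measure $\mu_0(d\VEC c|\VEC b)\zeta(d\VEC a,d\VEC b)$.

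Fix test functions $\psi\in C_b(\ALP A)$, $\chi\in C_b(\ALP B)$, $\omega\in C_b(\ALP C)$, and set $\tilde\psi(\VEC b):=\int_{\ALP A}\psi(\VEC a)\rho(\VEC a,\VEC b)\nu(d\VEC a)$. Using the factorization of $\mu_n$ and writing $\mu_n^{BC}$ for its $(\ALP B\times\ALP C)$-marginal,
\beqq{\int\psi(\VEC a)\chi(\VEC b)\omega(\VEC c)\,d\mu_n \;=\; \int\tilde\psi(\VEC b)\chi(\VEC b)\omega(\VEC c)\,\mu_n^{BC}(d\VEC b,d\VEC c).}
The core technical point is to verify $\tilde\psi\in C_b(\ALP B)$. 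Boundedness is immediate from $\infnorm{\tilde\psi}\le\infnorm{\psi}$, since $\rho(\cdot,\VEC b)\nu(\cdot)$ is a probability measure on $\ALP A$. Continuity at a point $\VEC b_0$ follows from Condition {\bf C1} (applied with the roles of the conditioning and conditioned variables swapped relative to the definition): given $\epsilon>0$, the $\delta>0$ supplied by C1 yields, for $d_{\ALP B}(\VEC b,\VEC b_0)<\delta$,
\beqq{|\tilde\psi(\VEC b)-\tilde\psi(\VEC b_0)|\;\le\;\epsilon\,\infnorm{\psi}\int_{\ALP A} h_{(\rho,\nu)}(\VEC b_0,\VEC a)\,\nu(d\VEC a),}
and the integral is bounded uniformly in $\VEC b_0$ by the second clause of C1.

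With $\tilde\psi\in C_b(\ALP B)$ in hand, $\tilde\psi\chi\omega\in C_b(\ALP B\times\ALP C)$; since $\mu_n\ws\mu_0$ implies $\mu_n^{BC}\ws\mu_0^{BC}$, passing to the limit in the displayed identity and then unpacking $\tilde\psi$ yields
\beqq{\int\psi\chi\omega\,d\mu_0 \;=\; \int\psi(\VEC a)\chi(\VEC b)\omega(\VEC c)\,\mu_0(d\VEC c|\VEC b)\,\zeta(d\VEC a,d\VEC b).}
A standard argument, namely Stone--Weierstrass on compact subsets combined with tightness of Borel probability measures on Polish spaces, extends this equality from products of bounded continuous factors to all test functions in $C_b(\ALP A\times\ALP B\times\ALP C)$, giving the claimed factorization of $\mu_0$. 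I expect the continuity estimate on $\tilde\psi$ through Condition {\bf C1} to be the main point of care, since it is precisely the uniform $\nu$-integrability of the VC function $h_{(\rho,\nu)}$ that allows conditional structure to pass through the weak* limit; the density step at the end is routine on Polish product spaces.
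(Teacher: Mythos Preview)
Your proof is correct and shares the paper's core idea: both arguments observe that the $(\ALP B\times\ALP C)$-marginals converge weakly, and both use Condition~{\bf C1} to show that integrating a bounded continuous function against $\rho(\VEC a,\VEC b)\nu(d\VEC a)$ yields a bounded continuous function of the remaining variables, which lets the conditional structure pass through the weak* limit.

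The only difference is in execution. The paper tests directly against an arbitrary $g\in U_b(\ALP A\times\ALP B\times\ALP C)$ and invokes Corollary~\ref{cor:fconti} (a consequence of Lemma~\ref{lem:unifequi}) to conclude that $(\VEC b,\VEC c)\mapsto\int_{\ALP A} g(\VEC a,\VEC b,\VEC c)\rho(\VEC a,\VEC b)\nu(d\VEC a)$ is uniformly continuous and bounded; the identity $\int g\,d\mu_0=\int g\,d\big(\mu_0(d\VEC c|\VEC b)\zeta(d\VEC a,d\VEC b)\big)$ then follows in one step for all such $g$, which suffices to identify the measures. You instead restrict to separable products $\psi(\VEC a)\chi(\VEC b)\omega(\VEC c)$, prove $\tilde\psi\in C_b(\ALP B)$ directly from {\bf C1} (this is precisely the special case of Corollary~\ref{cor:fconti} where $g$ depends on $\VEC a$ alone), and then close with a Stone--Weierstrass/tightness density argument. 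Your route is self-contained and does not rely on the earlier equicontinuity lemmas, at the cost of an extra approximation step; the paper's route is shorter because Corollary~\ref{cor:fconti} is already available in its development.
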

\begin{proof}
See Appendix \ref{app:control2}.
\end{proof}

In the next subsection, we turn our attention to proving the existence of optimal strategies for the static team problem considered in this section.

\subsection{Existence of Optimal Strategies}\label{sub:existence}
In this subsection, we prove one of the main results of the paper. We make the following assumption on the probability measure on $\ALP X\times\ALP Y^{1:N}$.
\begin{assumption}\label{assm:condind}
The spaces $\ALP X,\ALP Y^i$ and $\ALP U^i$ are $\sigma$-compact Polish spaces for all $i\in[N]$. Further, there exist bounded continuous functions $\eta^i$ and $\rho^i$, $i\in[N]$ such that
\beqq{\pr{d\VEC y^{1:N}|\VEC x} &=& \prod_{i\in[N]}\pr{d\VEC y^{i}|\VEC x},\\
\pr{d\VEC y^i|\VEC x} &=& \eta^i(\VEC x,\VEC y^i)\nu_{\ALP Y^i}(d\VEC y^i),\\
\pr{d\VEC x|\VEC y^i} &=& \rho^i(\VEC y^i,\VEC x)\nu_{\ALP X}(d\VEC x),}
where $\nu_{\ALP X}$ and $\nu_{\ALP Y^i}$ are measures on $\ALP X$ and $\ALP Y^i$, respectively, for $i\in[N]$. The tuples $(\rho^i,\nu_{\ALP X})$ and $(\eta^i,\nu_{\ALP Y^i}),\: i\in[N]$ satisfy Condition {\bf C1}.{\hfill $\Box$}
\end{assumption}

We now use the auxiliary results in the previous subsection to prove the following important theorem.
\begin{theorem}\label{thm:intc}
Consider Team {\bf ST1}, satisfying Assumption \ref{assm:condind}, where $\ALP U^i, i\in[N]$ need not be compact sets. Let $g\in U_b(\ALP X\times\ALP Y^{1:N}\times\ALP U^{1:N})$. For every $i\in[N]$, let $\{\lambda^i_n\}_{n\in\Na}\subset\wp(\ALP U^i\times\ALP Y^i\times\ALP X)$ be a convergent sequence of measures such that $\lambda^i_n(d\VEC u^i,d\VEC y^i,d\VEC x) = \lambda^i_n(d\VEC u^i|\VEC y^i)\pr{d\VEC x,d\VEC y^i} $, converging to some $\lambda^i_0\in \wp(\ALP U^i\times\ALP Y^i\times\ALP X)$. Then,
\beqq{\lf{n}\int g(\VEC x,\VEC y^{1:N},\VEC u^{1:N}) \left(\prod_{i=1}^N \lambda^i_{n}(d\VEC u^i|\VEC y^i)\right)\pr{d\VEC x,d\VEC y^{1:N}}\\
= \int g(\VEC x,\VEC y^{1:N},\VEC u^{1:N}) \left(\prod_{i=1}^N \lambda^i_0(d\VEC u^i|\VEC y^i)\right)\pr{d\VEC x,d\VEC y^{1:N}}.}
Consequently, we have
\beqq{\left(\prod_{i=1}^N \lambda^i_{n}(d\VEC u^i|\VEC y^i)\right)\pr{d\VEC x,d\VEC y^{1:N}}\ws \left(\prod_{i=1}^N \lambda^i_0(d\VEC u^i|\VEC y^i)\right)\pr{d\VEC x,d\VEC y^{1:N}} \quad \text{ as }n\rightarrow\infty.}
\end{theorem}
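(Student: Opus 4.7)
The plan is to reduce the joint weak* convergence across all $N$ conditional factors to $N$ separate single-agent weak* convergence problems via a telescoping identity, and to handle each single-agent step by combining two nested applications of Lemma \ref{lem:unifequi2} with Lemma \ref{lem:lambdan}. First I verify that each limit $\lambda^i_0$ has the required form $\lambda^i_0(d\VEC u^i|\VEC y^i)\pr{d\VEC x,d\VEC y^i}$: weak* convergence together with the fact that the $(\VEC y^i,\VEC x)$-marginal of every $\lambda^i_n$ equals the fixed measure $\pr{d\VEC x,d\VEC y^i}$ pins the corresponding marginal of $\lambda^i_0$, and Lemma \ref{lem:control2}, applied with $\ALP A=\ALP X$, $\ALP B=\ALP Y^i$, $\ALP C=\ALP U^i$ and the representation $\pr{d\VEC x|\VEC y^i}=\rho^i(\VEC y^i,\VEC x)\nu_{\ALP X}(d\VEC x)$ from Assumption \ref{assm:condind} (which satisfies Condition \textbf{C1}), guarantees that $\VEC U^i$ remains conditionally independent of $\VEC X$ given $\VEC Y^i$ in the limit.

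For the telescoping step, define
\[
I^k_n := \int g\prod_{i=1}^k\lambda^i_n(d\VEC u^i|\VEC y^i)\prod_{i=k+1}^N\lambda^i_0(d\VEC u^i|\VEC y^i)\,\pr{d\VEC x,d\VEC y^{1:N}},\qquad k=0,1,\ldots,N,
\]
so $I^N_n$ and $I^0_n$ are, respectively, the left- and right-hand sides of the target identity. For each $k\in[N]$ I exploit conditional independence of $\VEC Y^1,\ldots,\VEC Y^N$ given $\VEC X$ and the densities $\eta^i$ to integrate out the agents $i>k$ against the fixed measures $\lambda^i_0$, obtaining
\[
\tilde g_k(\VEC x,\VEC y^{1:k},\VEC u^{1:k}):=\int g\prod_{i=k+1}^N\lambda^i_0(d\VEC u^i|\VEC y^i)\eta^i(\VEC x,\VEC y^i)\nu_{\ALP Y^i}(d\VEC y^i),
\]
which lies in $U_b$ by Lemma \ref{lem:unifequi2} (any member of a uniformly equicontinuous and bounded family is itself uniformly continuous and bounded). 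A second application of Lemma \ref{lem:unifequi2} to $\tilde g_k$ with the first $k-1$ agents shows that
\[
G^k_n(\VEC x,\VEC y^k,\VEC u^k):=\int\tilde g_k\prod_{i=1}^{k-1}\lambda^i_n(d\VEC u^i|\VEC y^i)\eta^i(\VEC x,\VEC y^i)\nu_{\ALP Y^i}(d\VEC y^i)
\]
defines a uniformly equicontinuous and bounded family in $n$ of the precise form required by Lemma \ref{lem:lambdan}. Using $\lambda^k_n(d\VEC u^k,d\VEC y^k,d\VEC x)=\lambda^k_n(d\VEC u^k|\VEC y^k)\pr{d\VEC x,d\VEC y^k}$, the telescoping difference becomes $I^k_n-I^{k-1}_n=\int G^k_n\,d\lambda^k_n-\int G^k_n\,d\lambda^k_0$. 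Since $\lambda^k_n\ws\lambda^k_0$, Lemma \ref{lem:lambdan} extracts, along any subsequence, a further subsequence on which this difference tends to $0$; the standard subsequence principle upgrades this to convergence along the full sequence, and summing over $k\in[N]$ yields the first identity.

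The consequent weak* convergence of the joint measures then follows automatically: bounded Lipschitz functions form a determining class for weak convergence of probability measures on a Polish space and are contained in $U_b$, so convergence of the integrals for every $g\in U_b$ already implies the stated weak* convergence. The main obstacle is the nested structure of the single-agent step: before the outer application of Lemma \ref{lem:unifequi2} can be invoked to produce $G^k_n$ in the exact form required by Lemma \ref{lem:lambdan}, one must first show that the inner partial integration yields $\tilde g_k\in U_b$. This is precisely where Condition \textbf{C1} from Assumption \ref{assm:condind}, through the uniform control provided by the VC functions, carries uniform continuity and boundedness through each partial integration and makes the telescoping argument go through.
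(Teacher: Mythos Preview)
Your proof is correct and follows essentially the same route as the paper: a telescoping identity reduces the question to $N$ single-agent terms, each handled by combining Corollary~\ref{cor:fconti}/Lemma~\ref{lem:unifequi2} (for the fixed factors) with Lemma~\ref{lem:unifequi2} (for the varying factors) and Lemma~\ref{lem:lambdan}, followed by the subsequence principle. The only inessential differences are that the paper reverses the roles in the telescope (fixed $\lambda^i_0$ for $i<j$, varying $\lambda^i_n$ for $i>j$) and extracts one common subsequence across all $j$ before invoking the subsequence principle, whereas you apply it term-by-term; both orderings are valid.
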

\begin{proof}
See Appendix \ref{app:intc}.
\end{proof}

We now list in the following assumption the conditions that we need in order to establish the existence of optimal strategies.
\begin{assumption}\label{assm:staticteam}
\begin{enumerate}
\item The cost function $c:\ALP X\times\ALP Y^{1:N}\times\ALP U^{1:N}\rightarrow\Re^+$ is continuous in its arguments and bounded from above.
\item The action sets $\ALP U^i,i\in[N]$, of all the agents are compact subsets of Polish spaces. Therefore, $\ALP U^i$ is $\sigma$-compact Polish space for all $i\in[N]$.
\item Assumption \ref{assm:condind} holds.
\end{enumerate}
\end{assumption}

\begin{remark}
If we assume that the cost function $c$ is continuous and the spaces $\ALP X$ and $\ALP Y^i, i\in[N]$ are compact subsets of Polish spaces, then the cost function is bounded.{\hfill $\Box$}
\end{remark}

\begin{remark}
It should also be noted that Part 3 of Assumption \ref{assm:staticteam} is satisfied if (i) $\ALP X$ and $\ALP Y^i,\; i\in[N]$ are finite dimensional Euclidean spaces, and (ii) the state and observations are jointly Gaussian random variables such that Agent $i$ observes a Gaussian noise corrupted version of the state $\VEC X$.{\hfill $\Box$}
\end{remark}

The following theorem states that any team problem that satisfies the assumptions made above admits a team-optimal solution.
\begin{theorem}\label{thm:staticteam}
Every static team problem satisfying Assumption \ref{assm:staticteam} admits a team-optimal solution in deterministic strategies.
\end{theorem}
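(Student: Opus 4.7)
The plan is to implement the four-step program sketched just before the theorem. Pick a minimizing sequence $\{\pi_n^{1:N}\}_{n\in\Na}\subset\ALP R^{1:N}$ with $J(\pi_n^{1:N})\to\inf J$, and for each $i\in[N]$ and $n\in\Na$ form the joint probability measure
\[
\lambda_n^i(d\VEC u^i,d\VEC y^i,d\VEC x):=\pi_n^i(d\VEC u^i|\VEC y^i)\,\pr{d\VEC x,d\VEC y^i}
\]
on $\ALP U^i\times\ALP Y^i\times\ALP X$. The marginal of $\lambda_n^i$ on $\ALP X\times\ALP Y^i$ equals the fixed Borel probability measure $\pr{d\VEC x,d\VEC y^i}$, hence is tight; and the $\ALP U^i$-component is supported in the compact set $\ALP U^i$. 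Therefore each sequence $\{\lambda_n^i\}_n$ is tight, and Prokhorov's theorem together with a diagonal extraction delivers a single subsequence $\{n_k\}$ along which $\lambda_{n_k}^i\ws\lambda_0^i$ for every $i\in[N]$.

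The next task is to verify that each $\lambda_0^i$ still has the ``information-constrained'' form $\pi_0^i(d\VEC u^i|\VEC y^i)\,\pr{d\VEC x,d\VEC y^i}$ for some $\pi_0^i\in\ALP R^i$. This is exactly what Lemma \ref{lem:control2} asserts, applied with $(\VEC a,\VEC b,\VEC c)=(\VEC x,\VEC y^i,\VEC u^i)$ and $(\rho,\nu)=(\rho^i,\nu_{\ALP X})$, since Assumption \ref{assm:condind} guarantees that $(\rho^i,\nu_{\ALP X})$ satisfies Condition {\bf C1}. With the limiting behavioral strategy profile $\pi_0^{1:N}$ in hand, I would then invoke Theorem \ref{thm:intc}, whose second conclusion provides the weak* convergence
\[
\Big(\prod_{i=1}^N\pi_{n_k}^i(d\VEC u^i|\VEC y^i)\Big)\pr{d\VEC x,d\VEC y^{1:N}}\ws\Big(\prod_{i=1}^N\pi_0^i(d\VEC u^i|\VEC y^i)\Big)\pr{d\VEC x,d\VEC y^{1:N}}.
\]
Since Assumption \ref{assm:staticteam} ensures $c\in C_b(\ALP X\times\ALP Y^{1:N}\times\ALP U^{1:N})$, integrating against $c$ yields $J(\pi_{n_k}^{1:N})\to J(\pi_0^{1:N})$, so $\pi_0^{1:N}$ attains the infimum among behavioral strategies.

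Finally, to reduce to deterministic strategies, I would appeal to Blackwell's irrelevant information theorem \cite[p.~457]{yukselbook}: under Assumption \ref{assm:condind} the observations are conditionally independent given the state, so any behavioral strategy profile admits a deterministic counterpart with no larger expected cost, yielding a $\gamma_0^{1:N}\in\ALP D^{1:N}$ attaining the infimum. The main obstacle I anticipate is the second step above, namely preserving the informational constraint under weak* convergence: \emph{a priori} there is no reason a weak* limit of measures of the form $\pi^i(d\VEC u^i|\VEC y^i)\pr{d\VEC x,d\VEC y^i}$ should still factor as a kernel in $\VEC y^i$ alone against the same marginal; Condition {\bf C1} on $(\rho^i,\nu_{\ALP X})$, channeled through Lemma \ref{lem:control2}, is precisely the regularity that rules out a spurious state-dependence appearing in the limit.
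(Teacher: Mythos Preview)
Your proposal is correct and follows essentially the same four-step argument as the paper: tightness of $\{\lambda^i_n\}$ via compactness of $\ALP U^i$ and Polishness of $\ALP X\times\ALP Y^i$, Prohorov plus subsequence extraction, Lemma~\ref{lem:control2} to preserve the informational constraint, Theorem~\ref{thm:intc} for convergence of the product measure (hence of $J$ against $c\in C_b$), and Blackwell's irrelevant information theorem for the passage to deterministic strategies. The only minor remark is that Blackwell's theorem is applied agent-by-agent (fixing the other agents' optimal behavioral strategies and replacing Agent~$i$'s by a deterministic one achieving the same cost), rather than in one shot for the whole profile; conditional independence of observations is not what drives that step.
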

\begin{proof}
Let $\{\pi^{1:N}_n\}_{n\in\Na}\subset\ALP R^{1:N}$ be a sequence of strategy profiles of the agents such that
\beqq{J(\pi^{1:N}_n)<\inf_{\pi^{1:N}\in\ALP R^{1:N}}J(\pi^{1:N})+\frac{1}{n}.}
We next show that there exists a convergent subsequence of this sequence $\{\pi^{1:N}_n\}_{n\in\Na}$ such that the limiting behavioral strategies of the agents achieve the infimum of the expected cost functional. We organize the proof into four steps.

{\it Step 1: (Tightness)} Since $\ALP X$ and $\ALP Y^i$ are Polish spaces, $\pr{d\VEC x,d\VEC y^i}$ is a tight measure. Since $\ALP U^i$ is compact, the set of measures $\{\pi^i(d\VEC u^i|\VEC y^i)\pr{d\VEC x,d\VEC y^i}\}_{\pi^i\in\ALP R^i}$ is tight for all $i\in[N]$. Define a measure $\lambda^i_n$ as
\beqq{\lambda^i_n(d\VEC u^i,d\VEC y^i,d\VEC x) := \pi^i_n(d\VEC u^i|\VEC y^i)\pr{d\VEC x,d\VEC y^i},\quad \text{ for }n\in\Na,\; i\in[N].}

{\it Step 2: (Extracting convergent subsequence)} Recall that every sequence of tight measures has a convergent subsequence by Prohorov's theorem \cite{durrett2010}. Thus, $\{\lambda^1_n\}_{n\in\Na}$ must have a convergent subsequence, say $\{\lambda^1_{n_k}\}_{k\in\Na}$. Similarly, $\{\lambda^2_{n_k}\}_{k\in\Na}$ must have a convergent subsequence. Since there is only a finite number ($N$) of agents, we continue this process of extracting convergent subsequences of every sequence of measures to get a convergent subsequence of a set of measures $\{\lambda^1_{n_l},\ldots,\lambda^N_{n_l}\}_{l\in\Na}$ such that
\beqq{\lambda^i_{n_l}\ws\lambda^i_0,\quad\text{ as $l\rightarrow\infty$ for all } i\in[N],}
for some set of measures $\{\lambda^1_0,\ldots,\lambda^N_0\}$. Recall the result of Lemma \ref{lem:control2}, which implies that $\lambda^i_0(d\VEC u^i,d\VEC y^i,d\VEC x) = \lambda^i_0(d\VEC u^i|\VEC y^i)\pr{d\VEC x,d\VEC y^i} $. Define $\pi^i_0(d\VEC u^i|\VEC y^i):=\lambda^i_0(d\VEC u^i|\VEC y^i)$ for all $i\in[N]$ and $\pi^{1:N}_0 :=\{\pi^1_0,\ldots,\pi^N_0\}$.

{\it Step 3: (Limit achieves infimum)} The result of Theorem \ref{thm:intc} implies that
\beqq{\lf{l}J(\pi^{1:N}_{n_{l}}) = J(\pi^{1:N}_0) = \inf_{\pi^{1:N}\in\ALP R^{1:N}}J(\pi^{1:N}).}

{\it Step 4: (Applying a result on irrelevant information due to Blackwell) \cite{blackwell1963,blackwell1964}} Now, using Blackwell's irrelevant information theorem (see e.g. \cite[p. 457]{yukselbook}), we conclude that for fixed optimal behavioral strategies of all agents other than $i$, there exists a deterministic strategy of Agent $i$ that achieves the same value of expected cost as the optimal behavioral strategy of Agent $i$. Thus, all agents' strategies can be restricted, without any loss of generality, to deterministic ones. This implies that the static team admits an optimal solution in the class of deterministic strategies of the agents. This completes the proof of the theorem.
\end{proof}

\begin{remark}
It must be noted that for the existence result proven above, we do not require the state space $\ALP X$ and observation spaces $\ALP Y^i, i\in[N]$ to be compact. {\hfill $\Box$}
\end{remark}


We have an immediate corollary.
\begin{corollary}\label{cor:staticteam}
Assume that every Agent $i\in [N]$ observes $\VEC Y^i$, which is independent of the observations of all other agents. If the cost function is continuous in its arguments (observations and actions of the agents) and bounded, and action spaces of the agents are compact, then the static team with independent observations admits an optimal solution in deterministic strategies.
\end{corollary}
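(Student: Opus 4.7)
The plan is to derive this corollary directly from Theorem~\ref{thm:staticteam} by embedding the state-free, independent-observation setting into the framework of that theorem through a degenerate (singleton) state variable. Specifically, I would set $\ALP X := \{x_0\}$ to be a singleton Polish space (trivially $\sigma$-compact), declare $\VEC X \equiv x_0$ almost surely, and extend the original cost $c(\VEC y^{1:N},\VEC u^{1:N})$ to $\tilde c(\VEC x,\VEC y^{1:N},\VEC u^{1:N}) := c(\VEC y^{1:N},\VEC u^{1:N})$, which preserves continuity and boundedness. The action spaces remain compact by hypothesis, so parts~1 and~2 of Assumption~\ref{assm:staticteam} carry over to the augmented problem. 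Moreover, the expected-cost functional, the set of behavioral strategy profiles, and the set of deterministic strategy profiles in the augmented problem are in one-to-one correspondence with those of the original.

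The substantive step is to verify Assumption~\ref{assm:condind} for the augmented problem. Mutual independence of $\VEC Y^1,\dots,\VEC Y^N$ yields $\pr{d\VEC y^{1:N}\mid \VEC x}=\prod_{i\in[N]}\pr{d\VEC y^i\mid \VEC x}$ trivially, since conditioning on $\VEC X=x_0$ is vacuous. To represent the marginals in the required form, I would take $\nu_{\ALP Y^i} := \pr{d\VEC y^i}$ and $\eta^i \equiv 1$, giving $\pr{d\VEC y^i\mid \VEC x} = \eta^i(\VEC x,\VEC y^i)\,\nu_{\ALP Y^i}(d\VEC y^i)$; likewise $\nu_{\ALP X} := \ind{x_0}$ and $\rho^i \equiv 1$ give $\pr{d\VEC x\mid \VEC y^i} = \rho^i(\VEC y^i,\VEC x)\,\nu_{\ALP X}(d\VEC x)$. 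Both $\eta^i$ and $\rho^i$ are continuous (in fact constant) functions, and Condition~{\bf C1} holds for each pair with variation-control function $h\equiv 1$: the left-hand side of the VC inequality is identically zero on the singleton $\ALP X$, and $\int h\,d\nu$ is finite because $\nu_{\ALP Y^i}$ and $\nu_{\ALP X}$ are probability measures.

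Having verified all parts of Assumption~\ref{assm:staticteam} for the augmented team, Theorem~\ref{thm:staticteam} applies and produces a team-optimal deterministic strategy profile for the augmented problem. Translating back through the correspondence between augmented and original problems yields the stated deterministic team-optimal solution for the original team with independent observations. I do not anticipate any genuine obstacle: the corollary is essentially a specialization of the main theorem, once one recognizes that a degenerate state collapses conditional independence given $\VEC X$ into ordinary independence and makes Condition~{\bf C1} vacuous. The only care needed is to select probability measures (rather than, say, Lebesgue measures) as the reference measures $\nu_{\ALP Y^i}$ and $\nu_{\ALP X}$, so that the integrability requirement on the VC function is automatic.
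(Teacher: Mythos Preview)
Your approach is correct and is precisely the intended route: the paper states this as an ``immediate corollary'' of Theorem~\ref{thm:staticteam} with no further argument, and the degenerate-state embedding you describe is the natural way to make that immediacy explicit. Your verification of Condition~\textbf{C1} with constant densities and constant VC functions is sound.
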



\subsection{Static Team {\bf ST1} with Degraded Information}\label{sub:degraded}
In Assumption \ref{assm:condind}, we assumed conditional independence of observations given the state, which we relax in this subsection. For simplicity, we consider a two-agent static team problem, where the observation of Agent 2 is a noise corrupted version of the observation of Agent 1. We further invoke the following assumption. The main results of Lemma \ref{lem:intcdegraded} and Theorem \ref{thm:degradedstatic} below, and the main idea of the proof can be extended to multi-agent static team scenarios.

\begin{assumption}\label{assm:degraded}
Consider Team {\bf ST1} in which Agent 2 observes a degraded version of Agent 1's observation. The spaces $\ALP X,\:\ALP Y^i$ and  $\:\ALP U^i$ are $\sigma$-compact Polish spaces for all $i\in\{1,2\}$. There exist bounded continuous functions $\eta^i$ and $\rho^i$, $i\in\{1,2\}$ such that
\beqq{\pr{d\VEC y^2,d\VEC y^1|\VEC x} &=& \pr{d\VEC y^2|\VEC y^1}\pr{d\VEC y^1|\VEC x},\\
\pr{d\VEC y^2|\VEC y^1} &=& \eta^2(\VEC y^1,\VEC y^2)\nu_{\ALP Y^2}(d\VEC y^2),\\
\pr{d\VEC y^1|\VEC x,\VEC y^2} &=& \eta^1(\VEC x,\VEC y^2,\VEC y^1)\nu_{\ALP Y^1}(d\VEC y^1),\\
\pr{d\VEC y^1,d\VEC x|\VEC y^2} &=& \rho^2(\VEC y^2,\VEC y^1,\VEC x)\nu_{\ALP Y^1}(d\VEC y^1)\nu_{\ALP X}(d\VEC x),\\
\pr{d\VEC x|\VEC y^1} &=& \rho^1(\VEC y^1,\VEC x)\nu_{\ALP X}(d\VEC x),}
where $\nu_{\ALP X}$ and $\nu_{\ALP Y^i}$ are measures on $\ALP X$ and $\ALP Y^i$, respectively, for $i\in\{1,2\}$. The tuples $(\rho^1,\nu_{\ALP X})$, $(\rho^2,\nu_{\ALP X}\times\nu_{\ALP Y^1})$ and $(\eta^i,\nu_{\ALP Y^i}),\: i\in\{1,2\}$ satisfy Condition {\bf C1}.{\hfill $\Box$}
\end{assumption}

We now use the auxiliary results in the previous subsection to prove the following important lemma.
\begin{lemma}\label{lem:intcdegraded}
Consider Team {\bf ST1} satisfying Assumption \ref{assm:degraded}, where $\ALP U^i, i\in\{1,2\}$ need not be compact sets. Let $\ALP X,\:\ALP Y^i,\:\ALP U^i$, $i\in\{1,2\}$ be $\sigma$-compact Polish spaces. Let $g\in U_b(\ALP X\times\ALP Y^{1:2}\times\ALP U^{1:2})$. For every $i\in\{1,2\}$, let $\{\lambda^1_n\}_{n\in\Na}\subset\wp(\ALP U^i\times\ALP Y^1\times\ALP X)$ and $\{\lambda^2_n\}_{n\in\Na}\subset\wp(\ALP U^i\times\ALP Y^1\times\ALP Y^2\times\ALP X)$ be convergent sequences of measures such that $\lambda^1_n(d\VEC u^1,d\VEC y^1,d\VEC x) = \lambda^1_n(d\VEC u^1|\VEC y^1)\pr{d\VEC x,d\VEC y^1} $ and $\lambda^2_n(d\VEC u^2,d\VEC y^{1:2},d\VEC x) = \lambda^2_n(d\VEC u^2|\VEC y^2)\pr{d\VEC x,d\VEC y^{1:2}} $, converging to some $\lambda^1_0\in \wp(\ALP U^1\times\ALP Y^1\times\ALP X)$ and $\lambda^2_0\in \wp(\ALP U^2\times\ALP Y^1\times\ALP Y^2\times\ALP X)$, respectively. Then,
\beqq{\lf{n}\int g(\VEC x,\VEC y^{1:2},\VEC u^{1:2}) \left(\prod_{i=1}^2 \lambda^i_{n}(d\VEC u^i|\VEC y^i)\right)\pr{d\VEC x,d\VEC y^{1:2}}\\
= \int g(\VEC x,\VEC y^{1:2},\VEC u^{1:2}) \left(\prod_{i=1}^2 \lambda^i_0(d\VEC u^i|\VEC y^i)\right)\pr{d\VEC x,d\VEC y^{1:2}}.}
Consequently, we have
\beqq{\left(\prod_{i=1}^2 \lambda^i_{n}(d\VEC u^i|\VEC y^i)\right)\pr{d\VEC x,d\VEC y^{1:2}}\ws \left(\prod_{i=1}^2 \lambda^i_0(d\VEC u^i|\VEC y^i)\right)\pr{d\VEC x,d\VEC y^{1:2}} \quad \text{ as }n\rightarrow\infty.}
\end{lemma}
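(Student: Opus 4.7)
The plan is to adapt the argument behind Theorem \ref{thm:intc} to the degraded information structure of Assumption \ref{assm:degraded}, in which Agent 2's observation is coupled to Agent 1's observation rather than directly to the state. The key is to choose the order in which the partial integrations are performed so that the auxiliary lemmas of the previous subsection apply at each step.

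First, using Assumption \ref{assm:degraded} to write $\pr{d\VEC y^1|\VEC x,\VEC y^2}=\eta^1(\VEC x,\VEC y^2,\VEC y^1)\nu_{\ALP Y^1}(d\VEC y^1)$, I would define the partial integral
\[
g^{(1)}_n(\VEC x,\VEC y^2,\VEC u^2) \;:=\; \int g(\VEC x,\VEC y^{1:2},\VEC u^{1:2})\,\lambda^1_n(d\VEC u^1|\VEC y^1)\,\eta^1(\VEC x,\VEC y^2,\VEC y^1)\,\nu_{\ALP Y^1}(d\VEC y^1).
\]
Since $(\eta^1,\nu_{\ALP Y^1})$ satisfies Condition {\bf C1} in the conditioning variable $(\VEC x,\VEC y^2)$, Lemma \ref{lem:unifequi} (applied with $\VEC a=(\VEC x,\VEC y^2)$, $\VEC b=(\VEC u^1,\VEC y^1)$, $\VEC c=\VEC u^2$, $\VEC y=\VEC y^1$, absorbing the $\VEC y^1$-dependence of $g$ into the $\VEC b$ coordinate) implies that $\{g^{(1)}_n\}_{n\in\Na}$ is uniformly equicontinuous and bounded on $\ALP X\times\ALP Y^2\times\ALP U^2$. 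Pointwise convergence $g^{(1)}_n\to g^{(1)}_0$ is then deduced from the weak convergence of the marginal of $\lambda^1_n$ on $\ALP U^1\times\ALP Y^1$ (a continuous image of $\lambda^1_n\ws\lambda^1_0$), after rewriting $\nu_{\ALP Y^1}$ in terms of the probability measure $\mathbb{P}(d\VEC y^1)$ using its Radon--Nikodym density, which is positive and continuous on the support by the regularity provided by Condition {\bf C1}.

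Next, observe the identity
\[
I_n \;:=\; \int g\prod_{i=1}^2\lambda^i_n(d\VEC u^i|\VEC y^i)\,\mathbb{P}(d\VEC x,d\VEC y^{1:2}) \;=\; \int g^{(1)}_n(\VEC x,\VEC y^2,\VEC u^2)\,\bar\lambda^2_n(d\VEC u^2,d\VEC y^2,d\VEC x),
\]
where $\bar\lambda^2_n$ denotes the marginal of $\lambda^2_n$ on $\ALP U^2\times\ALP Y^2\times\ALP X$, which converges weakly to $\bar\lambda^2_0$. I would then split
\[
I_n-I_0 \;=\; \int (g^{(1)}_n-g^{(1)}_0)\,d\bar\lambda^2_n + \int g^{(1)}_0\,d(\bar\lambda^2_n-\bar\lambda^2_0).
\]
The second term vanishes by weak convergence of $\bar\lambda^2_n$ combined with the boundedness and continuity of $g^{(1)}_0$ (Corollary \ref{cor:fconti}). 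The first term is handled by Lemma \ref{lem:lambdan} applied to $f_n:=g^{(1)}_n-g^{(1)}_0$ and $\zeta_n:=\bar\lambda^2_n$, since $\{f_n\}$ is uniformly equicontinuous, bounded, and tends to $0$ pointwise, while $\{\bar\lambda^2_n\}$ is weakly convergent on the $\sigma$-compact Polish space $\ALP X\times\ALP Y^2\times\ALP U^2$. A standard subsequence-of-subsequences argument upgrades the subsequential convergence to $I_n\to I_0$ along the full sequence; the weak* convergence of the product measures then follows because $U_b$ is convergence-determining on Polish spaces.

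The principal technical obstacle I anticipate is the pointwise convergence of $g^{(1)}_n$: the reference measure $\nu_{\ALP Y^1}$ is not a probability measure, so one must switch to $\mathbb{P}(d\VEC y^1)$ in order to invoke weak convergence, all while preserving boundedness and continuity of the resulting integrand. This is precisely where Condition {\bf C1} together with the joint decomposition of Assumption \ref{assm:degraded} become essential, ensuring that the Radon--Nikodym density of $\mathbb{P}(d\VEC y^1)$ with respect to $\nu_{\ALP Y^1}$ is well-behaved on the support and that the reweighted integrand remains in a class to which the weak-convergence machinery applies.
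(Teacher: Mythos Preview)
Your decomposition integrates out $(\VEC u^1,\VEC y^1)$ first and then relies on pointwise convergence $g^{(1)}_n\to g^{(1)}_0$; this is where the argument breaks. For fixed $(\VEC x,\VEC y^2,\VEC u^2)$ you need
\[
\int g\,\lambda^1_n(d\VEC u^1|\VEC y^1)\,\pr{d\VEC y^1|\VEC x,\VEC y^2}\ \longrightarrow\ \int g\,\lambda^1_0(d\VEC u^1|\VEC y^1)\,\pr{d\VEC y^1|\VEC x,\VEC y^2},
\]
but the weak convergence you have is of $\lambda^1_n(d\VEC u^1|\VEC y^1)\pr{d\VEC y^1}$, i.e.\ with the \emph{unconditional} $\VEC y^1$-marginal. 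Your proposed fix---rewriting $\nu_{\ALP Y^1}$ in terms of $\pr{d\VEC y^1}$ via a Radon--Nikodym density $p(\VEC y^1)$---fails in general because the resulting integrand carries the factor $\eta^1(\VEC x,\VEC y^2,\VEC y^1)/p(\VEC y^1)$, which is typically unbounded (already in the Gaussian case the conditional density has thinner tails than the marginal). Nothing in Condition {\bf C1} or Assumption \ref{assm:degraded} furnishes a lower bound on $p$, so the reweighted integrand is not in $C_b$ and weak convergence cannot be invoked. Without pointwise convergence, Lemma \ref{lem:lambdan} applied to $f_n=g^{(1)}_n-g^{(1)}_0$ only yields $\big|\int f_{n_k}\,d\bar\lambda^2_{n_k}-\int f_{n_k}\,d\bar\lambda^2_0\big|\to 0$, which still leaves $\int f_{n_k}\,d\bar\lambda^2_0\to 0$ unproved.

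The paper sidesteps this entirely by telescoping in the opposite order. For the term pairing with $\lambda^1_n-\lambda^1_0$, it first integrates out $(\VEC u^2,\VEC y^2)$ via the channel $\eta^2(\VEC y^1,\VEC y^2)$, obtaining a uniformly equicontinuous family on $\ALP X\times\ALP Y^1\times\ALP U^1$---exactly the space on which $\lambda^1_n$ is assumed to converge weakly. Lemma \ref{lem:lambdan} then gives subsequential vanishing with no change of reference measure. The remaining term is $\int g^{(1)}_0\,d(\lambda^2_n-\lambda^2_0)$ with the \emph{fixed} function $g^{(1)}_0$, which is bounded uniformly continuous by Corollary \ref{cor:fconti} (using $\eta^1$), so weak convergence of $\lambda^2_n$ handles it directly. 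In short, the key is to arrange the partial integrations so that the varying conditional $\lambda^1_n$ is always integrated against its own reference law $\pr{d\VEC x,d\VEC y^1}$, never against a reweighted one.
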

\begin{proof}
See Appendix \ref{app:intcdegraded}.
\end{proof}

We can now show the existence of optimal strategies for a static team {\bf ST1} satisfying Parts 1 and 2 of Assumption \ref{assm:staticteam}, and Assumption \ref{assm:degraded}.

\begin{theorem}\label{thm:degradedstatic}
Any two-agent static team {\bf ST1} satisfying Parts 1 and 2 of Assumption \ref{assm:staticteam}, and Assumption \ref{assm:degraded} admits a team-optimal solution, which is in the class of deterministic strategies of the agents.
\end{theorem}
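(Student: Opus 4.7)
The plan is to mirror the four-step argument used in the proof of Theorem \ref{thm:staticteam}, substituting Lemma \ref{lem:intcdegraded} for Theorem \ref{thm:intc} at the key convergence step and adapting Lemma \ref{lem:control2} so as to accommodate Agent $2$'s enlarged conditioning space. I will first fix a minimizing sequence $\{\pi^{1:2}_n\}_{n\in\Na}\subset\ALP R^{1:2}$ with $J(\pi^{1:2}_n)\downarrow\inf_{\pi^{1:2}\in\ALP R^{1:2}}J(\pi^{1:2})$, and form the joint measures
\[\lambda^1_n(d\VEC u^1,d\VEC y^1,d\VEC x):=\pi^1_n(d\VEC u^1|\VEC y^1)\pr{d\VEC x,d\VEC y^1},\]
\[\lambda^2_n(d\VEC u^2,d\VEC y^{1:2},d\VEC x):=\pi^2_n(d\VEC u^2|\VEC y^2)\pr{d\VEC x,d\VEC y^{1:2}}.\]
Since $\ALP U^1$ and $\ALP U^2$ are compact and the marginals on the remaining coordinates are fixed, both families are tight, so by Prokhorov's theorem a common subsequence (still indexed by $n$) converges weak$^*$ to measures $\lambda^1_0$ and $\lambda^2_0$.

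The crux will be verifying that the limits retain the informational factorizations
\[\lambda^1_0(d\VEC u^1,d\VEC y^1,d\VEC x)=\lambda^1_0(d\VEC u^1|\VEC y^1)\pr{d\VEC x,d\VEC y^1},\]
\[\lambda^2_0(d\VEC u^2,d\VEC y^{1:2},d\VEC x)=\lambda^2_0(d\VEC u^2|\VEC y^2)\pr{d\VEC x,d\VEC y^{1:2}}.\]
For $\lambda^1_0$, this is a direct application of Lemma \ref{lem:control2} with $\VEC A=\VEC X$, $\VEC B=\VEC Y^1$ and $(\rho,\nu)=(\rho^1,\nu_{\ALP X})$. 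For $\lambda^2_0$, I will invoke a mild generalization of Lemma \ref{lem:control2} in which the nuisance variable is the pair $\VEC A=(\VEC X,\VEC Y^1)$, the conditioning variable is $\VEC B=\VEC Y^2$, and the required density of $\VEC A$ given $\VEC B$ with respect to the product reference measure $\nu_{\ALP X}\times\nu_{\ALP Y^1}$ is supplied by $\rho^2(\VEC y^2,\VEC y^1,\VEC x)$. Assumption \ref{assm:degraded} is tailored so that $(\rho^2,\nu_{\ALP X}\times\nu_{\ALP Y^1})$ satisfies Condition {\bf C1}, and the proof of Lemma \ref{lem:control2} should carry over once this Condition {\bf C1} input is in place. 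Defining $\pi^i_0(d\VEC u^i|\VEC y^i):=\lambda^i_0(d\VEC u^i|\VEC y^i)$ then yields the candidate optimal behavioral strategies $\pi^{1:2}_0$.

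Finally, Lemma \ref{lem:intcdegraded} applied with $g=c$ (continuous and bounded by Parts $1$ and $2$ of Assumption \ref{assm:staticteam}) gives
\[\lf{n}J(\pi^{1:2}_n)=\int c\,\lambda^1_0(d\VEC u^1|\VEC y^1)\lambda^2_0(d\VEC u^2|\VEC y^2)\pr{d\VEC x,d\VEC y^{1:2}}=J(\pi^{1:2}_0),\]
so the limit profile attains the infimum. A final application of Blackwell's irrelevant information theorem, performed one agent at a time, replaces each $\pi^i_0$ by a deterministic strategy of Agent $i$ with the same cost. The main obstacle in this program is the extension of Lemma \ref{lem:control2} to Agent $2$: one must verify that the conditional factorization $\VEC U^2\perp(\VEC X,\VEC Y^1)\mid\VEC Y^2$ survives weak$^*$ convergence when the nuisance component has been enlarged to a pair and the reference measure replaced by a product measure. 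I expect this to go through because the proof of Lemma \ref{lem:control2} only uses Condition {\bf C1} through a dominated-convergence-type estimate on the density in the nuisance variable, and Assumption \ref{assm:degraded} was formulated precisely so that the analogous estimate holds for $\rho^2$ on $\ALP X\times\ALP Y^1$ with reference $\nu_{\ALP X}\times\nu_{\ALP Y^1}$.
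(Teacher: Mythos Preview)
Your proposal is correct and follows essentially the same route as the paper, which simply instructs the reader to mimic the four-step proof of Theorem \ref{thm:staticteam} using Lemma \ref{lem:intcdegraded} in place of Theorem \ref{thm:intc}. Your detailed reconstruction---tightness from compact action spaces, subsequence extraction via Prohorov, preservation of informational constraints, and Blackwell's theorem---matches the paper's intended argument; in particular, the paper also handles the informational constraint for $\lambda^2_0$ by invoking Lemma \ref{lem:control2} with the enlarged nuisance variable $(\VEC X,\VEC Y^1)$ and the pair $(\rho^2,\nu_{\ALP X}\times\nu_{\ALP Y^1})$ supplied by Assumption \ref{assm:degraded} (this is stated explicitly in the proof of Lemma \ref{lem:intcdegraded}). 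One small technical note: Lemma \ref{lem:intcdegraded} as stated requires $g\in U_b$, whereas $c$ is only assumed $C_b$, so rather than applying the lemma ``with $g=c$'' you should use its second conclusion (weak$^*$ convergence of the full joint measure) and then integrate the bounded continuous $c$ against it---this is exactly how the paper uses Theorem \ref{thm:intc} in Step 3 of the proof of Theorem \ref{thm:staticteam}.
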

\begin{proof}
The proof follows by mimicking the steps of the proof of Theorem \ref{thm:staticteam} and using Lemma \ref{lem:intcdegraded}.
\end{proof}

In the above theorem, we showed that an assumption of conditional independence of observations given the state is not needed for the existence of optimal strategies in team problems; we considered a case where the observation of one agent is a degraded version of the observation of another agent and showed that optimal strategies exist under certain assumptions.

We note that our setting does not cover teams with observation sharing information structures, because the technique we employed for proving Theorem \ref{thm:staticteam} does not readily carry over to such teams. In particular, if the agents share their observations in a certain manner, then we cannot show the equicontinuity result of Lemma \ref{lem:unifequi2}, which is used to prove Theorem \ref{thm:intc}. Recall that Theorem \ref{thm:intc} is crucial for the proof of Theorem \ref{thm:staticteam}. See Appendix \ref{app:counterexample} for a detailed discussion and a counterexample. However, for a class of problems with observation sharing information pattern, it is possible to use other techniques, such as dynamic programming or viewing the decision makers with common information as a single decision maker. We leave a systematic analysis of this setup to future work.

This concludes the discussion in this section. In the next section, we extend the ideas developed in this section to obtain sufficient conditions on a team problem with non-compact action spaces and unbounded continuous cost function, for existence of a team-optimal solution.

\section{Existence of Optimal Solution in {\bf ST2}}\label{sec:unbounded}
In this section, we consider the static team problem in which the cost function is non-negative, continuous, but may be unbounded, and the action sets may be non-compact. We build on the results proved in the previous section to show the existence of optimal strategies of agents in such a team problem.

In the next subsection, we use the result from Theorem \ref{thm:intc} to investigate the properties of the expected cost functional, as a function of the behavioral strategies of the agents, of the team problem with unbounded cost and non-compact action spaces.
\subsection{Properties of the Expected Cost Functional}
Our first result uses Theorem \ref{thm:intc} to prove an important property of expected cost functional of team {\bf ST2}.
\begin{theorem}\label{thm:sub}
Recall that $c:\ALP X\times\ALP Y^{1:N}\times\ALP U^{1:N}\rightarrow\Re^+$ is a non-negative continuous function. For every $i\in[N]$, let $\{\lambda^i_n\}_{n\in\Na}\subset\wp(\ALP U^i\times\ALP Y^i\times\ALP X)$ be a convergent sequence of measures such that $\lambda^i_n(d\VEC u^i,d\VEC y^i,d\VEC x) = \lambda^i_n(d\VEC u^i|\VEC y^i)\pr{d\VEC x,d\VEC y^i} $, converging to some $\lambda^i_0\in \wp(\ALP U^i\times\ALP Y^i\times\ALP X)$. If Assumption \ref{assm:condind} holds, then for any $m\in\Na$,
\beqq{\lf{n}\int \min\{c,m\} \left(\prod_{i=1}^N \lambda^i_{n}(d\VEC u^i|\VEC y^i)\right)\pr{d\VEC x,d\VEC y^{1:N}}\\
= \int \min\{c,m\} \left(\prod_{i=1}^N \lambda^i_0(d\VEC u^i|\VEC y^i)\right)\pr{d\VEC x,d\VEC y^{1:N}}.}
\end{theorem}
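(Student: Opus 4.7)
The plan is to derive this statement as a direct corollary of Theorem \ref{thm:intc}. The hypotheses here are identical to those of Theorem \ref{thm:intc}---Assumption \ref{assm:condind} is in force, and each $\lambda^i_n$ has the prescribed form $\lambda^i_n(d\VEC u^i|\VEC y^i)\pr{d\VEC x,d\VEC y^i}$---so the theorem applies. Its second, stronger conclusion already asserts the weak$^*$ convergence of the joint measures:
\[
\mu_n := \left(\prod_{i=1}^N \lambda^i_n(d\VEC u^i|\VEC y^i)\right)\pr{d\VEC x,d\VEC y^{1:N}} \;\ws\; \left(\prod_{i=1}^N \lambda^i_0(d\VEC u^i|\VEC y^i)\right)\pr{d\VEC x,d\VEC y^{1:N}} =: \mu_0.
\]

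Having secured weak$^*$ convergence of $\mu_n$ to $\mu_0$ in $\wp(\ALP X\times\ALP Y^{1:N}\times\ALP U^{1:N})$, I would then simply verify that the integrand $\min\{c,m\}$ belongs to $C_b(\ALP X\times\ALP Y^{1:N}\times\ALP U^{1:N})$: boundedness by $m$ is immediate, and continuity follows from the continuity of $c$ composed with the Lipschitz map $t\mapsto \min\{t,m\}$ on $\Re^+$. Since the underlying product space is Polish, the Portmanteau theorem then yields $\int \min\{c,m\}\,d\mu_n \to \int \min\{c,m\}\,d\mu_0$, which is exactly the claim of the theorem.

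I do not anticipate any substantive obstacle, since all the analytical work---propagating weak$^*$ limits through products of conditional kernels under Assumption \ref{assm:condind}, via the equicontinuity established in Lemmas \ref{lem:unifequi} and \ref{lem:unifequi2}---has already been executed inside Theorem \ref{thm:intc}. The only fact tacitly used is the passage from ``integrals converge for every $g\in U_b$'' to full weak$^*$ convergence of Radon probability measures, but this is a standard property of metric spaces, since $U_b$ is a convergence-determining family. Everything beyond that reduces to a one-line application of the Portmanteau theorem to the bounded continuous truncation $\min\{c,m\}$.
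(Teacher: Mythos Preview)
Your proposal is correct and follows essentially the same route as the paper: both reduce the claim to Theorem \ref{thm:intc}, and the paper's proof is in fact just the two sentences ``Since Assumption \ref{assm:condind} holds, we know from Lemma \ref{lem:control2} that $\lambda^i_0(d\VEC u^i,d\VEC y^i,d\VEC x) = \lambda^i_0(d\VEC u^i|\VEC y^i)\pr{d\VEC x,d\VEC y^i}$. The proof of this theorem then follows from Theorem \ref{thm:intc}.'' Your write-up is slightly more explicit in spelling out that $\min\{c,m\}\in C_b$ and that the second (weak$^*$) conclusion of Theorem \ref{thm:intc} plus Portmanteau handles a merely bounded continuous (not necessarily uniformly continuous) integrand, which is exactly what is needed here.
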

\begin{proof}
Since Assumption \ref{assm:condind} holds, we know from Lemma \ref{lem:control2} that
\beqq{\lambda^i_0(d\VEC u^i,d\VEC y^i,d\VEC x) = \lambda^i_0(d\VEC u^i|\VEC y^i)\pr{d\VEC x,d\VEC y^i}.}
The proof of this theorem then follows from Theorem \ref{thm:intc}.
\end{proof}

This brings us to the following result.
\begin{theorem}\label{thm:sub2}
Under the same hypotheses and notation as in Theorem \ref{thm:sub}, we have
\beqq{\lif{n} \int_{\ALP X\times\ALP Y^{1:N}\times\ALP U^{1:N}} c(\VEC x,\VEC y^{1:N},\VEC u^{1:N})\prod_{i=1}^N\lambda^i_{n}(d\VEC u^i|\VEC y^i)\pr{d\VEC x,d\VEC y^{1:N}}\\
\geq  \int_{\ALP X\times\ALP Y^{1:N}\times\ALP U^{1:N}}c(\VEC x,\VEC y^{1:N},\VEC u^{1:N})\prod_{i=1}^N\lambda^i_{0}(d\VEC u^i|\VEC y^i)\pr{d\VEC x,d\VEC y^{1:N}}.}
\end{theorem}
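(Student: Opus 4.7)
The plan is to combine the truncation result of Theorem \ref{thm:sub} (which handles the bounded case) with the monotone convergence theorem to handle the unbounded cost. Define the bounded continuous truncation $c_m:=\min\{c,m\}$ for each $m\in\Na$, and write
\beqq{\mu_n(d\VEC x,d\VEC y^{1:N},d\VEC u^{1:N}) := \prod_{i=1}^N \lambda^i_n(d\VEC u^i|\VEC y^i)\pr{d\VEC x,d\VEC y^{1:N}}}
for $n\in\Na\cup\{0\}$. Observe that under Assumption \ref{assm:condind}, Lemma \ref{lem:control2} guarantees the limit measures $\lambda^i_0$ admit the required conditional factorization, so $\mu_0$ is a well-defined probability measure and the integrals against it make sense.

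First, since $c\geq c_m\geq 0$ pointwise, for every $n,m\in\Na$ we have $\int c\, d\mu_n\geq \int c_m\, d\mu_n$. Taking $\liminf$ in $n$ on both sides and invoking Theorem \ref{thm:sub} on the right-hand side (which asserts that the integral against $\mu_n$ of the bounded function $c_m$ converges to the integral against $\mu_0$), we conclude that for every fixed $m\in\Na$,
\beqq{\lif{n}\int c\, d\mu_n \;\geq\; \lf{n}\int c_m\, d\mu_n \;=\; \int c_m\, d\mu_0.}

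Second, since $c$ is real-valued and non-negative, $c_m\uparrow c$ pointwise as $m\to\infty$. Applying the monotone convergence theorem with respect to the fixed probability measure $\mu_0$ yields $\int c_m\, d\mu_0 \uparrow \int c\, d\mu_0$ (allowing the value $+\infty$). Taking the supremum over $m\in\Na$ in the inequality above delivers the desired lower-semicontinuity statement. The argument is essentially a Fatou-type bound lifted through the truncation machinery already developed, so I do not anticipate a genuine obstacle; the only subtlety is keeping track of whether $\mu_0$ has the prescribed conditional structure needed for the truncation identity of Theorem \ref{thm:sub} to apply, and that is furnished automatically by Lemma \ref{lem:control2} under the standing assumptions.
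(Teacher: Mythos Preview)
Your proof is correct and follows essentially the same approach as the paper: truncate $c$ to $c_m=\min\{c,m\}$, invoke Theorem~\ref{thm:sub} to pass to the limit in $n$ for each fixed $m$, and then let $m\to\infty$ via monotone convergence. The only difference is cosmetic---you introduce the notation $\mu_n$ and make the role of Lemma~\ref{lem:control2} explicit, whereas the paper leaves that implicit in the hypotheses inherited from Theorem~\ref{thm:sub}.
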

\begin{proof}
See Appendix \ref{app:sub2}.
\end{proof}

The theorem above says that the expected cost functional of the agents in the team is lower-semicontinuous on the space of behavioral strategies. Since the action sets are non-compact, the set of joint measures over observation and action spaces of each agent is non-compact, and we cannot readily use Weierstrass theorem like results to prove the existence of an optimal solution.

We address this issue in the next subsection. In particular, if the cost function has some stronger (coercivity like) property, then using Assumption \ref{ass:one} and Markov's inequality, we can restrict the search of optimal strategies of the agents to compact sets of joint measures over observation and action spaces of the agents.

\subsection{Compactness of a Set of Probability Measures}
In this subsection, we identify a sufficient condition for a set of measures to be precompact in the weak* topology. We use this result later to show that the search for optimal behavioral strategies of the agents in the team problem can be restricted to a weak* precompact space.

Hereafter, we will use $\ALP A$, $\ALP B$ and $\ALP C$ to denote arbitrary Polish spaces. The following theorem gives a necessary and sufficient condition for a subset of probability measures on a Polish space $\ALP A$ to be weak* precompact.
\begin{theorem}[Prohorov's Theorem]\cite[Theorem 8.6.2, p. 202]{bogachev2006b}
A set $\ALP M\subset\wp(\ALP A)$ is weak* precompact if and only if it is tight, that is, for every $\epsilon>0$, there exists a compact set $\SF K_{\epsilon}\subset \ALP A$ such that $\mu(\ALP A\setminus \SF K_{\epsilon})<\epsilon$ for all $\mu\in\ALP M$.
\end{theorem}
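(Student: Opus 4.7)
I would prove the two directions of Prohorov's theorem separately. The reverse implication (weak* precompactness $\Rightarrow$ tightness) relies on the Polish structure of $\ALP A$ through Ulam's theorem and the Portmanteau inequality, while the forward implication (tightness $\Rightarrow$ weak* precompactness) is carried by a diagonal extraction together with a Riesz representation argument. Both rest on the observation that, in a Polish space, a closed set is compact iff it is totally bounded.

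For the reverse direction, fix $\epsilon>0$ and a countable dense set $\{x_i\}\subset\ALP A$; for each $n\in\Na$ let $V_n^N := \bigcup_{i\leq N} B(x_i,1/n)$, an open set increasing to $\ALP A$ as $N\to\infty$. I would claim that for each $n$ there exists $N_n$ with $\mu(V_n^{N_n})> 1-\epsilon/2^n$ uniformly in $\mu\in\ALP M$. Otherwise, for some $n$ and $\delta=\epsilon/2^n>0$ one could pick $\mu_N\in\ALP M$ with $\mu_N(V_n^N)\leq 1-\delta$; weak* precompactness gives a subsequence $\mu_{N_j}\ws\mu$, and for any fixed $M$, using $V_n^M\subset V_n^{N_j}$ for $N_j\geq M$ together with the Portmanteau inequality on the open set $V_n^M$, one would force $\mu(V_n^M)\leq 1-\delta$, contradicting $\mu(V_n^M)\uparrow 1$. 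Setting $K := \bigcap_n \overline{V_n^{N_n}}$ then gives a closed, totally bounded, hence compact subset of $\ALP A$ satisfying $\mu(\ALP A\setminus K)\leq \sum_n\epsilon/2^n = \epsilon$ for every $\mu\in\ALP M$.

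For the forward direction, given an arbitrary sequence $\{\mu_n\}\subset\ALP M$, I would use tightness to produce compact sets $K_1\subset K_2\subset\cdots$ with $\sup_n\mu_n(\ALP A\setminus K_k)<1/k$. Each $K_k$ being a compact metric space, $C(K_k)$ is separable; I would pick a countable family $\{f_{k,m}\}\subset C_b(\ALP A)$ (via Tietze extension from each $K_k$) whose restrictions are dense in $C(K_k)$. A standard diagonal extraction produces a subsequence $\{\mu_{n_j}\}$ along which $\int f_{k,m}\,d\mu_{n_j}$ converges for every $k,m$. The induced positive linear functional on $C_b(\bigcup_k K_k)$ extends by density and is representable, via Riesz, by a finite Borel measure $\mu$; tightness yields $\mu(\ALP A)=1$. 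Weak* convergence against an arbitrary $f\in C_b(\ALP A)$ then follows by splitting the integrals into a contribution from $K_k$ (controlled by density of the restrictions of $\{f_{k,m}\}$ in $C(K_k)$) and a contribution from $\ALP A\setminus K_k$ (controlled by $\infnorm{f}/k$ via tightness), and sending $k\to\infty$.

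\textbf{Main obstacle.} The hardest step is certifying that the limit $\mu$ constructed in the forward direction is a genuine probability measure on $\ALP A$ rather than a defective sub-probability: without tightness, mass can escape ``to infinity'' (as for $\delta_n$ on $\Re$), and the diagonal extraction alone produces only a consistent positive linear functional on test functions essentially supported in some $K_k$. Tightness is precisely what transfers that consistency data on $\bigcup_k K_k$ back into a probability measure on $\ALP A$ and bounds the approximation error when verifying weak* convergence against a generic $f\in C_b(\ALP A)$ whose behavior off every compact set is a priori unconstrained.
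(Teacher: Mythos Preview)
The paper does not prove Prohorov's theorem at all; it is stated as a quoted result from Bogachev's textbook \cite{bogachev2006b} and invoked as a black box (chiefly in the proofs of Theorems \ref{thm:staticteam} and \ref{thm:stateamindobs} and Lemma \ref{lem:q2kcom}). There is therefore no ``paper's own proof'' to compare against.

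Your outline is a correct sketch of the standard textbook proof (essentially the argument in Billingsley's \emph{Convergence of Probability Measures}). One technical point worth tightening in the tightness $\Rightarrow$ precompactness direction: invoking Riesz representation on $C_b\big(\bigcup_k K_k\big)$ is delicate because $\bigcup_k K_k$ need not be locally compact, so the Riesz--Markov theorem does not apply directly there. A cleaner route is to apply Riesz on each compact $K_k$ separately to obtain measures $\nu_k$, check that they are consistent (i.e., $\nu_{k+1}|_{K_k}=\nu_k$) via the density of your test family, and then define $\mu$ as the resulting measure on $\bigcup_k K_k$ extended by zero to the rest of $\ALP A$; tightness then gives $\mu(\ALP A)=1$. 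This is a packaging issue rather than a genuine gap in your strategy.
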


We now define a class of functions and study an important result involving functions in this class.

\begin{definition}[Class $\ic{\ALP A,\ALP B}$]\label{def:icclass}
We say that a non-negative measurable function $\phi:\ALP A\times\ALP B\times\ALP C\rightarrow \Re$ is in class $\ic{\ALP A,\ALP B}$ if $\phi$ satisfies any one of the following two conditions:
\begin{enumerate}
\item For every $M>0$ and for every compact set $\SF K\subset \ALP A$, there exists a compact set $\SF L\subset \ALP B$ such that
\beqq{\inf_{\SF K\times \SF L^\complement\times\ALP C} \phi(\VEC a,\VEC b,\VEC c) \geq M.}
\item For every $M>0$ and every point $\VEC a\in\ALP A$, there exists an open neighborhood $\SF O\subset \ALP A$ of the point $\VEC a\in\ALP A$ and a compact set $\SF L\subset \ALP B$ such that
\beqq{\inf_{\SF O\times \SF L^\complement\times\ALP C} \phi(\VEC a,\VEC b,\VEC c) \geq M.}
\end{enumerate}
We can have $\ALP C=\emptyset$.{\hfill $\Box$}
\end{definition}

A large class of team problems have cost functions that belong to the class of functions defined above, where $\ALP A$ is the space of primitive random variables and $\ALP B$ is an action space of some agent. This class of functions is therefore an important one, and we will exploit this property of cost function to show the existence of an optimal solution in a team. We first identify a few examples of functions in class $\ic{\ALP A,\ALP B}$.

\begin{example}\label{exam:1}
Let $\ALP A=\ALP B = \ALP C=\Re^n$, and define $\phi_1(\VEC a,\VEC b,\VEC c) := \|\VEC b-\VEC a\|+\|\VEC c\|$ and $\phi_2(\VEC a,\VEC b,\VEC c) := \|\VEC b-\VEC a\|^2$. Then, $\phi_1$ and $\phi_2$ are in class $\ic{\ALP A,\ALP B}$. Any non-negative continuous and increasing function on $\Re$ composed with $\phi_1$ or $\phi_2$ is also in class $\ic{\ALP A,\ALP B}$. For example, $\exp(\phi_1(\VEC a,\VEC b,\VEC c))$ and $\exp(\phi_2(\VEC a,\VEC b,\VEC c))$ are in class $\ic{\ALP A,\ALP B}$.
\end{example}

Our next result gives a sufficient condition for a set of measures to be tight, which uses the class of functions introduced in Definition \ref{def:icclass}.
\begin{lemma}[Tightness of a set of Measures]
\label{lem:q2kcom}
Let $\phi:\ALP A\times\ALP B\times\ALP C\rightarrow \Re$ be a non-negative measurable function in the class $\ic{\ALP A,\ALP B}$. Fix $k$ to be a non-negative real number and let $\ALP N\subset\wp(\ALP A)$ be a weak* compact set of measures. Define $\ALP M\subset \wp(\ALP A\times\ALP B\times\ALP C)$ as follows:
\beqq{\ALP M = \bigg\{\mu\in \wp(\ALP A\times\ALP B\times\ALP C):\pj^{\ALP  A}_{\#}\mu \in\ALP N\text{ and }\int \phi\: d\mu\leq k\bigg\}.}
Then, $\pj^{\ALP A\times\ALP B}_{\#}\ALP M$ is a tight set of measures. Furthermore, if $\phi$ is lower semicontinuous, then $\pj^{\ALP A\times\ALP B}_{\#}\ALP M$ is weak* compact.
\end{lemma}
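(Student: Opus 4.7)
The plan is to apply Prohorov's theorem on the product space $\ALP A\times\ALP B$: first I will show tightness of $\pj^{\ALP A\times\ALP B}_{\#}\ALP M$ using the class $\ic{\ALP A,\ALP B}$ property together with Markov's inequality, and then, under the l.s.c.\ hypothesis, I will upgrade this to weak* compactness by a closedness argument.

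For tightness, fix $\epsilon>0$. Weak* compactness of $\ALP N$ supplies, via Prohorov, a compact $\SF K_\epsilon\subset\ALP A$ with $\nu(\SF K_\epsilon^\complement)<\epsilon/2$ for every $\nu\in\ALP N$; because $\pj^{\ALP A}_{\#}\mu\in\ALP N$ for $\mu\in\ALP M$, this gives $\mu(\SF K_\epsilon^\complement\times\ALP B\times\ALP C)<\epsilon/2$ uniformly on $\ALP M$. Next I would apply the $\ic{\ALP A,\ALP B}$ hypothesis with threshold $M:=2k/\epsilon$. Under condition 1 of Definition \ref{def:icclass} this produces a single compact $\SF L\subset\ALP B$ with $\phi\geq M$ on $\SF K_\epsilon\times\SF L^\complement\times\ALP C$; under condition 2 I would cover $\SF K_\epsilon$ by the promised open neighborhoods, extract a finite subcover by compactness of $\SF K_\epsilon$, and take $\SF L$ to be the union of the finitely many compact sets produced at those points, which is still compact. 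In either case Markov's inequality gives
\begin{equation*}
\mu(\SF K_\epsilon\times\SF L^\complement\times\ALP C)\leq \frac{1}{M}\int\phi\,d\mu\leq \frac{k}{M}=\frac{\epsilon}{2},
\end{equation*}
so that $\pj^{\ALP A\times\ALP B}_{\#}\mu(\SF K_\epsilon\times\SF L)>1-\epsilon$ for every $\mu\in\ALP M$. Since $\SF K_\epsilon\times\SF L$ is compact in $\ALP A\times\ALP B$, Prohorov's theorem yields the first conclusion.

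To deduce weak* compactness when $\phi$ is l.s.c., I would prove weak* closedness of $\pj^{\ALP A\times\ALP B}_{\#}\ALP M$. A Portmanteau argument shows that $\mu\mapsto\int\phi\,d\mu$ is weak* l.s.c.\ on $\wp(\ALP A\times\ALP B\times\ALP C)$ for non-negative l.s.c.\ $\phi$, and $\mu\mapsto\pj^{\ALP A}_{\#}\mu$ is weak* continuous, so combined with weak* closedness of $\ALP N$ the set $\ALP M$ itself is weak* closed in $\wp(\ALP A\times\ALP B\times\ALP C)$. Given $\nu_n=\pj^{\ALP A\times\ALP B}_{\#}\mu_n\ws\nu_0$ with $\mu_n\in\ALP M$, I plan to exhibit a limit $\mu_0\in\ALP M$ with $\pj^{\ALP A\times\ALP B}_{\#}\mu_0=\nu_0$ by disintegrating $\mu_n(da,db,dc)=q_n(dc|a,b)\nu_n(da,db)$ and selecting a limiting stochastic kernel $q_0$ so that $\mu_0(da,db,dc):=q_0(dc|a,b)\nu_0(da,db)$ still satisfies $\int\phi\,d\mu_0\leq k$ (inherited in the liminf sense from $\int\phi\,d\mu_n\leq k$); the constraint $\pj^{\ALP A}_{\#}\mu_0\in\ALP N$ is then automatic because $\pj^{\ALP A}_{\#}\nu_n\to\pj^{\ALP A}_{\#}\nu_0$ and $\ALP N$ is closed. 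The main obstacle is precisely this last step: the $\ic{\ALP A,\ALP B}$ condition controls only the $\ALP B$-direction, so $\{\mu_n\}$ itself need not be tight on $\ALP A\times\ALP B\times\ALP C$ and the kernel selection must be executed carefully, either by truncating $\ALP C$ along an exhaustion by compacts or by embedding $\ALP C$ into a compactification so that a joint weak* limit of $\{\mu_n\}$ exists and can be projected back.
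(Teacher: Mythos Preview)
Your tightness argument is essentially the paper's: fix $\epsilon$, use Prohorov on $\ALP N$ to get $\SF K_\epsilon$, invoke the $\ic{\ALP A,\ALP B}$ property (directly under condition~1, via a finite subcover under condition~2) to produce a compact $\SF L\subset\ALP B$ on whose complement $\phi$ is uniformly large, and apply Markov's inequality. The only cosmetic difference is that the paper uses $\epsilon$ where you use $\epsilon/2$ and ends with a $2\epsilon$ bound.

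For the second assertion the paper takes a much shorter route than you do. It simply records that lower semicontinuity of $\phi$ makes $\mu\mapsto\int\phi\,d\mu$ weak* lower semicontinuous (citing \cite[Lemma~4.3]{villani2009}), concludes that $\ALP M$ is weak* closed, and then writes ``therefore $\ALP M$ is weak* compact.'' In other words, the paper does \emph{not} pass through projections or kernel selection at all; it claims compactness of $\ALP M$ itself and leaves the passage to $\pj^{\ALP A\times\ALP B}_{\#}\ALP M$ implicit via continuity of the projection. The obstacle you isolate is real: the first part of the proof only establishes tightness of $\pj^{\ALP A\times\ALP B}_{\#}\ALP M$, not of $\ALP M$, so closedness of $\ALP M$ does not by itself yield compactness of $\ALP M$; and closedness of $\ALP M$ does not transfer to closedness of its continuous image without further argument. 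Your disintegration/compactification plan is one way to try to close this gap, but you are right that it is not routine, and the paper does not supply the missing step either. Since the weak* compactness clause is not invoked anywhere else in the paper (only tightness is used, in Lemmas~\ref{lem:micompact} and~\ref{lem:mittight}), this gap is inconsequential for the main results, but your more careful treatment of the issue is warranted.
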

\begin{proof}
See Appendix \ref{app:q2kcom}.
\end{proof}

Now that we have a sufficient condition on when a set of measures is tight, we can look at the original static team problem in the next subsection.

\subsection{Existence of Optimal Strategies}
We need the following assumption on the cost function of the team.
\begin{assumption}\label{ass:u1n}
The cost function $c:\ALP X\times\ALP Y^{1:N}\times \ALP U^{1:N}\rightarrow\Re^+$ is a non-negative continuous function in class $\ic{\ALP X\times\ALP Y^{1:N},\ALP U^{i}}$ for every $i\in[N]$.
{\hfill$\Box$}
\end{assumption}

It should be noted that the conditions in Assumption \ref{ass:u1n} are not dependent on the control strategies that the agents choose. The following lemma identifies a set of tight measures using Lemma \ref{lem:q2kcom}, with the property that any expected cost below a certain threshold is either achieved by measures in that set or cannot be achieved.

\begin{lemma}\label{lem:micompact}
Assume that Team {\bf ST2} satisfies Assumptions \ref{ass:one} and \ref{assm:condind}. Let $\tilde{\pi}^{1:N}\in\ALP R^{1:N}$ be the set of behavioral strategies of the agents which results in finite expected cost to the team. Consider sets $\ALP P^i\subset\ALP R^i, i\in[N]$ such that there exists a set of behavioral strategies $\pi^{1:N}\in\ALP P^{1:N}$ satisfying $J(\pi^{1:N})\leq J(\tilde{\pi}^{1:N})$. Define
\beqq{\ALP M^i:=\Big\{\lambda^i\in\wp(\ALP X\times\ALP Y^i\times\ALP U^i):\lambda^i(d\VEC u^i,d\VEC y^i,d\VEC x)= \pi^i(d\VEC u^i|\VEC y^i)\pr{d\VEC y^i,d\VEC x},\\
\pi^i\in\ALP P^i\Big\},\quad i\in[N]}
If the cost function of the team satisfies Assumption \ref{ass:u1n}, then $\ALP M^i\subset\wp(\ALP X\times\ALP Y^i\times\ALP U^i)$ is a tight set of measures for all $i\in[N]$.
\end{lemma}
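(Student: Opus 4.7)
The plan is a two-step reduction: first apply Lemma \ref{lem:q2kcom} on the full product $\ALP X \times \ALP Y^{1:N} \times \ALP U^{1:N}$ to obtain tightness of the induced joint measures there, then transport this tightness to the smaller space $\ALP X \times \ALP Y^i \times \ALP U^i$ by a continuous coordinate projection. In this way the coercivity hypothesis \ref{ass:u1n} is invoked exactly once, at the full-space level, where Lemma \ref{lem:q2kcom} applies directly.

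Fix $i \in [N]$ and instantiate Lemma \ref{lem:q2kcom} with $\ALP A := \ALP X \times \ALP Y^{1:N}$, $\ALP B := \ALP U^i$, $\ALP C := \ALP U^{-i}$, $\phi := c$, $k := J(\tilde{\pi}^{1:N}) < \infty$ (finite by Assumption \ref{ass:one}), and $\ALP N := \{\pr{d\VEC x, d\VEC y^{1:N}}\}$, a singleton and hence trivially weak* compact. By Assumption \ref{ass:u1n}, $c$ lies in $\ic{\ALP A, \ALP B}$. For each profile $\pi^{1:N} \in \ALP P^{1:N}$, the induced joint measure
\[
\mu^{\pi^{1:N}}(d\VEC x, d\VEC y^{1:N}, d\VEC u^{1:N}) := \Big(\prod_{j=1}^N \pi^j(d\VEC u^j|\VEC y^j)\Big)\, \pr{d\VEC x, d\VEC y^{1:N}}
\]
has $\ALP A$-marginal in $\ALP N$ and satisfies $\int c\, d\mu^{\pi^{1:N}} = J(\pi^{1:N}) \le k$, so $\mu^{\pi^{1:N}}$ lies in the set $\ALP M$ of that lemma. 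Consequently, the family $\{\pj^{\ALP A \times \ALP B}_{\#} \mu^{\pi^{1:N}} : \pi^{1:N} \in \ALP P^{1:N}\} \subset \wp(\ALP X \times \ALP Y^{1:N} \times \ALP U^i)$ is tight.

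To conclude, I would push forward once more through the continuous coordinate projection $\Pi^i : (\VEC x, \VEC y^{1:N}, \VEC u^i) \mapsto (\VEC x, \VEC y^i, \VEC u^i)$ that forgets $\VEC y^{-i}$. A short disintegration using the $\FLD B(\ALP Y^i)$-measurability of $\pi^i(\cdot|\cdot)$ and marginal consistency yields $\Pi^i_{\#} \mu^{\pi^{1:N}} = \pi^i(d\VEC u^i|\VEC y^i)\, \pr{d\VEC x, d\VEC y^i}$, and varying $\pi^{1:N}$ over $\ALP P^{1:N}$ recovers exactly $\ALP M^i$. Since tightness is preserved under push-forward by continuous maps (a compact witness $\SF K$ in the source has compact image $\Pi^i(\SF K)$ in the target with no loss of mass), $\ALP M^i$ is tight.

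The substantive work is done entirely by Lemma \ref{lem:q2kcom}; the main obstacle here is purely bookkeeping, namely ensuring that each $\pi^i \in \ALP P^i$ appearing in the definition of $\ALP M^i$ can be embedded into a full profile in $\ALP P^{1:N}$ obeying the cost bound $k$. This is exactly the content of the hypothesis on $\ALP P^{1:N}$. Assumption \ref{assm:condind} enters only implicitly, to guarantee that $\pr{d\VEC x, d\VEC y^{1:N}}$ is a well-defined Borel probability measure on $\ALP A$ with the requisite product structure.
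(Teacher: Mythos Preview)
Your proof is correct and follows essentially the same route as the paper's: apply Lemma \ref{lem:q2kcom} with $\ALP A=\ALP X\times\ALP Y^{1:N}$, $\ALP B=\ALP U^i$, $\ALP C=\ALP U^{-i}$, $k=J(\tilde\pi^{1:N})$, then project down to $\ALP X\times\ALP Y^i\times\ALP U^i$. You spell out two steps the paper leaves implicit---the second projection that discards $\VEC y^{-i}$ (and why tightness survives it) and the need for each $\pi^i\in\ALP P^i$ to extend to a full profile meeting the cost bound---but these are refinements of the same argument rather than a different approach.
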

\begin{proof}
The statement of the lemma readily follows from Lemma \ref{lem:q2kcom}. Define $\ALP M:=\{\mu\in\wp(\ALP X\times\ALP Y^{1:N}\times\ALP U^{1:N}):\int c\:d\mu\leq J(\tilde{\pi}^{1:N})\}$. For every $i\in[N]$, notice that any $\lambda^i\in\ALP M^i$ satisfies $\lambda^i = \pj^{\ALP X\times\ALP Y^i\times\ALP U^i}_{\#}\mu$ for some $\mu\in\ALP M$. Since $c$ is in class $\ic{\ALP X\times\ALP Y^{1:N},\ALP U^{i}}$, by Lemma \ref{lem:q2kcom}, $\ALP M^i$ is tight.
\end{proof}

Thus, we have identified pre-compact sets of joint measures $\ALP M^i, i\in[N]$ which include the optimal joint measures, if they exist. This brings us to the following main result of the section.
\begin{theorem}\label{thm:stateamindobs}
Assume that the cost function of Team {\bf ST2} satisfies Assumption \ref{ass:u1n}. If Assumptions \ref{ass:one} and \ref{assm:condind} hold, then Team {\bf ST2} admits an optimal solution in deterministic strategies.
\end{theorem}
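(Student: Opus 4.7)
The plan is to mimic the four-step architecture of the proof of Theorem \ref{thm:staticteam}, but replace the compactness-of-action-spaces argument (which furnished tightness for free) by the coercivity-based tightness from Lemma \ref{lem:micompact}, and replace weak* continuity of $J$ along the minimizing sequence (which held because $c$ was bounded) by the weak* lower semicontinuity provided by Theorem \ref{thm:sub2}.

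First, invoke Assumption \ref{ass:one} to pick a reference profile $\tilde{\pi}^{1:N}$ with $J(\tilde{\pi}^{1:N})<\infty$, and take a minimizing sequence $\{\pi^{1:N}_n\}_{n\in\Na}\subset\ALP R^{1:N}$ with $J(\pi^{1:N}_n)\downarrow \inf_{\pi^{1:N}} J(\pi^{1:N})$; after discarding finitely many terms, we may assume $J(\pi^{1:N}_n)\le J(\tilde{\pi}^{1:N})$ for every $n$. Define, as in the proof of Theorem \ref{thm:staticteam}, the induced joint measures
\[
\lambda^i_n(d\VEC u^i,d\VEC y^i,d\VEC x) := \pi^i_n(d\VEC u^i|\VEC y^i)\,\pr{d\VEC x,d\VEC y^i},\qquad i\in[N].
\]
Taking $\ALP P^i=\{\pi^i_n : n\in\Na\}$, Lemma \ref{lem:micompact} (whose hypotheses are met because $c\in \ic{\ALP X\times\ALP Y^{1:N},\ALP U^i}$ for each $i$ and Assumption \ref{assm:condind} is in force) shows that each family $\{\lambda^i_n\}_{n\in\Na}$ lies in a tight, hence weak* precompact, subset of $\wp(\ALP X\times\ALP Y^i\times\ALP U^i)$.

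Second, extract convergent subsequences in the $N$ coordinates successively (exactly as in Step 2 of Theorem \ref{thm:staticteam}); a finite diagonal argument yields a single index sequence $\{n_l\}$ along which $\lambda^i_{n_l}\ws\lambda^i_0$ for every $i\in[N]$. Because Assumption \ref{assm:condind} provides the representation $\pr{d\VEC x|\VEC y^i}=\rho^i(\VEC y^i,\VEC x)\nu_{\ALP X}(d\VEC x)$ with $(\rho^i,\nu_{\ALP X})$ satisfying Condition \textbf{C1}, Lemma \ref{lem:control2} applies to each coordinate and guarantees that the conditional independence structure is preserved in the limit, i.e.\ $\lambda^i_0(d\VEC u^i,d\VEC y^i,d\VEC x)=\lambda^i_0(d\VEC u^i|\VEC y^i)\pr{d\VEC x,d\VEC y^i}$. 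Set $\pi^i_0(\cdot|\VEC y^i):=\lambda^i_0(\cdot|\VEC y^i)$; this is a bona fide behavioral strategy for Agent $i$, and $\pi^{1:N}_0:=(\pi^1_0,\ldots,\pi^N_0)\in\ALP R^{1:N}$.

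Third, apply Theorem \ref{thm:sub2} (whose hypotheses are exactly what we have arranged) to conclude
\[
J(\pi^{1:N}_0) \le \lif{l} J(\pi^{1:N}_{n_l}) = \inf_{\pi^{1:N}\in\ALP R^{1:N}} J(\pi^{1:N}),
\]
so $\pi^{1:N}_0$ attains the infimum. Finally, invoke Blackwell's irrelevant-information theorem agent-by-agent, as in Step 4 of the proof of Theorem \ref{thm:staticteam}, to replace each $\pi^i_0$ by a deterministic strategy without changing the expected cost, yielding a team-optimal deterministic solution.

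The main obstacle is really bookkeeping rather than mathematics: all the hard analytic work (the uniform equicontinuity of Lemma \ref{lem:unifequi2}, the identification of the limit as an induced measure in Lemma \ref{lem:control2}, the lower-semicontinuity upgrade from $\min\{c,m\}$ to $c$ via monotone convergence in Theorem \ref{thm:sub2}, and the tightness of the $\ALP M^i$ from the coercivity class $\ic{\cdot,\cdot}$) has already been done. One only needs to check that the minimizing-sequence measures fit the hypotheses of Lemma \ref{lem:micompact} (which requires the uniform cost bound $J(\pi^{1:N}_n)\le J(\tilde{\pi}^{1:N})$, hence the early truncation of the minimizing sequence) and that the diagonal extraction does not destroy the finiteness of the cost bound used in the tightness argument; both are immediate.
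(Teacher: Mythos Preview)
Your proof is correct and follows essentially the same approach as the paper's own proof: invoke Lemma \ref{lem:micompact} for tightness of the induced joint measures along a minimizing sequence bounded by $J(\tilde{\pi}^{1:N})$, extract weak* convergent subsequences, use Lemma \ref{lem:control2} to preserve the informational constraint in the limit, apply Theorem \ref{thm:sub2} for lower semicontinuity, and finish with Blackwell's theorem. The only cosmetic difference is that the paper phrases the tightness step via the sets $\ALP M^i$ defined in Lemma \ref{lem:micompact}, whereas you specialize directly to $\ALP P^i=\{\pi^i_n\}$, which is equivalent.
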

\begin{proof}
Let $\tilde{\pi}^{1:N}\in\ALP R^{1:N}$ be the set of behavioral strategies  of the agents which results in finite expected cost to the team. From Lemma \ref{lem:micompact}, we know that there exist tight sets of measures $\ALP M^i\subset\wp(\ALP X\times\ALP Y^i\times\ALP U^i), i\in[N]$ that contain the optimal joint measures, if they exist.
Consider a sequence of behavioral strategies $\{\pi^{1:N}_n\}_{n\in\Na}\subset\ALP R^{1:N}$ that satisfies
\beqq{J(\pi^{1:N}_n)\leq J(\tilde{\pi}^{1:N}),\quad \text{ and }\quad \lf{n}J(\pi^{1:N}_n) = \inf_{\pi^{1:N}\in\ALP R^{1:N}} J(\pi^{1:N}).}
Define $\lambda^i_n(d\VEC u^i,d\VEC y^i,d\VEC x) := \pi^i_n(d\VEC u^i|\VEC y^i)\pr{d\VEC y^i,d\VEC x}$ for $i\in[N]$ and $n\in\Na$, and notice that $\{\lambda^i_n\}_{n\in\Na}\subset\ALP M^i$. Since $\{\lambda^i_n\}_{n\in\Na}$ is a tight sequence of measures, we know that there exists a weak* convergent subsequence of measures. For every $i\in[N]$, let $\{\lambda^{i}_{n_k}\}_{k\in\Na}$ be the weak* convergent subsequence of measures converging to $\lambda^{i}_0$. From Lemma \ref{lem:control2}, we know that
\beqq{\lambda^i_0(d\VEC u^i,d\VEC y^i,d\VEC x) = \lambda^i_0(d\VEC u^i|\VEC y^i)\pr{d\VEC y^i,d\VEC x}}
for all $i\in[N]$, which means that the conditional independence property is retained in the limit. Let $\pi^i_0\in\ALP R^i$ be such that $\pi^i_0(d\VEC u^i|\VEC y^i) = \lambda^i_0(d\VEC u^i|\VEC y^i)$. From the result of Theorem \ref{thm:sub2}, we conclude that $\lif{k}J(\pi^{1:N}_{n_k}) \geq J(\pi^{1:N}_0)$. Thus, optimal behavioral strategies of the agents exist, and the optimal behavioral strategy of Agent $i$ is the conditional measure $\lambda^i_0(d\VEC u^i|\VEC y^i)$.

Moreover, applying Blackwell's irrelevant information theorem \cite[p. 457]{yukselbook}, there exists a set of deterministic strategies which achieve the same cost as the one achieved using optimal behavioral strategies of the agents. This completes the proof of the theorem.
\end{proof}

\begin{corollary}\label{cor:stateamindobsdeg}
Consider a two-agent static team {\bf ST2}. Assume that the cost function of Team {\bf ST2} satisfies Assumption \ref{ass:u1n}. If Assumptions \ref{ass:one} and \ref{assm:degraded} hold, then Team {\bf ST2} admits an optimal solution in deterministic strategies.
\end{corollary}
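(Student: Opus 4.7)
The plan is to mimic the proof of Theorem \ref{thm:stateamindobs} essentially line by line, replacing each ingredient that relied on conditional independence of observations (Assumption \ref{assm:condind}) with its degraded-observation analog provided by Assumption \ref{assm:degraded} and Lemma \ref{lem:intcdegraded}. First, I would reproduce Lemma \ref{lem:micompact} in the present setting: given $\tilde\pi^{1:2}\in\ALP R^{1:2}$ with finite expected cost (Assumption \ref{ass:one}), the set $\ALP M:=\{\mu\in\wp(\ALP X\times\ALP Y^{1:2}\times\ALP U^{1:2}):\int c\,d\mu\leq J(\tilde\pi^{1:2})\}$ is tight, and so are the marginals $\ALP M^1:=\pj^{\ALP X\times\ALP Y^1\times\ALP U^1}_{\#}\ALP M$ and $\ALP M^2:=\pj^{\ALP X\times\ALP Y^{1:2}\times\ALP U^2}_{\#}\ALP M$. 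The argument of Lemma \ref{lem:micompact} invokes only Lemma \ref{lem:q2kcom} and the class-$\ic{\cdot,\cdot}$ structure of the cost $c$ from Assumption \ref{ass:u1n}, so it transfers unchanged; the information structure plays no role here.

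Second, pick a minimizing sequence $\{\pi^{1:2}_n\}_{n\in\Na}\subset\ALP R^{1:2}$ with $J(\pi^{1:2}_n)\le J(\tilde\pi^{1:2})$ and $J(\pi^{1:2}_n)\downarrow \inf_{\pi^{1:2}}J(\pi^{1:2})$. Setting
\beqq{\lambda^1_n(d\VEC u^1,d\VEC y^1,d\VEC x):=\pi^1_n(d\VEC u^1|\VEC y^1)\pr{d\VEC x,d\VEC y^1},\quad \lambda^2_n(d\VEC u^2,d\VEC y^{1:2},d\VEC x):=\pi^2_n(d\VEC u^2|\VEC y^2)\pr{d\VEC x,d\VEC y^{1:2}},}
we have $\lambda^i_n\in\ALP M^i$, so by Prohorov's theorem we can extract a common subsequence $\{n_k\}$ along which $\lambda^1_{n_k}\ws\lambda^1_0$ and $\lambda^2_{n_k}\ws\lambda^2_0$.

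Third, verify that the limits retain the required conditional structure. For $\lambda^1_0$, Lemma \ref{lem:control2} applied with the pair $(\rho^1,\nu_{\ALP X})$---which satisfies Condition {\bf C1} by Assumption \ref{assm:degraded}---gives $\lambda^1_0(d\VEC u^1,d\VEC y^1,d\VEC x)=\lambda^1_0(d\VEC u^1|\VEC y^1)\pr{d\VEC x,d\VEC y^1}$. For $\lambda^2_0$ we need the stronger statement that $\VEC U^2$ is conditionally independent of $(\VEC X,\VEC Y^1)$ given $\VEC Y^2$. This is a degraded-case instance of Lemma \ref{lem:control2}: reading that lemma with $\ALP A:=\ALP X\times\ALP Y^1$, $\ALP B:=\ALP Y^2$, $\ALP C:=\ALP U^2$, and taking $\rho$, $\nu$ to be $\rho^2(\VEC y^2,\VEC y^1,\VEC x)$ and $\nu_{\ALP Y^1}(d\VEC y^1)\nu_{\ALP X}(d\VEC x)$ (so that $\pr{d\VEC y^1,d\VEC x|\VEC y^2}=\rho^2\,\nu_{\ALP Y^1}\otimes\nu_{\ALP X}$), the very same proof shows $\lambda^2_0(d\VEC u^2,d\VEC y^{1:2},d\VEC x)=\lambda^2_0(d\VEC u^2|\VEC y^2)\pr{d\VEC x,d\VEC y^{1:2}}$; the hypothesis on $(\rho^2,\nu_{\ALP X}\times\nu_{\ALP Y^1})$ is exactly Condition {\bf C1}, which is supplied by Assumption \ref{assm:degraded}. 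Define $\pi^1_0(d\VEC u^1|\VEC y^1):=\lambda^1_0(d\VEC u^1|\VEC y^1)$, $\pi^2_0(d\VEC u^2|\VEC y^2):=\lambda^2_0(d\VEC u^2|\VEC y^2)$, which are legitimate behavioral strategies.

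Fourth, lower semicontinuity of $J$: for each $m\in\Na$, Lemma \ref{lem:intcdegraded} applied to the bounded uniformly continuous truncation $\min\{c,m\}\in U_b$ (or its standard mollification if needed to secure uniform continuity) yields
\beqq{\lf{k}\int \min\{c,m\}\prod_{i=1}^{2}\lambda^i_{n_k}(d\VEC u^i|\VEC y^i)\pr{d\VEC x,d\VEC y^{1:2}}= \int \min\{c,m\}\prod_{i=1}^{2}\lambda^i_{0}(d\VEC u^i|\VEC y^i)\pr{d\VEC x,d\VEC y^{1:2}},}
and letting $m\to\infty$ via monotone convergence together with $\min\{c,m\}\le c$ gives $\lif{k}J(\pi^{1:2}_{n_k})\ge J(\pi^{1:2}_0)$, exactly as in Theorem \ref{thm:sub2}. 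Since $J(\pi^{1:2}_{n_k})\to\inf J$, the strategy $\pi^{1:2}_0$ is team-optimal in behavioral strategies. Finally, invoke Blackwell's irrelevant information theorem \cite[p.~457]{yukselbook} once for each agent to pass from optimal behavioral strategies to optimal deterministic strategies without changing the cost.

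The main obstacle is verifying the degraded analog of Lemma \ref{lem:control2} for $\lambda^2_0$, since the limiting conditional measure must be shown to factor through $\VEC Y^2$ alone rather than through $(\VEC Y^{1:2},\VEC X)$; the key point, and the reason Assumption \ref{assm:degraded} was stated the way it was, is that Condition {\bf C1} is imposed not only on the marginal pair $(\eta^2,\nu_{\ALP Y^2})$ but also on the joint pair $(\rho^2,\nu_{\ALP X}\times\nu_{\ALP Y^1})$, which is precisely what makes the proof of Lemma \ref{lem:control2} go through in this enlarged product space. Once this degraded factorization is in hand, everything else is a routine transcription of the proof of Theorem \ref{thm:stateamindobs}.
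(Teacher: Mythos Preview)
Your proposal is correct and follows essentially the same approach as the paper, which simply states that the proof follows from arguments similar to those used in Theorems~\ref{thm:degradedstatic} and~\ref{thm:stateamindobs}. You have correctly identified and spelled out the required substitutions---in particular, using Lemma~\ref{lem:control2} with the pair $(\rho^2,\nu_{\ALP X}\times\nu_{\ALP Y^1})$ to preserve the conditional independence of $\VEC U^2$ from $(\VEC X,\VEC Y^1)$ given $\VEC Y^2$ in the limit, and invoking Lemma~\ref{lem:intcdegraded} (via its weak* convergence conclusion, so no mollification is needed for $\min\{c,m\}$) in place of Theorem~\ref{thm:intc} to obtain the degraded analog of Theorems~\ref{thm:sub} and~\ref{thm:sub2}.
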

\begin{proof}
The proof follows from arguments similar to those used in the proof of Theorems \ref{thm:degradedstatic} and \ref{thm:stateamindobs}.
\end{proof}

\begin{corollary}\label{cor:stateamindobs}
Assume that the cost function of Team {\bf ST2} is continuous in its arguments and the action spaces of the agents are compact subsets of Polish spaces. Furthermore, assume that Team {\bf ST2} satisfies Assumption \ref{ass:one}. If either Assumption \ref{assm:condind} or Assumption \ref{assm:degraded} holds for Team {\bf ST2}, then the team admits an optimal solution in deterministic strategies.
\end{corollary}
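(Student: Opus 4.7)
The plan is to observe that the compactness of each action space $\ALP U^i$ forces the cost function to satisfy Assumption~\ref{ass:u1n} automatically, so that the corollary reduces to either Theorem~\ref{thm:stateamindobs} or Corollary~\ref{cor:stateamindobsdeg}, depending on which of the two structural assumptions on the observation kernels is in force. No new tightness or equicontinuity arguments are required; the bulk of the work is a verification.

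First, I would check that Assumption~\ref{ass:u1n} holds. Fix an index $i\in[N]$ and identify $\ALP A := \ALP X\times\ALP Y^{1:N}$, $\ALP B := \ALP U^i$, and $\ALP C := \ALP U^{-i}$ in the sense of Definition~\ref{def:icclass}. Since $\ALP U^i$ is compact, take $\SF L := \ALP U^i$ itself; then $\SF L^\complement = \emptyset$ inside $\ALP U^i$. For any $M>0$ and any $\VEC a\in\ALP A$, choosing an arbitrary open neighborhood $\SF O$ of $\VEC a$ and this $\SF L$, the infimum of $c$ over $\SF O\times\SF L^\complement\times\ALP C$ is vacuously $+\infty\ge M$, so condition (2) of Definition~\ref{def:icclass} is satisfied. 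As $c$ is non-negative and continuous by hypothesis, it lies in $\ic{\ALP X\times\ALP Y^{1:N},\ALP U^i}$ for every $i\in[N]$, which is precisely Assumption~\ref{ass:u1n}.

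Next, I would invoke the appropriate existence result in each case. If Assumption~\ref{assm:condind} holds, then all hypotheses of Theorem~\ref{thm:stateamindobs} are met (Assumption~\ref{ass:one} is assumed, Assumption~\ref{ass:u1n} has just been verified, and Assumption~\ref{assm:condind} is in force), and the team admits a team-optimal solution in deterministic strategies. If instead Assumption~\ref{assm:degraded} holds, then in the two-agent case the hypotheses of Corollary~\ref{cor:stateamindobsdeg} are satisfied by the same reasoning, yielding the same conclusion.

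The only subtlety is the convention that an infimum over the empty set equals $+\infty$; once this is accepted, the corollary is an immediate consequence of the previously established machinery, and I do not foresee any real obstacle in the argument. In particular, the step that typically causes difficulty in the non-compact setting, namely confining the search for optimal joint measures to a weak* precompact family via Lemma~\ref{lem:q2kcom}, is here trivialized by the compactness of the action spaces themselves.
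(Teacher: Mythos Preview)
Your proposal is correct and matches the paper's own proof: the paper simply observes that compactness of the action spaces makes Assumption~\ref{ass:u1n} hold automatically and then invokes Theorem~\ref{thm:stateamindobs}. Your write-up is in fact slightly more explicit than the paper's, both in spelling out the vacuous-infimum argument for why $c\in\ic{\ALP X\times\ALP Y^{1:N},\ALP U^i}$ and in separately invoking Corollary~\ref{cor:stateamindobsdeg} for the degraded-information case.
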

\begin{proof}
If the action spaces of the agents are compact, then the Assumption \ref{ass:u1n} on the cost function holds automatically. Then, we apply the result of Theorem \ref{thm:stateamindobs} to establish the statement.
\end{proof}

In the next section, we use Witsenhausen's static reduction technique to convert a class of dynamic team problems to static teams with independent observations, and then apply the result proved in this section to conclude the existence of optimal strategies in that class of dynamic team problems.

\section{Dynamic Teams}\label{sec:staticreduction}
It was shown in \cite{wit1988} that a large class of $N$-agent $T$-time step dynamic stochastic control problems with certain information structures can be equivalently written as $NT$-agent static optimization problems. In order to define the equivalent static problem, we introduce the following notation:
\beqq{\Omega_0 =\ALP X_1\times\ALP W^0_{1:T}, \qquad \Omega^i_t = \ALP W^i_t, \qquad i\in[N], t\in[T].}
We let $\omega_0$ and $\omega^i_t$ denote generic elements of $\Omega_0$ and $\Omega^i_t$, respectively. Furthermore, we assume that $\Omega_0$ and $\Omega^i_t$ are measure spaces, endowed with the probability measures $\xi_{\Omega_0}$ and $\xi_{\Omega^i_t}$, respectively, which are defined as
\beqq{\xi_{\Omega_0} := \xi_{\ALP X_1}\xi_{\ALP W^0_1}\ldots\xi_{\ALP W^0_T},\qquad \xi_{\Omega^i_t} := \xi_{\ALP W^i_t},\quad i\in[N], t\in[T].}
With this notation, the cost function of the team problem is written as $c:\Omega_0\times\ALP Y^{1:N}_{1:T}\times\ALP U^{1:N}_{1:T}\rightarrow\Re^+$, and we assume that it is continuous. We assume that each agent only observes $\VEC Y^i_t$, that is, its information set is a singleton.


Now, using the static reduction argument, we can transform the original problem to a static team problem with a different cost function. Toward this end, let us rewrite the observations of the agents as
\beq{\label{eqn:obsit}\VEC y^i_t = h^i_t(\omega_0,\omega^i_t, \VEC u^{1:N}_{1:t-1}).}
Note that due to Assumption \ref{ass:stf}, the functions $h^i_t, i\in[N],t\in[T]$ are continuous maps of their arguments. We now make the following assumption.

\begin{assumption}\label{assm:abscont}
For every $(i,t)\in [N]\times[T]$, there exists a probability measure $\nu^i_t\in\wp(\ALP Y^i_t)$ and a continuous function $\varphi^i_t:\ALP Y^i_t\times \Omega_0\times\ALP U^{1:N}_{1:t-1}\rightarrow\Re^+$ such that
\beqq{\pr{\VEC y^i_t\in\SF Y|\omega_0,\VEC u^{1:N}_{1:t-1}}  = \int_{\SF Y} \varphi^i_t(\VEC y^i_t;\omega_0,\VEC u^{1:N}_{1:t-1}) \nu^i_t(d\VEC y^i_t)\quad\text{ for all } \SF Y\in\FLD B(\ALP Y^i_t).}
Define $\varphi:\Omega_0\times\ALP Y^{1:N}_{1:T}\times\ALP U^{1:N}_{1:T}\rightarrow\Re^+$ as
\beqq{\varphi(\omega_0,\VEC y^{1:N}_{1:T},\VEC u^{1:N}_{1:T}):=\prod_{i=1}^N\prod_{t=1}^T\varphi^i_t(\VEC y^i_t;\omega_0,\VEC u^{1:N}_{1:t-1}).}
By definition, $\varphi$ is a continuous function of its arguments.{\hfill $\Box$}
\end{assumption}

In the next lemma, we state a sufficient condition on the mapping $h^i_t$ and the noise statistics $\xi_{\Omega^i_t}$ such that the above assumption is satisfied.
\begin{lemma}\label{lem:gaussiannoise}
Assume that all state, action, observation and noise spaces are Euclidean spaces of appropriate dimensions. For all $i\in[N]$ and $t\in[T]$, let
\beqq{h^i_t(\omega_0,\omega^i_t, \VEC u^{1:N}_{1:t-1}) := \check{h}^i_t(\omega_0, \VEC u^{1:N}_{1:t-1})+\omega^i_t,}
where $\check{h}^i_t$ is a continuous map of its arguments. If $\xi_{\Omega^i_t}$ admits a zero-mean Gaussian density function $\eta^i_t$ for all $i\in[N], t\in[T]$, then Assumption \ref{assm:abscont} holds for the dynamic team problem.
\end{lemma}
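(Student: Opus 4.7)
The plan is to exhibit, for each pair $(i,t)$, an explicit reference measure $\nu^i_t$ and Radon–Nikodym density $\varphi^i_t$, and then verify that $\varphi^i_t$ is continuous and integrates correctly. Since $\VEC y^i_t = \check h^i_t(\omega_0,\VEC u^{1:N}_{1:t-1}) + \omega^i_t$ is a translation of the zero-mean Gaussian $\omega^i_t$, the conditional law of $\VEC y^i_t$ given $(\omega_0,\VEC u^{1:N}_{1:t-1})$ has Lebesgue density $\eta^i_t(\VEC y^i_t - \check h^i_t(\omega_0,\VEC u^{1:N}_{1:t-1}))$. This is the single fact that drives the whole argument.

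I would pick $\nu^i_t$ to be a reference Gaussian probability measure on $\ALP Y^i_t$ with a strictly positive continuous density $\psi^i_t$ with respect to Lebesgue measure (for concreteness, the standard Gaussian, or simply the law of $\omega^i_t$ itself). Then define
\beqq{
\varphi^i_t(\VEC y^i_t;\omega_0,\VEC u^{1:N}_{1:t-1})
:= \frac{\eta^i_t\!\left(\VEC y^i_t - \check h^i_t(\omega_0,\VEC u^{1:N}_{1:t-1})\right)}{\psi^i_t(\VEC y^i_t)}.
}
Because Gaussian densities are everywhere strictly positive and continuous on their Euclidean space, the denominator is bounded away from zero on compact sets, and the numerator is continuous in $(\VEC y^i_t,\omega_0,\VEC u^{1:N}_{1:t-1})$ by composition with the continuous map $\check h^i_t$. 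Hence $\varphi^i_t$ is continuous and non-negative on $\ALP Y^i_t \times \Omega_0\times \ALP U^{1:N}_{1:t-1}$, as required.

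The integral identity in Assumption \ref{assm:abscont} then follows by a direct computation: for every Borel $\SF Y\subset\ALP Y^i_t$,
\beqq{
\int_{\SF Y} \varphi^i_t(\VEC y^i_t;\omega_0,\VEC u^{1:N}_{1:t-1})\,\nu^i_t(d\VEC y^i_t)
= \int_{\SF Y} \eta^i_t\!\left(\VEC y^i_t - \check h^i_t(\omega_0,\VEC u^{1:N}_{1:t-1})\right) d\VEC y^i_t
= \pr{\VEC y^i_t \in \SF Y \mid \omega_0,\VEC u^{1:N}_{1:t-1}},
}
where the first equality cancels $\psi^i_t$ against $\nu^i_t$'s density, and the second uses translation invariance of Lebesgue measure together with the fact that $\omega^i_t$ has density $\eta^i_t$. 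This verifies Assumption \ref{assm:abscont}.

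I do not anticipate a genuine obstacle: the argument is essentially a change-of-variables plus a Radon–Nikodym reference-measure substitution. The only technical point worth stating carefully is why $\varphi^i_t$ is continuous in the product of all three variables simultaneously, which reduces to the joint continuity of $\eta^i_t(\cdot - \check h^i_t(\cdot,\cdot))$ (immediate since $\eta^i_t$ is continuous on Euclidean space and $\check h^i_t$ is continuous by hypothesis) and the strict positivity of the denominator $\psi^i_t$. Once these are in hand, the construction of $\varphi$ as the product over $(i,t)$ immediately inherits continuity.
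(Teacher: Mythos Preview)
Your proposal is correct and essentially identical to the paper's proof: the paper takes $\nu^i_t$ to be the law of $\omega^i_t$ itself (so $\psi^i_t=\eta^i_t$) and defines $\varphi^i_t(\VEC y^i_t;\omega_0,\VEC u^{1:N}_{1:t-1}) = \eta^i_t(\VEC y^i_t-\check{h}^i_t(\omega_0, \VEC u^{1:N}_{1:t-1}))/\eta^i_t(\VEC y^i_t)$, then verifies continuity and the integral identity exactly as you do. Your formulation with a generic Gaussian reference density $\psi^i_t$ is a harmless generalization of the same idea.
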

\begin{proof}
Note that $\eta^i_t$ is strictly positive at all points in its domain $\Omega^i_t$, and $\Omega^i_t = \ALP Y^i_t$. For every $i\in[N]$ and $t\in[T]$, define $\varphi^i_t$ and $\nu^i_t$ as
\beqq{\varphi^i_t(\VEC y^i_t;\omega_0,\VEC u^{1:N}_{1:t-1}) := \frac{\eta^i_t(\VEC y^i_t-\check{h}^i_t(\omega_0, \VEC u^{1:N}_{1:t-1}))}{\eta^i_t(\VEC y^i_t)},\qquad \nu^i_t(d\VEC y^i_t) = \eta^i_t(\VEC y^i_t)d\VEC y^i_t.}
Since $h^i_t$ is continuous, $\check{h}^i_t$ is a continuous map for all $i\in[N], t\in[T]$. Thus, $\varphi^i_t$ is a continuous map of its arguments. Furthermore, $\varphi^i_t$ is strictly positive in its domain. It is easy to see that with this definition,
 \beqq{\pr{\VEC y^i_t\in\SF Y|\omega_0,\VEC u^{1:N}_{1:t-1}}  = \int_{\SF Y} \varphi^i_t(\VEC y^i_t;\omega_0,\VEC u^{1:N}_{1:t-1}) \nu^i_t(d\VEC y^i_t)\quad\text{ for all } \SF Y\in\FLD B(\ALP Y^i_t),}
which establishes the statement.
\end{proof}

We now define the reduced static team problem corresponding to the dynamic team described above.
\begin{definition}[Reduced static team problem]\label{def:redteam}
Consider the $NT$-agent static team problem with the agents indexed as $(i,t)$. Agent $(i,t)$ observes a random variable $\VEC Y^i_t$ with probability measure $\nu^i_t$, which is independent of observations of all other agents. Agent $(i,t)$, based on the realization $\VEC y^i_t$ of its observation, chooses a control action $\VEC u^i_t$. The cost function for the team is given by
\beqq{c(\omega_0,\VEC y^{1:N}_{1:T},\VEC u^{1:N}_{1:T})\varphi(\omega_0,\VEC y^{1:N}_{1:T},\VEC u^{1:N}_{1:T}).}
We call the static team problem thus defined as reduced static team problem and refer to it as RST problem. {\hfill $\Box$}
\end{definition}

We now recall the following result from \cite{wit1988}, which shows that any dynamic problem and its corresponding reduced static problem are equivalent optimization problems over the same space of strategies of the agents.
\begin{theorem}[\cite{wit1988}]\label{thm:wit1988}
Let $J:\ALP R^{1:N}_{1:T}\rightarrow\Re_+$ be the expected cost functional of the dynamic team problem, and $J_{RST}:\ALP R^{1:N}_{1:T}\rightarrow\Re_+$ be the expected cost functional of the corresponding reduced static team problem, defined as
\beqq{J_{RST}(\pi^{1:N}_{1:T}) = \int c\:\varphi\: \prod_{i=1}^N\prod_{t=1}^T \pi^i_t(d\VEC u^i_t|\VEC y^i_t) \nu^i_t(d\VEC y^i_t)\pr{d\VEC \omega_0},\quad \pi^{1:N}_{1:T}\in \ALP R^{1:N}_{1:T}.}
Then, for any $\pi^{1:N}_{1:T}\in \ALP R^{1:N}_{1:T}$, we have $J(\pi^{1:N}_{1:T})=J_{RST}(\pi^{1:N}_{1:T})$.
\end{theorem}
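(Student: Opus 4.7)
The plan is to unravel $J(\pi^{1:N}_{1:T})$ as an iterated integral in causal order, apply the density representation of Assumption \ref{assm:abscont} at each stage to change variables from the noise $\omega^i_t$ to the observation $\VEC y^i_t$, and then invoke Fubini's theorem to reorder the resulting integral into the RST form. First, using independence of the primitive random variables and relation \eqref{eqn:obsit}, I would write
\[
J(\pi^{1:N}_{1:T}) = \int_{\Omega_0} \xi_{\Omega_0}(d\omega_0) \int c(\omega_0,\VEC y^{1:N}_{1:T},\VEC u^{1:N}_{1:T}) \prod_{i,t}\pi^i_t(d\VEC u^i_t \mid \VEC y^i_t)\, \xi_{\Omega^i_t}(d\omega^i_t),
\]
where each $\VEC y^i_t$ stands for $h^i_t(\omega_0,\omega^i_t,\VEC u^{1:N}_{1:t-1})$ and the inner integrals are taken in the natural causal ordering $t=1,\ldots,T$, so that $\VEC u^{1:N}_{1:t-1}$ are already "frozen" when integrating over $\omega^i_t$ and $\VEC u^i_t$.

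Next, I would proceed by induction on the time index, replacing at each stage the integral $\int_{\Omega^i_t}(\cdot)\,\xi_{\Omega^i_t}(d\omega^i_t)$ by $\int_{\ALP Y^i_t}(\cdot)\,\varphi^i_t(\VEC y^i_t;\omega_0,\VEC u^{1:N}_{1:t-1})\,\nu^i_t(d\VEC y^i_t)$ via Assumption \ref{assm:abscont}. The crucial bookkeeping point is that $\varphi^i_t$ depends only on $\omega_0$ and on actions strictly before time $t$, which at that stage are outer-loop integration variables; hence this substitution is legitimate and does not disturb the kernels $\pi^j_s$ with $s<t$. Iterating over all $(i,t)$ collects the density factors into $\varphi = \prod_{i,t}\varphi^i_t$, and turns each $\omega^i_t$-integration into an integration over $\VEC y^i_t$ against $\nu^i_t$.

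Finally, since $c$ and $\varphi$ are non-negative and the integrand is jointly Borel measurable (by continuity of $c$, $\varphi^i_t$, $h^i_t$ together with the measurability of each stochastic kernel $\pi^i_t$), Tonelli's theorem permits a free rearrangement of the order of integration. The resulting expression is
\[
\int c \cdot \varphi \prod_{i,t} \pi^i_t(d\VEC u^i_t \mid \VEC y^i_t)\,\nu^i_t(d\VEC y^i_t)\,\xi_{\Omega_0}(d\omega_0),
\]
which is exactly $J_{RST}(\pi^{1:N}_{1:T})$ by Definition \ref{def:redteam}. The main obstacle I anticipate is conceptual rather than computational: one must verify that the change of variables interacts correctly with the stochastic kernel structure of the behavioral strategies, where the same symbol $\VEC y^i_t$ plays the dual role of being the output of $h^i_t$ (before the substitution) and the conditioning argument of $\pi^i_t$ (after the substitution). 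Once joint measurability and the correct causal ordering of the iterated substitutions are established, the identity $J=J_{RST}$ follows immediately.
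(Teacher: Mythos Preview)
Your argument is correct and is essentially Witsenhausen's original change-of-variables argument. Note, however, that the paper does not supply its own proof of this theorem: it is stated as a recalled result from \cite{wit1988} (see the line preceding the theorem, ``We now recall the following result from \cite{wit1988}\ldots''), so there is no in-paper proof to compare against. Your sketch is precisely the standard derivation: write the dynamic expectation as a causally ordered iterated integral, apply Assumption~\ref{assm:abscont} at each stage to replace $\xi_{\Omega^i_t}(d\omega^i_t)$ by $\varphi^i_t\,\nu^i_t(d\VEC y^i_t)$ (which is legitimate because $\varphi^i_t$ depends only on quantities already fixed in outer integrals), and then invoke Tonelli to flatten and reorder. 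The only minor caution is that the substitution at stage $(i,t)$ is really an identity of image measures---for a bounded measurable $F$ depending on $\VEC y^i_t$ only through its value, $\int F(h^i_t(\omega_0,\omega^i_t,\VEC u^{1:N}_{1:t-1}))\,\xi_{\Omega^i_t}(d\omega^i_t)=\int F(\VEC y^i_t)\,\varphi^i_t(\VEC y^i_t;\omega_0,\VEC u^{1:N}_{1:t-1})\,\nu^i_t(d\VEC y^i_t)$ by the very definition of $\varphi^i_t$ in Assumption~\ref{assm:abscont}---and this extends to nonnegative $F$ by monotone approximation, which handles your ``dual role of $\VEC y^i_t$'' concern cleanly.
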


It should be noted that for any dynamic team problem that admits a reduced static problem, the corresponding RST may not satisfy the hypotheses of Theorem \ref{thm:stateamindobs}. Thus, the results we proved for static teams cannot be applied to conclude the existence of a solution to a dynamic team problem. We illustrate the difficulty in using such a approach in the following example.
  
\subsection*{Witsenhausen's Counterexample}
Witsenhausen's counterexample is a two-agent dynamic LQG team problem, first studied by Witsenhausen in \cite{witcount}. The first agent observes a mean-zero unit variance Gaussian random variable $y_1$ and decides on a real number $U_1$. The second agent observes $Y_2:=U_1+W_2$, where $W_2$ is a mean-zero Gaussian noise with unit variance, and decides on another real number $U_2$. The behavioral strategy space of Agent $i$ is $\ALP R_i$, $i=1,2$. The cost function of the team is given by
\beqq{c_{D}(y_1,u_2,u_2) = (u_1-y_1)^2+(u_2-u_1)^2.}
It is well known that the above dynamic team problem admits an optimal solution \cite{witcount}. The dynamic team problem can be reduced to a static team problem using Lemma \ref{lem:gaussiannoise} \cite{wit1988}. In the corresponding RST problem, each agent observes a mean-zero unit variance Gaussian random variable that is independent of the observation of the other agent. The cost function for the RST is
\beqq{c_{S}(y_1,u_1,u_2) = \Big((u_1-y_1)^2+(u_2-u_1)^2\Big)\exp\left(\frac{-u_1^2+2y_2u_1}{2}\right).}
The cost function for the dynamic problem $c_D$ is in classes $\ic{\ALP Y_1,\ALP U_1}$ and $\ic{\ALP Y_1\times\ALP Y_2\times\ALP U_1,\ALP U_2}$. The cost function $c_S$ for the corresponding RST is in class $\ic{\ALP Y_1\times\ALP Y_2\times\ALP U_1,\ALP U_2}$ (which follows from Lemma \ref{lem:phi12} to be introduced and proved later). However, $c_S$ is not in class $\ic{\ALP Y_1,\ALP U_1}$ because as $|u_1|\rightarrow\infty$, the cost goes to zero for any fixed value of $y_1$. Therefore, the result of Theorem \ref{thm:stateamindobs} is not applicable to the RST problem.

The above example illustrates that the results we obtained for the static team problems in Sections \ref{sec:staticteam} and \ref{sec:unbounded} are not readily applicable to all dynamic team problems that admit static reductions. A certain structure on the cost function of a dynamic team and further assumptions on the corresponding RST problem are needed to prove the existence of a team-optimal solution. In the next subsection, we state the assumptions that we make on the dynamic team problem in order to establish existence.

\subsection{Assumptions on Dynamic Team}
In order to show the existence of optimal strategies in dynamic teams, we assume the following structure.
\begin{assumption}\label{assm:dynteam}
\begin{enumerate}
\item The dynamic team problem satisfies Assumptions \ref{ass:stf}, \ref{ass:one} and \ref{assm:abscont}.
\item The agents in the team do not share their observations with anyone. Any agent who acts more than once does not recall its past observation(s).
\item The cost function $c$ of the dynamic team problem is in the structural form
\beqq{c(\omega_0,\VEC y^{1:N}_{1:T},\VEC u^{1:N}_{1:T}) = \sum_{t=1}^T\sum_{i=1}^N c^i_t(\VEC u^i_t,\omega_0,\VEC u^{1:N}_{1:t-1},\VEC y^{1:N}_{1:t-1},\VEC y^i_t)+\kappa(\omega_0,\VEC y^{1:N}_{1:T},\VEC u^{1:N}_{1:T}),\label{eqn:dyncost}}
where $c^i_t$ is a non-negative and continuous function in the class $\ic{\Omega_0\times\ALP U^{1:N}_{1:t-1}\times\ALP Y^{1:N}_{1:t-1}\times\ALP Y^i_t,\ALP U^i_t}$ for all $i\in[N]$ and $t\in[T]$, and $\kappa$ is a non-negative continuous function of its arguments.
\item For all $i\in[N]$ and $t\in[T]$, the continuous function $\varphi^i_t:\ALP Y^i_t\times \Omega_0\times\ALP U^{1:N}_{1:t-1}\rightarrow\Re^+$, as defined in Assumption \ref{assm:abscont}, is strictly positive at all points in its domain.
\end{enumerate}
\end{assumption}

In the rest of this section, we consider dynamic team problems satisfying Assumption \ref{assm:dynteam}. Let us first recall the following features of the corresponding RST problem:
\begin{enumerate}
\item If the behavioral control strategy of Agent $i$ at time $t$ is $\pi^i_t$, then the joint measure on $\ALP U^i_t\times\ALP Y^i_t$ in the corresponding RST problem is $\pi^i_t(d\VEC u^i_t|\VEC y^i_t)\nu^i_t(d\VEC y^i_t)$.
\item Recall from Assumption \ref{ass:one} that there exists a set of behavioral strategies $\tilde\pi^{1:N}_{1:T}$ that achieves a finite cost $J(\tilde\pi^{1:N}_{1:T})$ in the dynamic team. Since RST and dynamic team problems are equivalent problems (see Theorem \ref{thm:wit1988}), RST also achieves the same cost with the behavioral strategies $\tilde\pi^{1:N}_{1:T}$.
\end{enumerate}

Let $\{\ALP P^i_t\subset\ALP R^i_t\}_{i\in[N],t\in[T]}$ be the set of behavioral strategies of the agents such that there exists $\pi^{1:N}_{1:T}$ satisfying $\pi^i_t\in\ALP P^i_t$ and
\beqq{J(\pi^{1:N}_{1:T})\leq J(\tilde\pi^{1:N}_{1:T}).}
Define $\lambda^i_t(d\VEC u^i_t,d\VEC y^i_t) := \pi^i_t(d\VEC u^i_t|\VEC y^i_t)\nu^i_t(d\VEC y^i_t)$ for $\pi^i_t\in\ALP P^i_t$, and let $\ALP M^i_t$ denote the set of all such $\lambda^i_t$.

If the optimal behavioral strategies of the agents exist, then the optimal behavioral strategy of Agent $(i,t)$ in the RST problem must lie in the set $\ALP M^i_t$. In order to establish the existence of optimal strategies in the dynamic team problem, we show that $\ALP M^i_t$ is a tight set of measures using a similar approach as in Lemma \ref{lem:micompact}. In the next subsection, we prove some auxiliary results that are needed to show that the set of measures $\ALP M^i_t$ is tight. This is a crucial part of the proof of existence of optimal strategies in the dynamic team problem.

\subsection{Auxiliary Results}
Our first auxiliary result is as follows.
\begin{lemma}\label{lem:varphi}
\begin{enumerate}
\item For any $i\in[N]$ and $t\in[T]$ and any $\omega_0\in\Omega_0$ and $\VEC u^{1:N}_{1:t-1}\in\ALP U^{1:N}_{1:t-1}$, we have
\beqq{\int_{\ALP Y^i_t\times\ALP U^i_t} \varphi^i_t(\VEC y^i_t;\omega_0,\VEC u^{1:N}_{1:t-1}) \lambda^i_t(d\VEC u^i_t,d\VEC y^i_t) =1.}
\item For any $i\in[N]$ and $t\in[T]$ and $\lambda^i_t\in\wp(\ALP Y^i_t\times\ALP U^i_t)$,
\beqq{\int_{\Omega_0\times\ALP Y^{1:N}_{1:T}\times\ALP U^{1:N}_{1:T}}  c^i_t(\VEC u^i_t,\omega_0,\VEC u^{1:N}_{1:t-1},\VEC y^{1:N}_{1:t-1},\VEC y^i_t)
\varphi(\omega_0,\VEC y^{1:N}_{1:T},\VEC u^{1:N}_{1:T})\; d\lambda^{1:N}_{1:T}\;\pr{d\omega_0} \\
= \int_{\Omega_0\times\ALP Y^{1:N}_{1:t-1}\times\ALP U^{1:N}_{1:t-1}\times\ALP Y^i_t\times\ALP U^i_t} \bar{c}^i_t(\VEC u^i_t,\omega_0,\VEC u^{1:N}_{1:t-1},\VEC y^{1:N}_{1:t-1},\VEC y^i_t)\;d\lambda^{1:N}_{1:t-1}\;d\lambda^i_t\; \pr{d\omega_0}, }
where
\beq{\bar{c}^i_t(\VEC u^i_t,\omega_0,\VEC u^{1:N}_{1:t-1},\VEC y^{1:N}_{1:t-1},\VEC y^i_t) = c^i_t(\VEC u^i_t,\omega_0,\VEC u^{1:N}_{1:t-1},\VEC y^{1:N}_{1:t-1},\VEC y^i_t)\times\nonumber\\
\varphi^i_t(\VEC y^i_t;\omega_0,\VEC u^{1:N}_{1:t-1}) \times \prod_{s=1}^{t-1}\prod_{j=1}^N \varphi^j_s(\VEC y^j_s;\omega_0,\VEC u^{1:N}_{1:s-1}).\label{eqn:cbarit} \qquad}
\end{enumerate}
\end{lemma}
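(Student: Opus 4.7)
The plan is to handle the two parts in sequence. For Part 1, I would exploit the fact that, as defined in the paragraph preceding the lemma, $\lambda^i_t$ factorizes as $\lambda^i_t(d\VEC u^i_t, d\VEC y^i_t) = \pi^i_t(d\VEC u^i_t \mid \VEC y^i_t)\,\nu^i_t(d\VEC y^i_t)$ for some behavioral strategy $\pi^i_t$. Since $\varphi^i_t$ does not depend on $\VEC u^i_t$, integrating out $\VEC u^i_t$ first (using that $\pi^i_t(\cdot \mid \VEC y^i_t)$ is a probability kernel) reduces the left-hand side to $\int_{\ALP Y^i_t} \varphi^i_t(\VEC y^i_t; \omega_0, \VEC u^{1:N}_{1:t-1})\,\nu^i_t(d\VEC y^i_t)$, which by Assumption \ref{assm:abscont} applied with $\SF Y = \ALP Y^i_t$ equals $\pr{\VEC y^i_t \in \ALP Y^i_t \mid \omega_0, \VEC u^{1:N}_{1:t-1}} = 1$.

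For Part 2, I would factor $\varphi$ as $\bar\varphi \cdot \prod_{(j,s)\in S} \varphi^j_s$, where $\bar\varphi := \varphi^i_t \prod_{s=1}^{t-1}\prod_{j=1}^N \varphi^j_s$ is precisely the extra piece appearing in $\bar c^i_t$ and $S := \{(j,s) : s > t\} \cup \{(j,t) : j \neq i\}$ indexes the complementary ``future'' factors. The key observation is that $c^i_t \cdot \bar\varphi$ does not depend on any of the variables $(\VEC y^j_s, \VEC u^j_s)$ with $(j,s) \in S$: the function $c^i_t$ involves only indices in $\{(j',s') : s' < t\} \cup \{(i,t)\}$, and each factor of $\bar\varphi$ has the form $\varphi^{j'}_{s'}$ with $s' \le t$, whose $\VEC u$-arguments are at strictly earlier time steps. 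Tonelli's theorem (applicable by non-negativity) then allows $c^i_t \bar\varphi$ to be pulled outside the integration over the $S$-indexed variables, reducing the claim to proving
\[
\int \prod_{(j,s)\in S} \varphi^j_s \prod_{(j,s)\in S} \lambda^j_s(d\VEC u^j_s, d\VEC y^j_s) = 1
\]
for every fixed $\omega_0$ and every fixed realization of the retained variables.

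To establish this inner identity I would use a descending induction on time. At level $s_0$, among the remaining factors, each $\varphi^{j'}_{s'}$ with $s' \le s_0$ depends on $\VEC u^j_{s_0}$ only if $s' > s_0$ (hence not at all once the higher time-levels have been collapsed to $1$ by the inductive hypothesis), and depends on $\VEC y^j_{s_0}$ only if $(j', s') = (j, s_0)$. Therefore the only surviving dependence on $(\VEC u^j_{s_0}, \VEC y^j_{s_0})$ for $(j, s_0) \in S$ is through $\varphi^j_{s_0}$ itself, and applying Part 1 with $s_0$ in place of $t$ yields $\int \varphi^j_{s_0}\,\lambda^j_{s_0}(d\VEC u^j_{s_0}, d\VEC y^j_{s_0}) = 1$. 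Iterating from $s_0 = T$ down to $s_0 = t$, where only $j \ne i$ participate at the last level, drives the inner integral to $1$.

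The principal difficulty lies entirely in Part 2 and is essentially organizational: one must carefully track which $\varphi^j_s$ factors depend on which action and observation variables in order to justify both the Tonelli reordering and the inductive collapse at each time level. Part 1 is then a one-line consequence of Assumption \ref{assm:abscont} together with the factored form of $\lambda^i_t$, and serves as the atomic building block for the induction in Part 2.
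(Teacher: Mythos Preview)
Your proposal is correct and follows exactly the approach the paper intends: Part 1 is reduced to Assumption \ref{assm:abscont} via the factorization $\lambda^i_t(d\VEC u^i_t,d\VEC y^i_t)=\pi^i_t(d\VEC u^i_t\mid\VEC y^i_t)\nu^i_t(d\VEC y^i_t)$, and Part 2 is obtained by peeling off the ``future'' factors $\varphi^j_s$ with $(j,s)\in S$ using Part 1 iteratively. The paper's own proof is deliberately terse---it simply says Part 1 holds ``by the definition of $\varphi^i_t$ in Assumption \ref{assm:abscont}'' and Part 2 ``is a consequence of the first statement''---so your write-up is a faithful, and considerably more explicit, unpacking of that same argument.
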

\begin{proof}
\begin{enumerate}
\item The statement holds for all $i\in[N]$ and $t\in[T]$ by the definition of $\varphi^i_t$ in Assumption \ref{assm:abscont}.
\item This is a consequence of the first statement.
\end{enumerate}
\hspace{0.97\linewidth}
\end{proof}

Recall that we introduced a class of functions $\ic{\cdot,\cdot}$ in Definition \ref{def:icclass}. In the next lemma, we show that if we multiply a function in this class with a lower-semicontinuous function that does not vanish in its domain, then the resulting function also belongs to the same class. We use this result to show that the cost in \eqref{eqn:cbarit} belongs to the class $\ic{\Omega_0\times\ALP U^{1:N}_{1:t-1}\times\ALP Y^{1:N}_{1:t-1}\times\ALP Y^i_t,\ALP U^i_t}$.

\begin{lemma}\label{lem:phi12}
Let $\phi_1:\ALP A\times\ALP B\rightarrow \Re$ be a measurable function and $\phi_2:\ALP A\rightarrow\Re^+$ be a lower-semicontinuous function that is strictly positive everywhere in its domain. If $\phi_1$ is in class $\ic{\ALP A,\ALP B}$, then the product function $\phi:=\phi_1\phi_2$ is also in class $\ic{\ALP A,\ALP B}$.
\end{lemma}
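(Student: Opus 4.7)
The plan is to work directly from Definition \ref{def:icclass} by cases, according to which of the two alternative conditions is satisfied by $\phi_1$. In both cases, the key observation is that a lower-semicontinuous strictly positive function is bounded below by a positive constant on any compact set, and admits a positive lower bound on a sufficiently small open neighborhood of any point (by lower-semicontinuity). This allows us to transfer the growth property of $\phi_1$ to $\phi = \phi_1\phi_2$ by simply inflating the threshold on the $\phi_1$ side by a suitable factor.

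Concretely, for the compact-set version (Condition 1), the plan is: fix $M>0$ and a compact set $\SF K\subset\ALP A$. Since $\phi_2$ is lower-semicontinuous on the compact $\SF K$, it attains its infimum there, and this infimum, call it $m_{\SF K}$, is strictly positive because $\phi_2>0$ pointwise. Apply Condition~1 to $\phi_1$ with the inflated threshold $M/m_{\SF K}$ to obtain a compact $\SF L\subset\ALP B$ with $\inf_{\SF K\times\SF L^\complement}\phi_1\ge M/m_{\SF K}$. Since $\phi_1\ge 0$ and $\phi_2\ge m_{\SF K}$ on $\SF K$, multiplication gives $\inf_{\SF K\times\SF L^\complement}\phi\ge M$, as required.

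For the pointwise version (Condition 2), the plan is: fix $M>0$ and $\VEC a\in\ALP A$. Since $\phi_2$ is lower-semicontinuous at $\VEC a$ and $\phi_2(\VEC a)>0$, there is an open neighborhood $\SF O_1$ of $\VEC a$ on which $\phi_2>\phi_2(\VEC a)/2=:m$. Apply Condition~2 to $\phi_1$ at $\VEC a$ with threshold $M/m$ to obtain an open neighborhood $\SF O_2$ of $\VEC a$ and a compact $\SF L\subset\ALP B$ with $\inf_{\SF O_2\times\SF L^\complement}\phi_1\ge M/m$. Setting $\SF O:=\SF O_1\cap\SF O_2$, which is still an open neighborhood of $\VEC a$, the product satisfies $\inf_{\SF O\times\SF L^\complement}\phi\ge M$.

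There is no real obstacle: measurability of $\phi$ is immediate since $\phi_1$ is measurable and $\phi_2$, being lower-semicontinuous, is Borel measurable; non-negativity of $\phi$ follows since $\phi_1\ge 0$ (as $\phi_1\in\ic{\ALP A,\ALP B}$) and $\phi_2>0$. The only subtle point worth flagging is that strict positivity of $\phi_2$ (not merely non-negativity) is essential — otherwise $m_{\SF K}$ or $m$ could be zero and the scaling argument would break — and that lower-semicontinuity is exactly the right regularity hypothesis to guarantee the compact-set infimum is attained and that local lower bounds exist around each point.
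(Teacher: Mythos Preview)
Your proof is correct and, for Condition~1, it is essentially identical to the paper's argument: the paper fixes $m>0$ and a compact $\SF K$, sets $M := m/\min_{\SF K}\phi_2$ (using that a lower-semicontinuous strictly positive function attains a positive minimum on a compact set), obtains $\SF L$ from the $\ic{\ALP A,\ALP B}$ property of $\phi_1$ for this $M$, and concludes $\inf_{\SF K\times\SF L^\complement}\phi\ge (\inf_{\SF K\times\SF L^\complement}\phi_1)(\min_{\SF K}\phi_2)=m$.

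The one genuine difference is that you handle the two alternatives in Definition~\ref{def:icclass} separately, giving a direct argument for Condition~2 using a local lower bound for $\phi_2$ from lower-semicontinuity and intersecting neighborhoods. The paper's proof, by contrast, only writes out the Condition~1 case and implicitly treats ``the property of class $\ic{\ALP A,\ALP B}$ functions'' as yielding the compact-set conclusion directly; this is legitimate because Condition~2 implies Condition~1 via the finite-subcover argument (as in the proof of Lemma~\ref{lem:q2kcom}), but the paper does not spell that reduction out here. Your version is therefore slightly more self-contained, at the cost of a second case; the paper's is shorter but leans on an implication between the two defining conditions that is established elsewhere.
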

\begin{proof}
Fix $m>0$ and a compact set $\SF K\subset\ALP A$. Define $M := \frac{m}{\min_{\VEC a\in\SF K} \phi_2(\VEC a)}$. Then, by the property of class $\ic{\ALP A,\ALP B}$ functions, there exists a compact set $\SF L\subset\ALP B$, depending on $\SF K$ and $M$, such that
\beqq{\inf_{\SF K\times\SF L^\complement} \phi_1(\VEC a,\VEC b)\geq M.}
Now, due to the property of infimum, we get
\beqq{\inf_{\SF K\times\SF L^\complement} \phi(\VEC a,\VEC b)\geq \inf_{\SF K\times\SF L^\complement} \phi_1(\VEC a,\VEC b)\min_{\SF K} \phi_2(\VEC a) = m,} which completes the proof of the statement.
\end{proof}

As a result of the lemma above, we have the following fact.
\begin{lemma}\label{lem:cbarit}
The function $\bar{c}^i_t$, as defined in \eqref{eqn:cbarit}, is a non-negative and continuous function in the class $\ic{\Omega_0\times\ALP U^{1:N}_{1:t-1}\times\ALP Y^{1:N}_{1:t-1}\times\ALP Y^i_t,\ALP U^i_t}$.
\end{lemma}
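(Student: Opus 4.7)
The plan is to apply Lemma \ref{lem:phi12} directly, by identifying the factors of $\bar{c}^i_t$ with the $\phi_1$ and $\phi_2$ of that lemma. Set $\ALP A := \Omega_0\times\ALP U^{1:N}_{1:t-1}\times\ALP Y^{1:N}_{1:t-1}\times\ALP Y^i_t$ and $\ALP B := \ALP U^i_t$. Take
\[
\phi_1 := c^i_t,\qquad \phi_2 := \varphi^i_t(\VEC y^i_t;\omega_0,\VEC u^{1:N}_{1:t-1}) \prod_{s=1}^{t-1}\prod_{j=1}^N \varphi^j_s(\VEC y^j_s;\omega_0,\VEC u^{1:N}_{1:s-1}),
\]
so that $\bar{c}^i_t = \phi_1\phi_2$. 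Note that $\phi_2$ does not depend on the $\ALP B$-variable $\VEC u^i_t$, hence it is a function on $\ALP A$ as required by Lemma \ref{lem:phi12}.

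Next I would verify the hypotheses of Lemma \ref{lem:phi12}. For $\phi_1$: Part 3 of Assumption \ref{assm:dynteam} gives that $c^i_t$ is non-negative, continuous, and in the class $\ic{\ALP A,\ALP B}$. For $\phi_2$: by Assumption \ref{assm:abscont} each factor $\varphi^j_s$ is continuous in its arguments, and by Part 4 of Assumption \ref{assm:dynteam} each $\varphi^j_s$ is strictly positive throughout its domain. A finite product of continuous, strictly positive functions is continuous (hence lower-semicontinuous) and strictly positive, so $\phi_2$ satisfies the hypothesis on $\phi_2$ in Lemma \ref{lem:phi12}.

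With both hypotheses verified, Lemma \ref{lem:phi12} yields that $\bar{c}^i_t = \phi_1\phi_2$ lies in the class $\ic{\Omega_0\times\ALP U^{1:N}_{1:t-1}\times\ALP Y^{1:N}_{1:t-1}\times\ALP Y^i_t,\ALP U^i_t}$. Non-negativity of $\bar{c}^i_t$ follows since $c^i_t\geq 0$ and each $\varphi^j_s>0$, while continuity follows from continuity of $c^i_t$ together with continuity of each $\varphi^j_s$. There is no substantive obstacle here; the lemma is essentially bookkeeping that packages the coercivity-preservation property of Lemma \ref{lem:phi12} together with the structural assumptions on $c^i_t$ and $\varphi^j_s$ so that $\bar{c}^i_t$ can be used as the pointwise cost associated with Agent $(i,t)$ in the RST when establishing tightness of the sets $\ALP M^i_t$ in the sequel.
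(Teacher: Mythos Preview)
Your proposal is correct and follows essentially the same approach as the paper: identify $\phi_1 = c^i_t$ and $\phi_2$ as the product of the $\varphi^j_s$ factors, verify the hypotheses of Lemma \ref{lem:phi12} via Parts 3 and 4 of Assumption \ref{assm:dynteam} (and Assumption \ref{assm:abscont} for continuity), and conclude. Your write-up is in fact slightly more explicit than the paper's, since you spell out why $\phi_2$ is a function on $\ALP A$ alone and separately justify non-negativity and continuity of $\bar c^i_t$.
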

\begin{proof}
By Assumption \ref{assm:dynteam}, $c^i_t(\VEC u^i_t,\omega_0,\VEC u^{1:N}_{1:t-1},\VEC y^{1:N}_{1:t-1},\VEC y^i_t)$ is in the class $\ic{\Omega_0\times\ALP U^{1:N}_{1:t-1}\times\ALP Y^{1:N}_{1:t-1}\times\ALP Y^i_t,\ALP U^i_t}$. Now, in the statement of Lemma \ref{lem:phi12}, take the functions $\phi_1$ and $\phi_2$ as
\beqq{\phi_1(\VEC u^i_t,\omega_0,\VEC u^{1:N}_{1:t-1},\VEC y^{1:N}_{1:t-1},\VEC y^i_t) &:=& c^i_t(\VEC u^i_t,\omega_0,\VEC u^{1:N}_{1:t-1},\VEC y^{1:N}_{1:t-1},\VEC y^i_t)\\
\phi_2(\omega_0,\VEC u^{1:N}_{1:t-1},\VEC y^{1:N}_{1:t-1},\VEC y^i_t) &:=& \varphi^i_t(\VEC y^i_t;\omega_0,\VEC u^{1:N}_{1:t-1}) \times \prod_{s=1}^{t-1}\prod_{j=1}^N \varphi^j_s(\VEC y^j_s;\omega_0,\VEC u^{1:N}_{1:s-1}),}
and note that $\phi_2$ is a continuous and strictly positive function in its domain (see Part 4 of Assumption \ref{assm:dynteam}). As a consequence of the result in Lemma \ref{lem:phi12}, we obtain that the function $\bar{c}^i_t$ is non-negative and continuous in the class $\ic{\Omega_0\times\ALP U^{1:N}_{1:t-1}\times\ALP Y^{1:N}_{1:t-1}\times\ALP Y^i_t,\ALP U^i_t}$. This completes the proof of the lemma.
\end{proof}

We now have all the auxiliary results needed for showing existence. In the next subsection, we prove that any dynamic team as described above admits a team-optimal solution in deterministic strategies of the agents.

\subsection{Proof of Existence of Optimal Strategies}
Our first result in this subsection is that $\ALP M^i_t$ is a tight set of measures for all $i\in[N]$ and $t\in[T]$. This implies that during the search of optimal strategies of the agents in the RST problem, we can restrict the joint measures on the action and observation spaces of Agent $(i,t)$ to a tight set of measures $\ALP M^i_t$.
\begin{lemma}\label{lem:mittight}
The set of measures $\ALP M^i_t$ is tight for all $i\in[N]$ and $t\in[T]$.
\end{lemma}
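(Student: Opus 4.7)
The plan is to prove the statement by induction on $t$, using Lemma \ref{lem:q2kcom} as the inductive engine, applied to the function $\bar{c}^i_t$ of \eqref{eqn:cbarit} which, by Lemma \ref{lem:cbarit}, belongs to the class $\ic{\Omega_0\times\ALP U^{1:N}_{1:t-1}\times\ALP Y^{1:N}_{1:t-1}\times\ALP Y^i_t,\ALP U^i_t}$. The base case is $t=1$ and the inductive hypothesis assumes tightness of $\ALP M^j_s$ for every $j\in[N]$ and every $s<t$.

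For the inductive step at $(i,t)$, I would proceed as follows. Fix any $\pi^i_t\in\ALP P^i_t$; by the definition of $\ALP P^i_t$, there exist companion strategies $\pi^j_s\in\ALP P^j_s$, $(j,s)\neq(i,t)$, such that the resulting profile $\pi^{1:N}_{1:T}$ satisfies $J(\pi^{1:N}_{1:T})\leq J(\tilde\pi^{1:N}_{1:T})$. Setting $\lambda^j_s(d\VEC u^j_s,d\VEC y^j_s):=\pi^j_s(d\VEC u^j_s|\VEC y^j_s)\nu^j_s(d\VEC y^j_s)\in\ALP M^j_s$, I form the product measure
\[
\mu \;:=\; \pr{d\omega_0}\cdot\prod_{j=1}^N\prod_{s=1}^{t-1}\lambda^j_s\cdot\lambda^i_t
\]
on $\Omega_0\times\ALP U^{1:N}_{1:t-1}\times\ALP Y^{1:N}_{1:t-1}\times\ALP Y^i_t\times\ALP U^i_t$. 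Non-negativity of every $c^j_s$ and $\kappa$ in Assumption \ref{assm:dynteam}(3), together with Theorem \ref{thm:wit1988} and Lemma \ref{lem:varphi}(2), would then give
\[
\int\bar{c}^i_t\,d\mu\;\leq\;J_{RST}(\pi^{1:N}_{1:T})\;=\;J(\pi^{1:N}_{1:T})\;\leq\;J(\tilde\pi^{1:N}_{1:T}),
\]
which is the uniform integral bound required by Lemma \ref{lem:q2kcom}.

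Next I would take $\ALP A=\Omega_0\times\ALP U^{1:N}_{1:t-1}\times\ALP Y^{1:N}_{1:t-1}\times\ALP Y^i_t$, $\ALP B=\ALP U^i_t$, and $\ALP C=\emptyset$ in Lemma \ref{lem:q2kcom}. The $\ALP A$-marginal of $\mu$ is $\pr{d\omega_0}\cdot\prod_{j,s<t}\lambda^j_s\cdot\nu^i_t$, since the $\VEC y^i_t$-marginal of $\lambda^i_t$ is $\nu^i_t$ by construction. Because a product of tight families of measures on a finite Cartesian product of Polish spaces is tight, the induction hypothesis on $\{\ALP M^j_s\}_{s<t}$ (combined with the fact that $\pr{d\omega_0}$ and $\nu^i_t$ are single, hence tight, measures) ensures that the family of $\ALP A$-marginals arising as $\pi^i_t$ ranges over $\ALP P^i_t$ with admissible companions is tight in $\wp(\ALP A)$; its weak$^*$ closure $\ALP N$ is then weak$^*$ compact by Prohorov's theorem. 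Lemma \ref{lem:q2kcom} applied with $k=J(\tilde\pi^{1:N}_{1:T})$ and $\phi=\bar{c}^i_t$ would then deliver tightness of the family of such $\mu$ in $\wp(\ALP A\times\ALP B)$, and pushing forward under the projection onto $\ALP Y^i_t\times\ALP U^i_t$ would recover tightness of $\ALP M^i_t$. The base case $t=1$ follows the same template with $\ALP A=\Omega_0\times\ALP Y^i_1$ and $\ALP A$-marginal $\pr{d\omega_0}\cdot\nu^i_1$, which is a single (automatically tight) measure.

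The main obstacle will be the bookkeeping in the inductive step: one must check carefully that the product-measure structure produced by Lemma \ref{lem:varphi}(2) interacts cleanly with the induction hypothesis, so that tightness of each individual factor $\ALP M^j_s$ indeed transfers to tightness of the joint $\ALP A$-marginals even though the companions may vary with $\pi^i_t$. A secondary subtlety is that Lemma \ref{lem:q2kcom} demands $\ALP N$ to be weak$^*$ compact rather than merely tight, so one must invoke Prohorov to replace the tight family of $\ALP A$-marginals by its weak$^*$ closure before applying the lemma.
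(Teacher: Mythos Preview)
Your proposal is correct and follows essentially the same approach as the paper: induction on $t$, using Lemma~\ref{lem:cbarit} to place $\bar{c}^i_t$ in the appropriate $\textsc{IC}$ class, the non-negativity of the remaining summands in the cost decomposition to obtain the uniform bound $\int\bar{c}^i_t\,d\mu\leq J(\tilde\pi^{1:N}_{1:T})$ via Lemma~\ref{lem:varphi}(2), and Lemma~\ref{lem:q2kcom} as the tightness engine at each step. You are in fact more careful than the paper on two points it leaves implicit: that the product of tight families over a finite Cartesian product of Polish spaces is tight, and that Lemma~\ref{lem:q2kcom} formally asks for a weak$^*$ compact $\ALP N$ so one should pass to the weak$^*$ closure of the tight family of $\ALP A$-marginals before invoking it (alternatively, inspection of the proof of Lemma~\ref{lem:q2kcom} shows that tightness of $\ALP N$ already suffices for the tightness conclusion).
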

\begin{proof}
See Appendix \ref{app:mittight}.
\end{proof}

We now turn our attention to showing the existence of optimal strategies in dynamic team problems. We use the result of the lemma above to prove this fact in the next theorem.
\begin{theorem}\label{thm:dynteam}
If a dynamic team problem satisfies Assumption \ref{assm:dynteam}, then it admits a team-optimal solution in deterministic strategies.
\end{theorem}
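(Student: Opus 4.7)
The plan is to reduce the dynamic team to its RST via Witsenhausen's equivalence (Theorem \ref{thm:wit1988}) and then mimic the four-step scheme of Theorem \ref{thm:stateamindobs}, with Lemma \ref{lem:mittight} playing the role of Lemma \ref{lem:q2kcom}. As highlighted in the Witsenhausen's counterexample discussion above, a direct application of Lemma \ref{lem:q2kcom} to the RST cost $c\cdot\varphi$ fails, because the factor $\varphi$ can drive the RST cost to zero along sequences where some $\VEC u^i_t$ escapes to infinity. Lemma \ref{lem:mittight} circumvents this obstruction by extracting tightness of each $\ALP M^i_t$ directly from the additive stage-cost $c^i_t$, which inherits the required $\ic{\cdot,\ALP U^i_t}$ structure through Lemma \ref{lem:cbarit}.

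First, I would fix a minimizing sequence $\{\pi^{1:N}_{1:T,n}\}\subset \ALP P^{1:N}_{1:T}$ with $J(\pi^{1:N}_{1:T,n})\downarrow \inf_{\pi^{1:N}_{1:T}\in\ALP R^{1:N}_{1:T}} J(\pi^{1:N}_{1:T})$; such a choice exists by Assumption \ref{ass:one}. Setting $\lambda^i_{t,n}(d\VEC u^i_t,d\VEC y^i_t):=\pi^i_{t,n}(d\VEC u^i_t|\VEC y^i_t)\nu^i_t(d\VEC y^i_t)\in\ALP M^i_t$, Lemma \ref{lem:mittight} and Prohorov's theorem give tightness of $\{\lambda^i_{t,n}\}_n$ for each $(i,t)$. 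A diagonal extraction across the $NT$ agent-time pairs produces a single subsequence $\{n_k\}$ and limit measures $\lambda^i_{t,0}\in\wp(\ALP Y^i_t\times\ALP U^i_t)$ with $\lambda^i_{t,n_k}\ws\lambda^i_{t,0}$ for every $(i,t)$. Since the $\ALP Y^i_t$-marginal of every $\lambda^i_{t,n}$ is the fixed measure $\nu^i_t$, so is that of $\lambda^i_{t,0}$, and by disintegration there exists $\pi^i_{t,0}\in\ALP R^i_t$ with $\lambda^i_{t,0}(d\VEC u^i_t,d\VEC y^i_t)=\pi^i_{t,0}(d\VEC u^i_t|\VEC y^i_t)\nu^i_t(d\VEC y^i_t)$.

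The main obstacle is the lower-semicontinuity step
\[\lif{k} J_{RST}(\pi^{1:N}_{1:T,n_k})\;\geq\; J_{RST}(\pi^{1:N}_{1:T,0}).\]
Here I would use the additive decomposition in Assumption \ref{assm:dynteam}(3) together with Lemma \ref{lem:varphi} to rewrite $J_{RST}$ as a finite sum of integrals of the form $\int \bar c^i_t\,d\lambda^{1:N}_{1:t-1}\,d\lambda^i_t\,\pr{d\omega_0}$ plus a $\kappa$-term, where each $\bar c^i_t$ is non-negative and continuous by Lemma \ref{lem:cbarit}. Because the RST observations are mutually independent and independent of $\omega_0$, the weak* convergences $\lambda^i_{t,n_k}\ws\lambda^i_{t,0}$ lift to weak* convergence of the product measures on $\Omega_0\times\ALP Y^{1:N}_{1:T}\times\ALP U^{1:N}_{1:T}$; then Theorems \ref{thm:sub} and \ref{thm:sub2}, specialized to the independent-observation setting of the RST (cf.\ Corollary \ref{cor:staticteam}), handle each non-negative continuous integrand via the truncation $\min\{\bar c^i_t,m\}$, followed by monotone convergence in $m$ and Fatou's lemma. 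Summing the resulting liminf inequalities, and treating the $\kappa\cdot\varphi$ term analogously, yields the desired inequality for $J_{RST}$.

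Combining these ingredients with the equivalence $J(\pi^{1:N}_{1:T,0})=J_{RST}(\pi^{1:N}_{1:T,0})$ from Theorem \ref{thm:wit1988} gives $J(\pi^{1:N}_{1:T,0})\leq \inf_{\pi^{1:N}_{1:T}\in\ALP R^{1:N}_{1:T}} J(\pi^{1:N}_{1:T})$, so $\pi^{1:N}_{1:T,0}$ is team-optimal. Finally, invoking Blackwell's irrelevant information theorem one agent-time pair at a time, exactly as in Step 4 of the proof of Theorem \ref{thm:staticteam}, replaces each behavioral $\pi^i_{t,0}$ by a deterministic strategy with the same expected cost, producing a team-optimal solution in deterministic strategies as claimed.
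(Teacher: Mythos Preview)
Your proposal is correct and follows the same architecture as the paper's proof: minimizing sequence, tightness via Lemma \ref{lem:mittight}, diagonal extraction, lower semicontinuity of $J_{RST}$, and Blackwell's theorem. The one difference is that for the lower-semicontinuity step you decompose $J_{RST}$ additively through Lemma \ref{lem:varphi} and argue term by term on the $\bar c^i_t$ and $\kappa\varphi$ pieces, whereas the paper simply observes that $c\varphi$ is a single non-negative continuous function and applies Theorem \ref{thm:sub2} once to the full integrand. Both routes are valid, but the paper's is shorter: the additive structure in Assumption \ref{assm:dynteam}(3) is needed only to establish tightness (Lemma \ref{lem:mittight}), not for the lower-semicontinuity step itself.
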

\begin{proof}
Consider a sequence of behavioral strategies of the agents $\{(\pi^{1:N}_{1:T})_n\}_{n\in\Na}\subset\ALP M^{1:N}_{1:T}$ that satisfies $\lf{n}J((\pi^{1:N}_{1:T})_n) = \inf J(\pi^{1:N}_{1:T})$, where $J$ is the expected cost functional of the dynamic team problem. Let $\{(\lambda^i_t)_n\}_{n\in\Na}\subset\ALP M^i_t$ be defined as
\beqq{(\lambda^i_t)_n(d\VEC u^i_t,d\VEC y^i_t) = (\pi^i_t)_n(d\VEC u^i_t|\VEC y^i_t) \nu^i_t(d\VEC y^i_t),\qquad n\in\Na.}
Since $\{(\lambda^i_t)_n\}_{n\in\Na}$ is a tight sequence of measures, we know that there exists a weak* convergent subsequence of measures. For every $i\in[N]$ and $t\in[T]$, let $\{(\lambda^i_t)_{n_k}\}_{k\in\Na}$ be the weak* convergent subsequence of measures converging to $(\lambda^i_t)_0$. Define the behavioral strategy $(\pi^i_t)_0(d\VEC u^i_t|\VEC y^i_t) = (\lambda^i_t)_0(d\VEC u^i_t|\VEC y^i_t)$ for all $i\in[N]$ and $t\in [T]$. Since $c\varphi$ is a continuous function, from the result of Theorem \ref{thm:sub2}, we conclude that
\beqq{\lif{k}\int c\;\varphi\; d(\lambda^{1:N}_{1:T})_{n_k}\pr{d\omega_0}\geq \int c\;\varphi\; d(\lambda^{1:N}_{1:T})_0\pr{d\omega_0},}
or, equivalently, $\lif{k}J((\pi^{1:N}_{1:T})_{n_k}) \geq  J((\pi^{1:N}_{1:T})_0)$. Thus, optimal behavioral strategies of the agents exist in the RST team problem, and the optimal behavioral strategy of Agent $(i,t)$ is the conditional measures $(\pi^i_t)_0(d\VEC u^i|\VEC y^i)$. Since RST is equivalent to a dynamic team problem, this is also the optimal behavioral strategy of Agent $i$ at time $t$ in the dynamic team problem.

Moreover, by Blackwell's irrelevant information theorem \cite[p. 457]{yukselbook}, there exists a set of deterministic strategies which achieves the same cost as the one achieved using optimal behavioral strategies of the agents, which establishes the result.
\end{proof}

We have the following corollary to the theorem above for dynamic teams in which the agents have compact action spaces. We do not require Parts 3 and 4 of  Assumption \ref{assm:dynteam} to show the existence of optimal strategies for this problem.

\begin{corollary}\label{cor:dynteamext}
Let us consider a dynamic team problem in which $\ALP U^i_t$ is compact for all $i\in[N]$ and $t\in[T]$. If Parts 1 and 2 in Assumption  \ref{assm:dynteam} hold, then the dynamic team problem admits an optimal solution in deterministic strategies of the agents.
\end{corollary}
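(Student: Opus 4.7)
The plan is to mirror the proof of Theorem \ref{thm:dynteam} and observe that Parts 3 and 4 of Assumption \ref{assm:dynteam} are invoked there only in order to prove Lemma \ref{lem:mittight}, i.e., to tighten the family $\ALP M^i_t$ through coercivity of the cost in each action coordinate. Once the action spaces are compact, tightness becomes automatic and the two structural conditions become superfluous. Parts 1 and 2, which are retained, still supply everything else: Part 1 contains Assumption \ref{assm:abscont}, so Theorem \ref{thm:wit1988} produces the equivalent RST problem in which agent $(i,t)$ observes an independent observation $\VEC Y^i_t$ with law $\nu^i_t$ and the cost $c\varphi$ is non-negative and continuous, while Part 2 encodes the no-sharing information structure that justifies working with the single-agent conditional measures $\pi^i_t(d\VEC u^i_t|\VEC y^i_t)$.

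First I would fix a minimizing sequence $\{(\pi^{1:N}_{1:T})_n\}$ and form $(\lambda^i_t)_n(d\VEC u^i_t,d\VEC y^i_t) := (\pi^i_t)_n(d\VEC u^i_t|\VEC y^i_t)\,\nu^i_t(d\VEC y^i_t)$ exactly as in the proof of Theorem \ref{thm:dynteam}. Because $\nu^i_t$ is a probability measure on the Polish space $\ALP Y^i_t$ (hence tight) and $\ALP U^i_t$ is compact, $\{(\lambda^i_t)_n\}$ lies in a tight subset of $\wp(\ALP Y^i_t \times \ALP U^i_t)$ directly; this step replaces Lemma \ref{lem:mittight}. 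Prohorov's theorem combined with a finite diagonal extraction over the $(i,t)$ indices then yields a subsequence $\{n_k\}$ along which each $(\lambda^i_t)_{n_k}$ converges weak* to some $(\lambda^i_t)_0$.

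To close the argument, I would view the RST as a static team with ``state'' $\omega_0$ and check that Assumption \ref{assm:condind} is satisfied trivially: each $\VEC Y^i_t$ is unconditionally independent of $\omega_0$ and of the other observations, so the densities $\eta^i_t$ and $\rho^i_t$ can be taken identically equal to $1$, which are continuous and satisfy Condition {\bf C1} with VC function $h \equiv 1$. Lemma \ref{lem:control2} then guarantees that each limit retains the factored form $(\pi^i_t)_0(d\VEC u^i_t|\VEC y^i_t)\,\nu^i_t(d\VEC y^i_t)$ for some $(\pi^i_t)_0 \in \ALP R^i_t$, and Theorem \ref{thm:sub2} applied to $c\varphi$ yields $\liminf_k J((\pi^{1:N}_{1:T})_{n_k}) \geq J((\pi^{1:N}_{1:T})_0)$, so $(\pi^{1:N}_{1:T})_0$ attains the infimum. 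A final appeal to Blackwell's irrelevant information theorem converts it to a team-optimal deterministic profile. The step I expect to require the most care is the verification that the entire chain leading to Theorem \ref{thm:sub2} (in particular Lemma \ref{lem:lambdan}, which requires $\sigma$-compactness) accommodates this trivial instantiation of Assumption \ref{assm:condind} with $\omega_0$ as state; in the standard finite-dimensional Polish settings of interest this is automatic.
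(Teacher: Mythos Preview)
Your proposal is correct and takes essentially the same route as the paper: reduce to the RST via Assumption \ref{assm:abscont} (contained in Part 1), then use compactness of the $\ALP U^i_t$ to sidestep the coercivity-based tightness argument and invoke the static-team existence machinery with independent observations. The paper's proof is simply a more compressed version of yours---it observes that after static reduction the RST has continuous cost $c\varphi$ and compact action spaces, and then cites Corollary \ref{cor:stateamindobs} directly rather than re-deriving tightness, Lemma \ref{lem:control2}, Theorem \ref{thm:sub2}, and Blackwell's theorem by hand.
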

\begin{proof}
As a result of Assumption \ref{assm:abscont}, the dynamic team problem is equivalent to a reduced static team problem defined in Definition \ref{def:redteam}. Note that due to the assumption, the cost function of the reduced static team problem is continuous. Applying the result of Corollary \ref{cor:stateamindobs} to the reduced static team problem, we conclude that the reduced problem admits an optimal solution in deterministic strategies. The optimal strategy of Agent $(i,t)$ in the reduced static team problem is also the optimal strategy of Agent $i$ at time step $t$ in the dynamic team problem due to the equivalence of the two team problems. This completes the proof of the theorem.
\end{proof}

We now revisit Witsenhausen's counterexample.
\subsection*{Revisiting Witsenhausen's Counterexample}
Recall that we were unable to prove the existence of a solution to the corresponding RST problem of Witsenhausen's counterexample using the results we obtained for static team problems in Sections \ref{sec:staticteam} and \ref{sec:unbounded}. We now outline the essential steps of the proof above adapted to the RST of Witsenhausen's counterexample. 

Let $\nu_1\in\wp(\ALP Y_1)$ and $\nu_2\in\wp(\ALP Y_2)$ be probability measures that admit mean-zero unit-variance Gaussian density functions. First note that if both agents apply zero control, then the expected cost is $\ex{Y_1^2} = 1$, which is finite. Let $\ALP R_1$ and $\ALP R_2$ be the behavioral strategy spaces of the first and the second controller, respectively. Let $\ALP P_1\subset\ALP R_1$ and $\ALP P_2\subset\ALP R_2$ be the sets of behavioral strategies of the controllers such that there exist $\pi_1\in\ALP P_1$ and $\pi_2\in\ALP P_2$ which yield $J(\pi_1,\pi_2)\leq 1$. Now, the following four steps lead to the existence of a team-optimal solution to this problem with non-classical information.   

\begin{enumerate}
\item For any $\pi_2\in\ALP R_2$, we have
\beqq{ & & \int_{\ALP Y_2\times\ALP U_2} c_S\: \pi_2(du_2|y_2)\nu_2(dy_2)\\
&\geq& \int_{\ALP Y_2\times\ALP U_2} (u_1-y_1)^2\exp\left(\frac{-u_1^2+2y_2u_1}{2}\right)\frac{1}{\sqrt{2\pi}}\exp\left(-\frac{y_2^2}{2}\right) \pi_2(du_2|y_2)dy_2,\\
 &=& (u_1-y_1)^2\int_{\ALP Y_2\times\ALP U_2} \frac{1}{\sqrt{2\pi}}\exp\left(\frac{-(y_2-u_1)^2}{2}\right)\pi_2(du_2|y_2)dy_2\\
& = & (u_1-y_1)^2, }
where the first inequality follows from dropping the quadratic term $(u_2-u_1)^2$ from the expression of $c_S$, the second equality is immediate, and the third equality follows from the fact that
\beqq{\pi_2(du_2|y_2) \:\frac{1}{\sqrt{2\pi}}\exp\left(\frac{-(y_2-u_1)^2}{2}\right)dy_2}
is a probability measure over $\ALP U_2\times\ALP Y_2$.
This is also a consequence of Lemma \ref{lem:varphi}.
\item The function $(u_1-y_1)^2$ is in class $\ic{\ALP Y_1,\ALP U_1}$ and $(u_2-u_1)^2\exp\left(\frac{-u_1^2+2y_2u_1}{2}\right)$ is in class $\ic{\ALP U_1\times\ALP Y_1\times\ALP Y_2,\ALP U_2}$ by Lemma \ref{lem:cbarit}.
\item The set of measures $\ALP M_i$, defined by 
\beqq{\ALP M_i:=\Big\{\lambda_i\in\wp(\ALP Y_i\times\ALP U_i):\lambda_i(du_i,dy_i) = \pi_i(du_i|y_i)\nu_i(dy_i), \:\pi_i\in\ALP P_i\Big\}}
is tight for $i\in\{1,2\}$ by Lemma \ref{lem:mittight}. The proof essentially uses Points 1 and 2 above, coupled with Lemma \ref{lem:q2kcom} in a sequential fashion. Using Point 1, we conclude
\beqq{\int_{\ALP U_1\times\ALP Y_1} (u_1-y_1)^2 \lambda_1(du_1,dy_1) &\leq&  \ex{Y_1^2} =  1} 
for all $\lambda_1\in\ALP M_1$ (or equivalently $\pi_1\in\ALP P_1$). Then, using Point 2 and Lemma \ref{lem:q2kcom}, we conclude that the set of measures $\ALP M_1$ is tight. Now, notice that
\beqq{\int_{\ALP Y_{1:2}\times\ALP U_{1:2}}(u_2-u_1)^2\exp\left(\frac{-u_1^2+2y_2u_1}{2}\right)\:d\lambda_1\:d\lambda_2 &\leq&   \ex{Y_1^2} = 1.}
Since $\ALP M_1$ is tight and $(u_2-u_1)^2\exp\left(\frac{-u_1^2+2y_2u_1}{2}\right)$ is in class $\ic{\ALP U_1\times\ALP Y_1\times\ALP Y_2,\ALP U_2}$ (see Point 2 above), we conclude that $\ALP M_2$ is tight by Lemma \ref{lem:q2kcom}.
\item Finally, using same arguments as in the proof of Theorem \ref{thm:stateamindobs}, one can conclude that there exist optimal strategies of the agents.
\end{enumerate}
Notice that the above proof of existence of a solution to Witsenhausen's counterexample is completely different from either of the proofs given in \cite{witcount} and \cite{wu2011}.

This concludes the discussion in this section. In the next section, we show the existence of optimal solution in LQG team problems with ``no observation sharing'' information structures using the results of this section.

\section{LQG Teams}\label{sec:lqgteam}
We now consider a class of dynamic team problems in which the state, action and observation spaces are Euclidean spaces, the state transition and observation functions are linear, and the primitive random variables are mutually independent Gaussian random variables. In particular, we assume that the observation equation for Agent $(i,t)$ is given by
\beq{\label{eqn:obsitlqg}\VEC y^i_t = h^i_t(\omega_0,\VEC u^{1:N}_{1:t-1})+\omega^i_t,}
where $h^i_t$  is a linear function of its arguments and $\omega^i_t$ is a zero-mean Gaussian random vector with positive definite covariance.

We assume that the cost function of the dynamic team problem is quadratic in the actions of the agents and is of the following form:
\beq{c(\omega_0,\VEC y^{1:N}_{1:T},\VEC u^{1:N}_{1:T}) &=& \sum_{t=1}^T\sum_{i=1}^N \|\VEC u^i_t - p^i_t(\omega_0,\VEC u^{1:N}_{1:t-1},\VEC y^{1:N}_{1:t-1},\VEC y^i_t)\|_{R^i_t}^2\nonumber\\ & & +\kappa(\omega_0,\VEC y^{1:N}_{1:T},\VEC u^{1:N}_{1:T}),\label{eqn:lqgcost} }
where $\{R^i_t\}_{i\in[N],t\in[T]}$ is a sequence of positive definite matrices of appropriate dimensions, $\{p^i_t:\Omega_0\times\ALP U^{1:N}_{1:t-1}\times\ALP Y^{1:N}_{1:t-1}\times\ALP Y^i_t\rightarrow\ALP U^i_t\}_{i\in[N],t\in[T]}$ is a sequence of continuous functions\footnote{In most cases of interest, $\{p^i_t\}_{i\in[N],t\in[T]}$ are linear maps, which is the reason why we have called this class of teams LQG, realizing that in general, with $p^i_t$'s nonlinear, $c$ is not going to be quadratic in the $u^i_t$'s.} and $\kappa$ is a non-negative continuous function. We henceforth refer to teams satisfying the above assumptions and having a cost function of the form \eqref{eqn:lqgcost} as LQG team problems with no observation sharing, and address existence of team-optimal solutions below.

We now turn our attention to showing the existence of optimal strategies in LQG team problems with no observation sharing.
\begin{theorem}\label{thm:lqgteam}
Consider a dynamic LQG team problem as formulated above, where the agents do not share their observations and the observation of each agent as given by \eqref{eqn:obsitlqg} is corrupted by additive Gaussian noise. If the cost is given by \eqref{eqn:lqgcost}, then the dynamic LQG team admits a team-optimal solution in deterministic strategies.
\end{theorem}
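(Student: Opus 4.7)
The plan is to establish Theorem~\ref{thm:lqgteam} by verifying that the dynamic LQG team problem described satisfies every part of Assumption~\ref{assm:dynteam}, and then directly invoking Theorem~\ref{thm:dynteam}. Since the main existence theorem for dynamic teams is already in hand, the work reduces to a routine (but structural) verification of hypotheses tailored to the LQG structure.

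First, I would check Part~1 of Assumption~\ref{assm:dynteam}. Assumption~\ref{ass:stf} holds trivially because the state transition and observation functions are linear, hence continuous. Assumption~\ref{assm:abscont} is obtained immediately from Lemma~\ref{lem:gaussiannoise}: the observation equation \eqref{eqn:obsitlqg} has the required additive Gaussian noise form, so setting $\varphi^i_t(\VEC y^i_t; \omega_0, \VEC u^{1:N}_{1:t-1}) = \eta^i_t(\VEC y^i_t - h^i_t(\omega_0,\VEC u^{1:N}_{1:t-1}))/\eta^i_t(\VEC y^i_t)$ and $\nu^i_t(d\VEC y^i_t) = \eta^i_t(\VEC y^i_t) d\VEC y^i_t$ supplies a valid decomposition. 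For Assumption~\ref{ass:one}, I would exhibit a simple strategy with finite expected cost: taking $\pi^i_t(d\VEC u^i_t \mid \VEC y^i_t) = \ind{0}(d\VEC u^i_t)$ for every $(i,t)$, the cost becomes $\sum_{i,t}\|p^i_t(\cdot)\|_{R^i_t}^2 + \kappa(\cdot)$ evaluated at Gaussian observations; continuity of $p^i_t$ and $\kappa$ together with polynomial growth (arising naturally in LQG models, or assumed) ensures integrability against the Gaussian primitives. Part~2 (no observation sharing, no recall) is by hypothesis.

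Next, I would verify Part~3, which is the main structural check. The cost \eqref{eqn:lqgcost} is already written in the required additive form, with
\[
c^i_t(\VEC u^i_t,\omega_0,\VEC u^{1:N}_{1:t-1},\VEC y^{1:N}_{1:t-1},\VEC y^i_t) = \|\VEC u^i_t - p^i_t(\omega_0,\VEC u^{1:N}_{1:t-1},\VEC y^{1:N}_{1:t-1},\VEC y^i_t)\|_{R^i_t}^2.
\]
I need to show $c^i_t \in \ic{\Omega_0 \times \ALP U^{1:N}_{1:t-1} \times \ALP Y^{1:N}_{1:t-1} \times \ALP Y^i_t, \ALP U^i_t}$. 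Because $R^i_t$ is positive definite, writing $\lambda_{\min}(R^i_t) > 0$ for its smallest eigenvalue, for any compact $\SF K$ in the first-argument space the image $p^i_t(\SF K)$ is compact by continuity of $p^i_t$, so given $M > 0$ one may choose a closed ball $\SF L$ in $\ALP U^i_t$ large enough that $\inf_{\VEC u^i_t \in \SF L^\complement,\,\VEC z \in p^i_t(\SF K)} \lambda_{\min}(R^i_t)\|\VEC u^i_t - \VEC z\|^2 \geq M$; this yields condition~1 in Definition~\ref{def:icclass}. The non-negativity and continuity of $c^i_t$ and $\kappa$ are immediate. Part~4 of Assumption~\ref{assm:dynteam} follows because the Gaussian density $\eta^i_t$ is strictly positive everywhere, which makes the $\varphi^i_t$ constructed in Lemma~\ref{lem:gaussiannoise} strictly positive in its domain.

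Having verified all parts of Assumption~\ref{assm:dynteam}, the conclusion follows immediately from Theorem~\ref{thm:dynteam}: the dynamic LQG team admits a team-optimal solution in deterministic strategies. The only step that requires a moment's thought is the $\ic{\cdot,\cdot}$ verification for $c^i_t$, and the potentially delicate issue of integrability in Assumption~\ref{ass:one}; neither, however, is a true obstacle in the LQG setting since compositions of linear maps with Gaussian inputs are well-behaved and the coercivity of the quadratic penalty in $\VEC u^i_t$ is exactly what the class $\ic{\cdot,\cdot}$ is designed to capture.
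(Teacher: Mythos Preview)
Your proposal is correct and follows essentially the same approach as the paper: both proofs proceed by verifying each part of Assumption~\ref{assm:dynteam} (continuity of dynamics and observations, absolute continuity via Lemma~\ref{lem:gaussiannoise}, finite cost under zero control, the $\ic{\cdot,\cdot}$ property of $c^i_t$ from positive definiteness of $R^i_t$ and continuity of $p^i_t$, and strict positivity of $\varphi^i_t$) and then invoking Theorem~\ref{thm:dynteam}. Your verification of the $\ic{\cdot,\cdot}$ condition via compactness of $p^i_t(\SF K)$ is slightly more explicit than the paper's, but the argument is otherwise identical.
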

\begin{proof}
In order to establish the result, we need to verify that all parts of Assumption \ref{assm:dynteam} are satisfied by the LQG team problem.

The linearity of state transition and observation equations implies that Assumption \ref{ass:stf} is satisfied and $\{h^i_t\}_{i\in[N],t\in[T]}$, as defined in \eqref{eqn:obsitlqg}, are continuous functions. If we apply zero control action, then the expected cost is finite because the cost is quadratic in the primitive random variables and their distributions are Gaussian. Thus, Assumption \ref{ass:one} is satisfied. Since the observation noises are additive and Gaussian, Assumption \ref{assm:abscont} is satisfied. Furthermore, due to the Gaussian nature of observation noise, we also conclude that $\varphi^i_t$ is strictly positive at all points in its domain for all $i\in[N]$ and $t\in[T]$ (see the proof of Lemma \ref{lem:gaussiannoise}).

The cost function $c$ is continuous. Since $p^i_t$ is continuous, the function $\|\VEC u^i_t - p^i_t(\omega_0,\VEC u^{1:N}_{1:t-1},\VEC y^{1:N}_{1:t-1},\VEC y^i_t)\|_{R^i_t}^2$ lies in the class $\ic{\Omega_0\times\ALP U^{1:N}_{1:t-1}\times\ALP Y^{1:N}_{1:t-1}\times\ALP Y^i_t,\ALP U^i_t}$ for all $i\in[N]$ and $t\in[T]$. The statement is then simply a consequence of Theorem \ref{thm:dynteam}.
\end{proof}

We have thus identified conditions under which an LQG team problem admits an optimal solution. In the next section, we consider a number of well-studied LQG team problems from the literature and establish the existence of team-optimal strategies. Existence of optimal strategies in some of the team problems formulated in the next section is established here for the first time.

\section{Examples}\label{sec:examples}
In this section, we present some examples of LQG teams with the ``no observation sharing'' information structure. In all the examples, Theorem \ref{thm:lqgteam} leads to the conclusion that team-optimal strategies exist. Except for scalar Witsenhausen's counterexample and the Gaussian test channel, existence of optimal strategies was not known for any of the LQG teams considered in this section.

\subsection{One-Agent Finite Horizon (Static Output Feedback) LQG problem}
Consider a linear system in which all primitive random variables are Gaussian and mutually independent of each other. The agent has a stagewise additive quadratic cost function. The information available to the controller at time $t$ is $\VEC Y_t$, where $\VEC Y_t = H_t\VEC X_t+\VEC W_t$ for some matrix $H_t$ of appropriate dimensions, that is, we have a static output feedback problem. The total cost to the controller is 
\beqq{c(\VEC x_{1:T+1}, \VEC u_{1:T}) = \sum_{t=1}^T \left( \VEC x^\transpose_{t+1}Q\VEC x_{t+1}+\VEC u^\transpose_{t}R\VEC u_t \right), \qquad Q\geq 0, R>0.}
Since this is an LQG problem with no sharing of observation, it satisfies both hypotheses of Theorem \ref{thm:lqgteam}. Using Theorem \ref{thm:lqgteam}, we then conclude that an optimal static output feedback controller exists. This solution, however, need not be linear \cite{basar1976}.

\subsection{The Gaussian Test Channel}
The Gaussian test channel consists of an encoder and a decoder. The source observes a zero mean Gaussian random variable $X_1$ with variance $\sigma_1^2$, which is encoded by the encoder (Agent 1), and the encoded symbol $U_1$ is sent across a noisy channel to a decoder. The additive noise $W_2$ on the channel is assumed to be a zero mean Gaussian random variable with variance $\sigma_w^2$. The decoder (agent 2) observes the corrupted message $Y_2$, and estimates the realization of the random variable $X_1$ available at the source. The decoder's estimate is denoted by $U_2$.

\begin{figure}[bth]
\centering
\scalebox{0.7}{\input{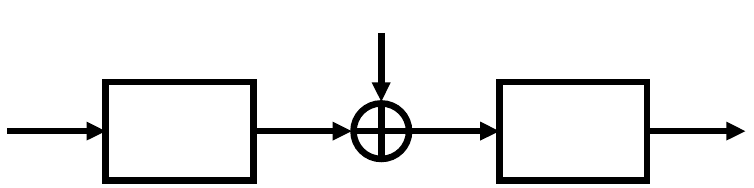_t}}
\caption{\label{fig:wit}A figure depicting the unified setup of the Gaussian test channel and Witsenhausen's counterexample from \cite{bansal1987}.}
\end{figure}

The information structure of the encoder is $ I^1_1 =\{X_1\}$ and of the decoder is $I^2_2 = \{Y_2\}$. Thus, this is an example of a team with asymmetric information. Note that the observation of Agent 2 satisfies the observation equation \eqref{eqn:obsitlqg}. Lemma \ref{lem:gaussiannoise} implies that this dynamic team problem admits a static reduction. The cost function of the team of encoder and decoder is\footnote{For the Gaussian test channel, this corresponds to ``soft-constrained'' version; the standard version has a second moment (hard) constraint on $u_1$. One can show, however, that existence of an optimal solution to one implies existence to the other, and {\it vice versa}.}
\beqq{c(x_1,u_1,u_2) = \lambda u_1^2+(u_2-x_1)^2, \qquad \lambda>0.}
One can check that the cost function of the team is of the form in \eqref{eqn:lqgcost}. It is well known that the optimal encoding and decoding strategies are linear in their arguments, despite the fact that the information structure is non-nested. The only known proof of this result (and therefore of the existence of a solution to this team problem) is an indirect one, that uses information theoretic concepts; see, for example \cite{bansal1987}. We now have here another proof of the existence of team-optimal strategies to the Gaussian test channel as a consequence of Theorem \ref{thm:lqgteam}.

\begin{remark}
The existence result also holds for the more general two-agent LQG problem introduced in \cite{bansal1987}, which subsumes the Gaussian test channel and Witsenhausen's counterexample as special cases. For such extensions, see also \cite{basar2008}.
\end{remark}

\subsection{Multidimensional Gaussian Test Channel and Witsenhausen's Co- unterexample}
Consider the setup depicted in Figure \ref{fig:wit}, with a difference that all random vectors take values in finite dimensional Euclidean spaces of appropriate dimensions. Furthermore, we assume that $\VEC W_2$ has a strictly positive definite covariance, and the entries in $\VEC W_2$ can be correlated. Consider the cost function of the team as
\beqq{c(\VEC x_1,\VEC u_1,\VEC u_2) = \lambda \|\VEC u_1\|^2+\|\VEC u_2-H \VEC x_1\|^2, \qquad \lambda>0,}
where $H$ is a matrix of appropriate dimensions.

It has been shown that under some specific assumptions on the covariance matrix of the noise variable $\VEC W_2$, optimal encoding and decoding schemes exist in the multidimensional Gaussian test channel, again using information theoretic tools; see \cite[Section 11.2.3]{yukselbook} and references therein for a review of such results.  In particular, if certain ``matching conditions'' hold, that is, if the rate distortion achieving transition kernel is matched with the channel capacity achieving source distribution (see \cite{gastpar2008uncoded} and Remark 11.2.1 in \cite{yukselbook} in the context of Gaussian systems), then optimal encoding-decoding strategies will exist.

As in the scalar case, the multidimensional Gaussian test channel admits static reduction and the cost function has the same form as in \eqref{eqn:lqgcost}. Theorem \ref{thm:lqgteam} implies that optimal encoding-decoding strategies exist even if $\VEC X_1$, $\VEC U_1$ and $\VEC U_2$ take values in different Euclidean spaces. Thus, a large class of multidimensional Gaussian test channel problems admits optimal solutions.

A vector version of Witsenhausen's counterexample has also been studied recently \cite{grover2010}. In this formulation, $\VEC X_1,\VEC U_1,\VEC W_1$ and $\VEC U_2$ are all vectors in $\Re^n$ with primitive random variables $\VEC X_1$ and $\VEC W_1$ being mutually independent Gaussian random vectors. Until now, it was not known if vector versions of Witsenhausen's counterexample admit optimal solutions. We now know that the answer is in the affirmative, thanks to Theorem \ref{thm:lqgteam}.

\subsection{A Gaussian Relay Channel}\label{sub:grc}
Consider now the Gaussian relay channel depicted in Figure \ref{fig:GCE}. It comprises an encoder, a certain number of relays and a decoder. The encoder encodes its observation and transmits it over the communication channel. The first relay receives the transmitted signal with an additive noise, re-encodes it and transmits it to the next relay. Thereafter, each relay observes the signal that is transmitted by the previous relay with an additive noise, re-encodes it and transmits it to the next relay. The decoder receives an additive noise corrupted signal transmitted by the last relay, and then decodes it to obtain the best possible estimate of the encoder's observation in the mean-square sense. All the primitive random variables are assumed to be mutually independent and have Gaussian distributions.

This problem was formulated in \cite{lipsa2011} where the authors have shown that non-linear strategies outperform linear strategies when there are two or more relays. Zaidi et al. studied this problem in \cite{zaidi2011}, and they showed that in fact, even with one relay, quantization based strategies outperform linear strategies of the agents. Thus, linear strategies of encoder and decoder are optimal only in the case of the Gaussian test channel discussed earlier, but not in the case of the Gaussian relay channel.

\begin{figure}[bth]
\centering
\scalebox{0.7}{\input{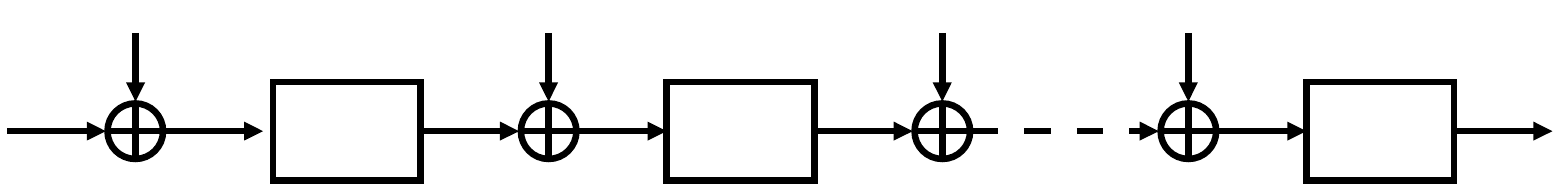_t}}
\caption{\label{fig:GCE}A figure depicting the arrangement of encoder, relays, and decoder in the Gaussian relay channel. Agent 1 is the encoder, Agent $N$ is the decoder while Agents 2 to $N-1$ are relays in the figure.}
\end{figure}

For a concrete formulation of the problem, assume that there are $N-2$ relays and all state, action, noise and observation spaces are the real line. The encoder observes a noise corrupted version of zero-mean Gaussian random variable, $X_1$, with variance $\sigma_1^2$. The observation noise of the encoder denoted by $W_1$ and the observation of the encoder is $Y_1:=X_1+W_1$. The encoder's information is $ I^1 = \{Y_1\}$ and the action of the encoder is $U_1$. For $i\in\{2,\ldots,N-1\}$, the $i^{th}$ relay receives a noise corrupted version of the transmitted signal, denoted by $Y_i := U_{i-1}+W_i$, and $ I^i =\{Y_i\}$. The $i^{th}$ relay outputs $U_i$. Finally, the decoder receives $Y_N$ and outputs $U_N$, which is an estimate of the realization of the random variable $X_1$. The noise variables $W_i,i\in [N]$ are assumed to be pairwise independent, mean-zero Gaussian random variables with some specified variances and independent of the random variable $X_1$. Since the observations of the Agents $2,\ldots,N-
1$ satisfy the observation equation \eqref{eqn:obsitlqg}, we conclude that the dynamic team problem admits a static reduction using Lemma \ref{lem:gaussiannoise}. The cost function $c$ of the team is
\beqq{c(x_1, u_{1:N}) = (u_{N}-x_1)^2+\sum_{i=1}^{N-1}\lambda_iu_i^2,\qquad \lambda_i>0.}
As we mentioned earlier, it is known for this problem that for any number of relays, non-linear strategies outperform best linear ones. However, it is not known whether there exist optimal strategies for the agents.

Since the problem admits a static reduction and the cost function of the team is of the form in \eqref{eqn:lqgcost}, we conclude that optimal encoding, decoding and relay strategies exist for this problem by the result of Theorem \ref{thm:lqgteam}. It is not difficult to see that the same line of reasoning (along with Theorem \ref{thm:lqgteam}) applies to prove that optimal strategies exist for agents in a vector version of this problem as well, where all random variables take values in appropriate dimensional Euclidean spaces (not necessarily of the same dimensions).

\section{Conclusion}\label{sec:concteam}
In this paper, we have identified a set of sufficient conditions on a stochastic team problem with ``no observation sharing'' information structure that guarantee an optimal solution to the team problem. In particular, in a static team problem, if the cost function is continuous and has a certain structure, and the observation channels satisfy certain technical conditions, then there exists an optimal solution.

We used Witsenhausen's static reduction technique to obtain a similar result for a class of dynamic team problems with no sharing of observations among the agents. Furthermore, we proved that LQG team problems with no sharing of observations admit team-optimal solutions under some technical conditions on the cost functions. As a consequence of one of the main results of the paper, we also showed that several dynamic LQG team problems from the literature admit team-optimal solutions. The approach developed in the paper and the specific results obtained settle for good a number of open questions on the existence of optimal strategies in dynamic teams. Furthermore, the results of this paper can be applied to optimal real-time coding/decoding problems to prove the existence of optimal strategies, where it has not been known earlier if optimal policies exist and the few special cases where such results exist have relied on strict information theoretic source-channel matching conditions.

As for the future work, one goal is to obtain approximately optimal policies and numerical techniques to make such optimization problems tractable. Another goal is to obtain explicit analytical solutions for some classes of dynamic teams. Obtaining conditions under which a team-optimal solution exists in a team with ``observation sharing'' information structure is also an important area for further research. Finally, informational aspects of non-cooperative stochastic games is a further relevant area of study.

\section*{Acknowledgement}
The authors gratefully acknowledge incisive comments from Yihong Wu on an initial draft, which has improved several parts of the paper.

\appendix

\section{A Team with Observation Sharing: A Counterexample}\label{app:counterexample}
In this appendix, we consider a two-agent static team problem in which the observation of Agent 2 is shared with Agent 1. The purpose of this counterexample is to illustrate that weak* topology on the space of joint measures of agents' actions and information is not sufficient for the existence of optimal strategies in teams with observation sharing information structures. 

Let $\ALP Y^1 = \ALP Y^2 = [0,1]$ and $\ALP U^1=\ALP U^2 = \{0,1\}$. For $i\in\{1,2\}$, let $Y^i$ be a uniformly distributed random variable taking values in $\ALP Y^i$ that is observed by Agent $i$. Further, assume that $Y^1$ and $Y^2$ are mutually independent random variables. Take the cost function of the team as
\beqq{c(y^{1:2},u^{1:2}) = u^1(1-u^2).}
Note that the cost is a non-negative, continuous, and bounded function of its arguments. Agent 1 decides on $u^1$ based on the realizations $y^1$ and $y^2$, and Agent 2 decides on $u^2$ based on the realization $y^2$. We show that if we take weak* convergent sequences of measures $\{\pi^1_n(du^1|y^1,y^2)dy^1\:dy^2\}_{n\in\Na}$ and $\{\pi^2_n(du^2|y^2)dy^2\}_{n\in\Na}$ that preserve informational constraints in the limit, then the corresponding sequence of joint measures over observations and actions of both agents, that is,
\beqq{\Big\{\pi^1_n(du^1|y^1,y^2)\pi^2_n(du^2|y^2)dy^1\:dy^2\Big\}_{n\in\Na}}
may not converge in the weak* limit. Consequently, a result similar to that of Theorem \ref{thm:intc} may not hold for static teams with observation sharing information structures. This also shows that we need stronger assumptions on the underlying distributions of the primitive random variables and topologies on the measure spaces to show the existence of optimal strategies in teams with observation sharing information patterns.

We now construct the sequences $\{\pi^1_n\}_{n\in\Na}$ and $\{\pi^2_n\}_{n\in\Na}$. For any $n\in\Na$, define $h_n:[0,1]\rightarrow\{0,1\}$ as
\[ h_n(y) = \left\{\begin{array}{cl}
1 & \textrm{if } [2^n y] \textrm{ is even}\\
0 & \textrm{otherwise}.\end{array}\right.\]
Define $A_n\subset [0,1]$ as
\begin{eqnarray}
A_n = \{1\}\cup \bigcup_{k=0}^{2^{n-1}-1} \Big[\frac{2k}{2^n}, \frac{2k+1}{2^n}\Big).\label{eqn:an}
\end{eqnarray}
Then, $h_n(y)=1$ for all $y\in A_n$ and $0$ otherwise. Note that Lebesgue measures of $A_n$ and $A_n^\complement$ are equal and $\frac{1}{2}$. Consider sequences of strategies of Agents $1$ and $2$, given by
\[\pi^1_{n}(du^1|y^1,y^2) = \ind{h_n(y^1) h_n(y^2)}(du^1),\qquad \pi^2_{n}(du^2| y^2) = \ind{h_n(y^2)}(du^2),\qquad n\in\Na.\]
We have the following result, the proof of which is omitted.
\begin{lemma}\label{lem:pi1pi2}
The sequences of joint measures $\{\pi^1_{n}(du^1|y^1,y^2)dy^1dy^2\}_{n\in\Na}$ and $\{\pi^2_{n}(du^2|y^2)dy^2\}_{n\in\Na}$ converge in the weak* sense, respectively, to
\beqq{\pi^1_0(du^1|y^1,y^2)dy^1dy^2 &=& \left( \frac{3}{4}\ind{0}(du^1)+\frac{1}{4}\ind{1}(du^1) \right)dy^1dy^2,\\
\pi^2_0(du^2|y^2)dy^2 &=& \left(\frac{1}{2}\ind{0}(du^2)+\frac{1}{2}\ind{1}(du^2)\right)dy^2.}
\end{lemma}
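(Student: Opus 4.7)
The plan is to verify both weak$^*$ limits by testing against arbitrary bounded continuous functions and reducing the computation to a homogenization statement for the rapidly oscillating indicator $\IND_{A_n}$.

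The key auxiliary fact is that for any $g\in C([0,1])$,
\[\int_0^1 g(y)\,\IND_{A_n}(y)\,dy \longrightarrow \tfrac{1}{2}\int_0^1 g(y)\,dy \quad\text{as }n\to\infty.\]
This is transparent from the dyadic construction of $A_n$ in \eqref{eqn:an}: for any $g$ that is constant on dyadic intervals of length $2^{-m}$, once $n\geq m$ each such interval is split into equal Lebesgue measure between $A_n$ and its complement, so the identity holds exactly. Uniform approximation of arbitrary $g\in C([0,1])$ by such dyadic step functions then yields the claim. For the second convergence in the lemma, any $f\in C_b([0,1]\times\{0,1\})$ satisfies
\[\int f(y^2,u^2)\,\pi^2_n(du^2|y^2)\,dy^2 = \int f(y^2,1)\,\IND_{A_n}(y^2)\,dy^2 + \int f(y^2,0)\bigl(1-\IND_{A_n}(y^2)\bigr)dy^2,\]
and applying the one-dimensional homogenization to each piece gives $\tfrac{1}{2}\int[f(y^2,0)+f(y^2,1)]\,dy^2$, which is precisely integration against $\pi^2_0(du^2|y^2)\,dy^2$.

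For the first convergence I would use the two-dimensional analogue
\[\int g(y^1,y^2)\,\IND_{A_n}(y^1)\IND_{A_n}(y^2)\,dy^1\,dy^2 \longrightarrow \tfrac{1}{4}\int g(y^1,y^2)\,dy^1\,dy^2,\]
established by uniformly approximating $g\in C([0,1]^2)$ by finite linear combinations of tensor products $g^1(y^1)g^2(y^2)$ via Stone--Weierstrass on the compact product, and then applying Fubini together with the one-dimensional lemma in each factor. Given $f\in C_b([0,1]^2\times\{0,1\})$, splitting the integrand according to whether $h_n(y^1)h_n(y^2)$ equals $1$ or $0$ and invoking the two-dimensional limit produces $\int\bigl[\tfrac{3}{4}f(y^1,y^2,0)+\tfrac{1}{4}f(y^1,y^2,1)\bigr]dy^1\,dy^2$, which is exactly integration against the claimed $\pi^1_0(du^1|y^1,y^2)\,dy^1\,dy^2$.

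There is no genuine analytic obstacle; the argument is a Riemann--Lebesgue-type equidistribution for dyadic indicators, promoted to two dimensions by tensorization. The only mild subtlety is that the action spaces are the discrete set $\{0,1\}$, so dependence on $u^i$ has to be handled by splitting the integrand on its two possible values rather than by a joint continuous approximation; but this causes no difficulty once the oscillating factor has been isolated in the test integral.
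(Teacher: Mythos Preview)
Your argument is correct. The paper omits the proof of this lemma entirely, so there is nothing to compare against; your equidistribution approach via dyadic step-function approximation and Stone--Weierstrass tensorization is exactly the natural way to fill the gap, and all the computations check out (minor quibble: you need $n>m$ rather than $n\geq m$ for the exact bisection of each dyadic interval, but this is immaterial for the limit).
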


Note that $u^1$ and $u^2$ are independent of the realizations $y^1$ and $y^2$ in the limit. For any natural number $n$, we have
\beqq{& & \int c(y^1,y^2,u^1,u^2)\pi^1_{n}(du^1|y^1,y^2)\pi^2_{n}(du^2| y^2) dy^1dy^2 \\
 & & = \int c(y^1,y^2,h_n(y^1)h_n(y^2),h_n(y^2)) dy^1dy^2,\\
& & =h_n(y^1)h_n(y^2)(1-h_n(y^2)) = 0.}
However, in the limit
\beqq{\int c(y^1,y^2,u^1,u^2)\pi^1_0(du^1|y^1,y^2)\pi^2_0(du^2| y^2) dy^1dy^2 = \pr{u^1 = 1, u^2=0} = \frac{1}{8}.}
Hence, the sequence of measures $\Big\{\pi^1_{n}(du^1|y^1,y^2)\pi^2_{n}(du^2|y^2)dy^1\:dy^2\Big\}_{n\in\Na}$ does not converge to $\pi^1_0(du^1|y^1,y^2)\pi^2_0(du^2|y^2)dy^1\:dy^2$ in the weak* topology.


Our analysis and existence result built upon the properties of weak* convergent sequence of measures. It is evident from the above counterexample that this convergence notion is not sufficient, and a stronger notion of topologies on the measure spaces is needed to show the existence of solutions to teams with observation sharing information structures\footnote{Certain instances of such information structures can be viewed as centralized information structures \cite{NayyarBookChapter}, albeit over a much larger state and action spaces of the agents.}. We leave this topic for future research.

\section{Proof of Lemma \ref{lem:unifequi}}\label{app:unifequi}
Since $g$ is uniformly continuous, we can assume that
for every $\epsilon>0$, there exists a $\delta>0$ such that for all $\VEC a_0\in\ALP A$, $\VEC c_0\in\ALP C$ and $\VEC a\in\ALP A$, $\VEC c\in\ALP C$ satisfying $d_{\ALP A}(\VEC a,\VEC a_0)<\delta$, $d_{\ALP C}(\VEC c,\VEC c_0)<\delta$, we have
\beqq{\sup_{\VEC b\in\ALP B}|g(\VEC a,\VEC b,\VEC c)-g(\VEC a_0,\VEC b,\VEC c_0)|<2\epsilon.}
Let $M:=\sup_{\VEC a\in\ALP A} \int h_{(\eta,\nu)}d\nu<\infty$. Consider any probability measure $\mu\in\wp(\ALP Y\times\ALP B)$. Then, we get
\beqq{& & |f_\mu(\VEC a,\VEC c)-f_\mu(\VEC a_0,\VEC c_0)| \\
&=& \left|\int_{\ALP Y\times\ALP B} g(\VEC a,\VEC b,\VEC c) \mu(d\VEC b|\VEC y) \eta(\VEC a,\VEC y) \nu(d\VEC y)-\int_{\ALP Y\times\ALP B} g(\VEC a_0,\VEC b,\VEC c_0) \mu(d\VEC b|\VEC y) \eta(\VEC a_0,\VEC y) \nu(d\VEC y)\right|,\\
&\leq &\int_{\ALP Y\times\ALP B} \Big|g(\VEC a,\VEC b,\VEC c) \eta(\VEC a,\VEC y) - g(\VEC a_0,\VEC b,\VEC c_0) \eta(\VEC a_0,\VEC y)\Big| \mu(d\VEC b|\VEC y) \nu(d\VEC y),\\
&\leq & \infnorm{g}M\epsilon+\int_{\ALP Y\times\ALP B} \Big|g(\VEC a,\VEC b,\VEC c)- g(\VEC a_0,\VEC b,\VEC c_0) \Big| \mu(d\VEC b|\VEC y) \eta(\VEC a_0,\VEC y) \nu(d\VEC y),\\
& < & (\infnorm{g}M+2)\epsilon.}
Also notice that $\{f_\mu(\cdot,\cdot)\}_{\mu\in\wp(\ALP Y\times\ALP B)}$ is bounded by $\infnorm{g}$. This completes the proof of the lemma.

\section{Proof of Lemma \ref{lem:unifequi2}}\label{app:corunifequi}
We mimic the steps of the proof of Lemma \ref{lem:unifequi} to prove this statement. First notice that $\{f_{\mu^{1:N}}(\cdot,\cdot)\}_{\mu^i\in\wp(\ALP Y^i\times\ALP B^i)}$ is uniformly bounded by $\infnorm{g}$. Now, we prove that this family of functions is equicontinuous. 

Let us define $\eta$, $M$ and $\nu$ as
\beqq{\eta(\VEC a,\VEC y^{1:N}) := \prod_{i=1}^N\eta^i(\VEC a,\VEC y^i),\quad M = \max_{i\in[N]} \sup_{\VEC a\in\ALP A}\int_{\ALP Y^i}h^id\nu^i, \quad \nu(d\VEC y^{1:N}) = \prod_{i=1}^N \nu^i(d\VEC y^i).}
First, note that for any $\VEC a\in\ALP A$ and $i\in[N]$,
\beq{\label{eqn:etaid}\int_{\ALP Y^i} \eta^i(\VEC a,\VEC y^i)\nu^i(d\VEC y^i) =\int_{\ALP Y^i} \pr{d\VEC y^i|\VEC a} = 1.}
Let $\epsilon>0$. Let $\delta^i>0$ be such that for any $d_{\ALP A}(\VEC a,\VEC a_0)<\delta^i$, we have
\beqq{|\eta^i(\VEC a,\VEC y^i)-\eta^i(\VEC a_0,\VEC y^i)|<\epsilon\: h^i(\VEC a_0,\VEC y^i).}
Pick $\delta = \min_{i\in[N]}\delta^i$. Fix $\VEC a_0\in\ALP A$. Now, for $\VEC a\in\ALP A$ such that $d_{\ALP A}(\VEC a,\VEC a_0)<\delta$, notice the following:
\beqq{\left|\eta(\VEC a,\VEC y^{1:N}) - \eta(\VEC a_0,\VEC y^{1:N})\right|  \hspace{7cm} \nonumber\\
\leq \sum_{j=1}^N \left( \prod_{i=1}^{j-1} \eta^i(\VEC a_0,\VEC y^i)\;\; \Big|\left( \eta^j(\VEC a,\VEC y^j)-\eta^j(\VEC a_0,\VEC y^j) \right)\Big| \;\prod_{i=j+1}^{N} \eta^i(\VEC a,\VEC y^i) \right)\\
\leq \epsilon \left(\sum_{j=1}^N \left( \prod_{i=1}^{j-1} \eta^i(\VEC a_0,\VEC y^i) \prod_{i=j+1}^{N} \eta^i(\VEC a,\VEC y^i) \right)h^j(\VEC a_0,\VEC y^j)\right),\hspace{1.6cm}}
where terms with $\prod_{i=1}^0$ and $\prod_{i=N+1}^{N}$ are replaced by 1. Using the above expression, we get
\beq{\left|\int_{\ALP B^{1:N}\times\ALP Y^{1:N}} g(\VEC a,\VEC b^{1:N},\VEC c)\left(\prod_{i=1}^N\mu^i(d\VEC b^i|\VEC y^i)\right) \left(\eta(\VEC a,\VEC y^{1:N}) - \eta(\VEC a_0,\VEC y^{1:N})\right)\nu(d\VEC y^{1:N})\right|\nonumber\\
< \epsilon\infnorm{g}\sum_{j=1}^N\int_{\ALP Y^{1:N}}  \left( \prod_{i=1}^{j-1}( \eta^i(\VEC a_0,\VEC y^i)d\nu^i)\prod_{i=j+1}^{N} (\eta^i(\VEC a,\VEC y^i)d\nu^i )\right) h^j(\VEC a_0,\VEC y^j)d\nu^j\nonumber\\
\leq \epsilon\infnorm{g}NM,\hspace{9cm} \label{eqn:eta1N}}
where we used \eqref{eqn:etaid}. Since $g$ is uniformly continuous, for every $\epsilon>0$, there exists a $\delta>0$ such that for all $\VEC a_0\in\ALP A$, $\VEC c_0\in\ALP C$ and $\VEC a\in\ALP A$, $\VEC c\in\ALP C$ satisfying $d_{\ALP A}(\VEC a,\VEC a_0)<\delta$, $d_{\ALP C}(\VEC c,\VEC c_0)<\delta$, we have
\beqq{\sup_{\VEC b^{1:N}\in\ALP B^{1:N}}|g(\VEC a,\VEC b^{1:N},\VEC c)-g(\VEC a_0,\VEC b^{1:N},\VEC c_0)|<2\epsilon.}
Using the inequality above and \eqref{eqn:eta1N}, one can show that
\beqq{|f_{\mu^{1:N}}(\VEC a,\VEC c)-f_{\mu^{1:N}}(\VEC a_0,\VEC c_0)|< (\infnorm{g}NM+2)\epsilon,}
which establishes the result.

\section{Proof of Lemma \ref{lem:lambdan}}\label{app:lambdan}
In order to prove the lemma, we first need the following result.
\begin{lemma}\label{lem:cnmun}
Let $\ALP A$ be a Polish space. Let $\{h_n:\ALP A\rightarrow\Re\}_{n\in\Na}$ be a convergent sequence of continuous and uniformly bounded functions and $h_0:\ALP A\rightarrow\Re$ be a continuous function such that for any compact subset $\SF A\subset\ALP A$, $\sup_{\VEC a\in\SF A}|h_n(\VEC a)-h_0(\VEC a)|\rightarrow 0$ as $n\rightarrow\infty$. Let $\{\mu_n\}_{n\in\Na\cup\{ 0\}}\subset\wp(\ALP A)$ be a weak* convergent sequence of measures such that $\mu_n\ws \mu_0$ as $n\rightarrow \infty$. Then,
\beqq{\lf{n}\int_{\ALP A} h_nd\mu_n = \int_{\ALP A} h_0d\mu_0, \quad\text{ and }\quad \lf{n}\left|\int_{\ALP A} h_nd\mu_n-\int_{\ALP A} h_nd\mu_0\right| = 0.}
\end{lemma}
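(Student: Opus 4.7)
\textbf{Proof proposal for Lemma \ref{lem:cnmun}.}

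The plan is to establish the first equality $\lim_n \int h_n\,d\mu_n = \int h_0\,d\mu_0$ by an ``add and subtract'' argument that splits the error into (i) a piece controlled by the weak* convergence $\mu_n \ws \mu_0$ applied to the fixed bounded continuous test function $h_0$, and (ii) a piece $\int(h_n - h_0)\,d\mu_n$ controlled by the uniform-on-compacts convergence $h_n \to h_0$ together with the tightness of $\{\mu_n\}_{n\in\Na}$. The second equality will then drop out as a quick corollary.

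First I would verify that $h_0$ is bounded: by hypothesis $\sup_n \infnorm{h_n} \le M$ for some $M<\infty$, and for each $\VEC a \in \ALP A$ the singleton $\{\VEC a\}$ is compact, so $h_n(\VEC a)\to h_0(\VEC a)$ pointwise, which forces $|h_0(\VEC a)|\le M$. Hence $h_0 \in C_b(\ALP A)$, so $\int h_0\,d\mu_n \to \int h_0\,d\mu_0$ by the definition of weak* convergence. Next, since $\ALP A$ is Polish and $\{\mu_n\}_{n\in\Na\cup\{0\}}$ is weak* convergent, Prohorov's theorem (cited as Theorem~8.6.2 of \cite{bogachev2006b} in the body of the paper) implies that this family is tight: for every $\epsilon>0$ there exists a compact $\SF K_\epsilon \subset \ALP A$ such that $\mu_n(\ALP A \setminus \SF K_\epsilon) < \epsilon$ uniformly in $n$. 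Writing
\beqq{\left|\int h_n\,d\mu_n - \int h_0\,d\mu_0\right| \le \left|\int (h_n - h_0)\,d\mu_n\right| + \left|\int h_0\,d\mu_n - \int h_0\,d\mu_0\right|,}
the second term tends to $0$ by weak* convergence, and the first is bounded by
\beqq{\int_{\SF K_\epsilon} |h_n - h_0|\,d\mu_n + 2M\,\mu_n(\ALP A \setminus \SF K_\epsilon) \le \sup_{\VEC a \in \SF K_\epsilon} |h_n(\VEC a) - h_0(\VEC a)| + 2M\epsilon,}
whose $\limsup$ is at most $2M\epsilon$ by the uniform-on-compacts hypothesis. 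Since $\epsilon>0$ is arbitrary, the first claim follows.

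For the second equality, I would simply write
\beqq{\left|\int h_n\,d\mu_n - \int h_n\,d\mu_0\right| \le \left|\int h_n\,d\mu_n - \int h_0\,d\mu_0\right| + \left|\int (h_0 - h_n)\,d\mu_0\right|.}
The first term on the right tends to $0$ by the claim just proved, and the second tends to $0$ by the same tightness-plus-uniform-convergence argument applied to the single tight measure $\mu_0$ (any probability measure on a Polish space is tight). The main (and essentially only) subtlety is marshaling Prohorov's theorem to get uniform tightness of $\{\mu_n\}$, but since this is a standard consequence of weak* convergence on Polish spaces it should not pose any real obstacle.
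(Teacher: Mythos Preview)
Your proposal is correct and follows essentially the same approach as the paper's proof: both use Prohorov's theorem to obtain uniform tightness of $\{\mu_n\}$, split $\int h_n\,d\mu_n - \int h_0\,d\mu_0$ via the triangle inequality into a weak*-convergence piece and an $\int(h_n-h_0)\,d\mu_n$ piece, and control the latter by a compact/non-compact decomposition. Your treatment is slightly more careful in explicitly verifying $h_0\in C_b(\ALP A)$, which the paper asserts without justification.
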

\begin{proof}
Since $\{\mu_n\}_{n\in\Na}$ is a weak* convergent sequence, it is tight, which further implies that for any $\epsilon>0$, there exists a compact set $\SF A_{\epsilon}\subset\ALP A$ such that $\mu_n(\SF A_{\epsilon}^\complement)<\epsilon$ for every $n\in\Na\cup\{0\}$. Fix $\epsilon>0$. Since $h_n$ converges uniformly to $h_0$ over the compact set $\SF A_{\epsilon}$, there exists $N_{\epsilon,1}\in\Na$ such that
\beqq{|h_n(\VEC a)-h_0(\VEC a)| <\epsilon, \quad \text{ for all }\VEC a\in\SF A_{\epsilon}\text{ and } n\geq N_{\epsilon,1}.}
Let $M$ be the uniform bound on the sequence of functions $\{h_n\}_{n\in\Na}$. Then, $\infnorm{h_0}\leq M$. Thus, for any $n\geq N_{\epsilon,1}$, we get
\beq{\left|\int_{\ALP A} h_nd\mu_n-\int_{\ALP A} h_0d\mu_n\right| &\leq & \int_{\SF A_{\epsilon}} |h_n(\VEC a)-h_0(\VEC a)| d\mu_n(\VEC a) +\int_{\SF A_{\epsilon}^\complement} |h_n(\VEC a)-h_0(\VEC a)| d\mu_n(\VEC a)\nonumber\\
&<& (1+2M)\epsilon.\label{eqn:12m}}
By the definition of weak* convergence of measures, there exists $N_{\epsilon,2}\in\Na$ such that
\beqq{\left|\int_{\ALP A} h_0d\mu_n - \int_{\ALP A} h_0d\mu_0\right|<\epsilon \quad\text{ for all }n\geq N_{\epsilon,2}.}
Take $N_\epsilon = \max\{N_{\epsilon,1},N_{\epsilon,2}\}$. Now, for $n>N_{\epsilon}$, we have
\beq{\left|\int_{\ALP A} h_nd\mu_n-\int_{\ALP A} h_0d\mu_0\right| &\leq & \left|\int_{\ALP A} h_nd\mu_n-\int_{\ALP A} h_0d\mu_n\right|+\left|\int_{\ALP A} h_0 d\mu_n -\int_{\ALP A} h_0 d\mu_0\right|\nonumber\\
&<& 2(1+M)\epsilon.\label{eqn:c0m0}}
Now, consider the following inequalities for $n\geq N_{\epsilon}$:
\beqq{\left|\int_{\ALP A} h_nd\mu_n-\int_{\ALP A} h_n d\mu_0\right| &\leq&  \left|\int_{\ALP A} h_nd\mu_n-\int_{\ALP A} h_0d\mu_0\right|+\left|\int_{\ALP A} h_0d\mu_0-\int_{\ALP A} h_nd\mu_0\right|\\
& < & 2(1+M)\epsilon +(1+2M)\epsilon = (3+4M)\epsilon,}
where the first inequality is just the triangle inequality on the real line, whereas the second inequality follows from \eqref{eqn:12m} and \eqref{eqn:c0m0}. This completes the proof of the lemma.
\end{proof}

Now, we turn our attention to the proof of Lemma \ref{lem:lambdan}, which is done in two steps:

{\it Step 1:}  By Lemma \ref{lem:unifequi2}, we know that $\{f_n\}$ is a sequence of uniformly equicontinuous and uniformly bounded functions. In this step, we show that there exists a subsequence $\{f_{n_k}\}_{k\in\Na}$ and $f_0\in C_b(\ALP A\times\ALP C)$ such that $f_{n_k}$ converges to $f_0$ uniformly over any compact set in $\ALP A\times\ALP C$. 

Since $\ALP A\times\ALP C$ is $\sigma$-compact, there exists a countable collection of compact sets $\SF K_n\subset\ALP A\times\ALP C$ such that $\ALP A\times\ALP C = \cup_{n\in\Na}\SF K_n$. Let $\SF L_m = \cup_{k=1}^m \SF K_k$. By the Arzela-Ascoli Theorem \cite{ali2006}, for every $m\in\Na$, there exists a convergent subsequence $\{f_{n^m_k}\}_{k\in\Na}$ and a continuous function $f^m_0:\SF L_m\rightarrow\Re$ such that $\sup_{(\VEC a,\VEC c)\in\SF L_m} |f_{n^m_k}(\VEC a,\VEC c)-f_0^m(\VEC a,\VEC c)|\rightarrow 0$ as $k\rightarrow\infty$. We can take $\{n^{m+1}_k\}_{k\in\Na}$ to be a subsequence of $\{n^m_k\}_{k\in\Na}$ for every $m\in\Na$. Now, since $\SF L_{m}\subset\SF L_{m+1}$, we conclude that $f^m_0$ agrees with $f^{m+1}_0$ on set $\SF L_m$ for every $m\in\Na$. Using Cantor's diagonalization argument, we get a subsequence $\{f_{n_k}\}_{k\in\Na}$ and a continuous function $f_0$ such that $f_{n_k}\rightarrow f_0$, where the convergence is uniform over any compact set in $\ALP A\times\ALP C$. Furthermore,
 since $f_n$ is uniformly bounded, $f_0$ is also bounded. 

{\it Step 2:} Using the result of Lemma \ref{lem:cnmun}, we get
\beqq{\lf{k}\left|\int_{\ALP A\times\ALP C} f_{n_k}d\zeta_{n_k} - \int_{\ALP A\times\ALP C} f_{n_k}d\zeta_{0}\right| =0.}
This establishes the statement of Lemma \ref{lem:lambdan}.

\section{Proof of Lemma \ref{lem:control2}}\label{app:control2}
First, note that since pullback of a measure is a continuous operation \cite{amb2008}, $\pj^{\ALP B\times\ALP C}_{\#}\mu_n\rightarrow\pj^{\ALP B\times\ALP C}_{\#}\mu_0$. Pick any $g\in U_b(\ALP A\times\ALP B\times\ALP C)$ and note that $(\VEC b,\VEC c)\mapsto\int_{\ALP A} g(\VEC a,\VEC b,\VEC c) \rho(\VEC a,\VEC b)\nu(d\VEC a)$ is a bounded uniformly continuous function from Corollary \ref{cor:fconti}. This gives
\beqq{\int gd\mu_0 &=& \lf{n}\int gd\mu_n\\
&=& \lf{n}\int_{\ALP B\times\ALP C} \left(\int_{\ALP A} g(\VEC a,\VEC b,\VEC c) \rho(\VEC a,\VEC b)\nu(d\VEC a) \right)\pj^{\ALP B\times\ALP C}_{\#}\mu_n(d\VEC b,d\VEC c),\\
&=& \int_{\ALP B\times\ALP C} \left(\int_{\ALP A} g(\VEC a,\VEC b,\VEC c) \rho(\VEC a,\VEC b)\nu(d\VEC a) \right)\pj^{\ALP B\times\ALP C}_{\#}\mu_0(d\VEC b,d\VEC c),}
where we used disintegration of measures. This completes the proof of the lemma.

\section{Proof of Theorem \ref{thm:intc}}\label{app:intc}
Let us define a functional $\tilde J$ as:
\beqq{\tilde J(\lambda^{1:N}):=\int g(\VEC x,\VEC y^{1:N},\VEC u^{1:N}) \left(\prod_{i=1}^N \lambda^i(d\VEC u^i|\VEC y^i)\right)\pr{d\VEC x,d\VEC y^{1:N}},}
where $\lambda^i\in \wp(\ALP U^i\times\ALP Y^i\times\ALP X)$. We proceed with the proof in two steps. In the first step, we show that there exists a subsequence $\{n_{k}\}_{k\in\Na}$ such that $\lf{k}\tilde J(\lambda^{1:N}_{n_{k}}) = \tilde J(\lambda^{1:N}_0)$. Then, we show that $\lf{n}\tilde J(\lambda^{1:N}_{n}) = \tilde J(\lambda^{1:N}_0)$ using the first step.

{\it Step 1:} Consider the expressions,
\beqq{\bigg|& & \int g(\VEC x,\VEC y^{1:N},\VEC u^{1:N}) \left(\prod_{i=1}^N \lambda^i_{n}(d\VEC u^i|\VEC y^i)\right)\pr{d\VEC x,d\VEC y^{1:N}}\\
& & - \int g(\VEC x,\VEC y^{1:N},\VEC u^{1:N}) \left(\prod_{i=1}^N \lambda^i_0(d\VEC u^i|\VEC y^i)\right)\pr{d\VEC x,d\VEC y^{1:N}}\bigg|\nonumber\quad\\
&=& \bigg|\sum_{j=1}^N \int g(\VEC x,\VEC y^{1:N},\VEC u^{1:N}) \left(\prod_{i=1}^{j-1} \lambda^i_0(d\VEC u^i|\VEC y^i)\right) \left(\prod_{i=j+1}^{N} \lambda^i_{n}(d\VEC u^i|\VEC y^i)\right)\\
& & \left(\lambda^j_{n}(d\VEC u^j|\VEC y^j) -\lambda^j_0(d\VEC u^j|\VEC y^j) \right)\pr{d\VEC x,d\VEC y^{1:N}} \bigg|}
\beq{\leq  \sum_{j=1}^N \bigg|\int g(\VEC x,\VEC y^{1:N},\VEC u^{1:N}) \left(\prod_{i=1}^{j-1} \lambda^i_0(d\VEC u^i|\VEC y^i)\eta^i(\VEC x,\VEC y^i) \nu_{\ALP Y^i}(d\VEC y^i) \right) \hspace{3cm}\nonumber\\
\left(\prod_{i=j+1}^{N} \lambda^i_{n}(d\VEC u^i|\VEC y^i)\eta^i(\VEC x,\VEC y^i) \nu_{\ALP Y^i}(d\VEC y^i) \right)
\bigg(\lambda^j_{n}(d\VEC u^j,d\VEC y^j,d\VEC x) -\lambda^j_0(d\VEC u^j,d\VEC y^j, d\VEC x) \bigg) \bigg|,\quad\label{eqn:nlk}}
where the integration is taken over the space $\ALP X\times\ALP Y^{1:N}\times\ALP U^{1:N}$. Replace the product terms $\prod_{i=1}^0$ and $\prod_{i=N+1}^N$ by 1 in those expressions. In the statements of Lemma \ref{lem:unifequi} and its corollary and Lemma \ref{lem:unifequi2}, take $\ALP B = \ALP U^i\times\ALP Y^i$ for an appropriate index $i$, replace $\ALP A$ by $\ALP X$ and $\ALP C$ by appropriate product spaces.

Applying Corollary \ref{cor:fconti}, we conclude that the function
\beqq{\int_{\ALP U^{1:j-1}\times\ALP Y^{1:j-1}} g(\VEC x,\VEC y^{1:N},\VEC u^{1:N}) \left(\prod_{i=1}^{j-1} \lambda^i_0(d\VEC u^i|\VEC y^i)\eta^i(\VEC x,\VEC y^i) \nu_{\ALP Y^i}(d\VEC y^i)\right)}
is uniformly continuous in $\VEC x$, $\VEC y^{j:N}$ and $\VEC u^{j:N}$. Next, we use Lemma \ref{lem:unifequi2} to conclude that the sequence of functions
\beqq{\Bigg\{\int_{\ALP U^{-j}\times\ALP Y^{-j}} g(\VEC x,\VEC y^{1:N},\VEC u^{1:N})\prod_{i=1}^{j-1} \lambda^i_0(d\VEC u^i|\VEC y^i)\eta^i(\VEC x,\VEC y^i) \nu_{\ALP Y^i}(d\VEC y^i)\\
\left(\prod_{i=j+1}^{N} \lambda^i_{n}(d\VEC u^i|\VEC y^i)\eta^i(\VEC x,\VEC y^i) \nu_{\ALP Y^i}(d\VEC y^i)\right)\Bigg\}_{n\in\Na}}
is uniformly equicontinuous and bounded on $\ALP X\times\ALP Y^j\times\ALP U^j$ for every $j\in[N]$. Then, there exists a subsequence $\{n_{l}\}_{l\in\Na}$ by Lemma \ref{lem:lambdan} for $j=1$ such that as $l\rightarrow\infty$,
\beqq{ \bigg|\int \left(g(\VEC x,\VEC y^{1:N},\VEC u^{1:N}) \left(\prod_{i=2}^{N} \lambda^i_{n_{l}}(d\VEC u^i|\VEC y^i)\eta^i(\VEC x,\VEC y^i) \nu_{\ALP Y^i}(d\VEC y^i)\right)\right)\\
 \left(\lambda^1_{n_{l}}(d\VEC u^1,d\VEC y^1,d\VEC x) -\lambda^1_0(d\VEC u^1,d\VEC y^i,d\VEC x) \right) \bigg|\rightarrow 0.}
Along the sequence $\{n_{l}\}_{l\in\Na}$, there exists a further subsequence $\{n_{l_m}\}_{m\in\Na}$ for $j=2$ such that second term in the summation  in \eqref{eqn:nlk} goes to zero as $k\rightarrow\infty$. Continue this process for $j=3,\ldots,N$ to arrive at a subsequence $\{n_{k}\}_{k\in\Na}$ such that each component of the sum in \eqref{eqn:nlk} converges to $0$ as $k\rightarrow\infty$. Thus, we get
\beqq{\lf{k}\tilde J(\lambda^{1:N}_{n_{k}}) = \tilde J(\lambda^{1:N}_0).}

{\it Step 2:} We now claim that $\lf{n}\tilde J(\lambda^{1:N}_{n}) = \tilde J(\lambda^{1:N}_0)$, which we prove by contradiction. Suppose that $\lf{n}\tilde J(\lambda^{1:N}_{n})$ does not exist or is not equal to $\tilde J(\lambda^{1:N}_0)$. In this case, there exists an $\epsilon_0>0$ and a subsequence $\{n_m\}_{m\in\Na}$ such that
\beqq{|\tilde J(\lambda^{1:N}_{n_m}) - \tilde J(\lambda^{1:N}_0)|>\epsilon_0 \quad \text{ for all }m\in\Na.}
From Step 1 of the proof, we know that there exists a further subsequence $\{n_{m_k}\}_{k\in\Na}$ such that
\beqq{\lf{k}\tilde J(\lambda^{1:N}_{n_{m_k}}) = \tilde J(\lambda^{1:N}_0),}
which is a contradiction. Thus, $\lf{n}\tilde J(\lambda^{1:N}_{n}) = \tilde J(\lambda^{1:N}_0)$, which completes the proof of the first part of the theorem.

Since the first part of the lemma holds for all uniformly continuous functions, we arrive at the second result by \cite[Theorem 9.1.5, p. 372]{stroock2011}.

\section{Proof of Lemma \ref{lem:intcdegraded}}\label{app:intcdegraded}
First, note that the information constraints of the limits $\lambda^1_0$ and $\lambda^2_0$ are satisfied due to Assumption \ref{assm:degraded} and Lemma \ref{lem:control2}. To establish the result, we follow the same steps as in the proof of Theorem \ref{thm:intc} in Appendix \ref{app:intc} above with some minor modifications. Consider the following expressions:
\beq{& & \bigg|\int g(\VEC x,\VEC y^{1:2},\VEC u^{1:2}) \left(\prod_{i=1}^2 \lambda^i_{n}(d\VEC u^i|\VEC y^i)\right)\pr{d\VEC x,d\VEC y^{1:2}}\nonumber\\
& & - \int g(\VEC x,\VEC y^{1:2},\VEC u^{1:2}) \left(\prod_{i=1}^2 \lambda^i_0(d\VEC u^i|\VEC y^i)\right)\pr{d\VEC x,d\VEC y^{1:2}}\bigg|\label{eqn:cdeg}\\
&\leq & \bigg|\int g(\VEC x,\VEC y^{1:2},\VEC u^{1:2}) \lambda^2_n(d\VEC u^2|\VEC y^2) \left(\lambda^1_{n}(d\VEC u^1|\VEC y^1) -\lambda^1_0(d\VEC u^1|\VEC y^1) \right)\pr{d\VEC x,d\VEC y^{1:2}} \bigg|\nonumber\\
& & +\bigg|\int g(\VEC x,\VEC y^{1:2},\VEC u^{1:2}) \lambda^1_0(d\VEC u^1|\VEC y^1) \left(\lambda^2_{n}(d\VEC u^2|\VEC y^2) -\lambda^2_0(d\VEC u^2|\VEC y^2) \right)\pr{d\VEC x,d\VEC y^{1:2}} \bigg|\nonumber,}
where the integration is taken over the space $\ALP X\times\ALP Y^{1:2}\times\ALP U^{1:2}$. We use Lemma \ref{lem:unifequi} to conclude that the sequence of functions
\beqq{\Bigg\{\int_{\ALP U^2\times\ALP Y^2} g(\VEC x,\VEC y^{1:2},\VEC u^{1:2}) \lambda^2_n(d\VEC u^2|\VEC y^2) \eta^2(\VEC y^1,\VEC y^2) \nu_{\ALP Y^2}(d\VEC y^2)\Bigg\}_{n\in\Na}}
is uniformly equicontinuous and bounded on $\ALP X\times\ALP Y^1\times\ALP U^1$. Then, there exists a subsequence $\{n_l\}_{l\in\Na}$ by Lemma \ref{lem:lambdan} such that as $l\rightarrow\infty$,
\beqq{ \bigg|\int g(\VEC x,\VEC y^{1:2},\VEC u^{1:2}) \lambda^2_{n_l}(d\VEC u^2|\VEC y^2) \left(\lambda^1_{n_l}(d\VEC u^1|\VEC y^1) -\lambda^1_0(d\VEC u^1|\VEC y^1) \right)\pr{d\VEC x,d\VEC y^{1:2}} \bigg|\rightarrow 0.}
We now apply Corollary \ref{cor:fconti} to conclude that the function
\beqq{\int_{\ALP U^1\times\ALP Y^1} g(\VEC x,\VEC y^{1:2},\VEC u^{1:2}) \lambda^1_0(d\VEC u^1|\VEC y^1)\eta^1(\VEC x,\VEC y^2,\VEC y^1) \nu_{\ALP Y^1}(d\VEC y^1)}
is uniformly continuous in $\VEC x$, $\VEC y^{2}$ and $\VEC u^{2}$. Thus, as $l\rightarrow\infty$, we get
\beqq{\bigg|\int g(\VEC x,\VEC y^{1:2},\VEC u^{1:2}) \lambda^1_0(d\VEC u^1|\VEC y^1) \left(\lambda^2_{n_l}(d\VEC u^2|\VEC y^2) -\lambda^2_0(d\VEC u^2|\VEC y^2) \right)\pr{d\VEC x,d\VEC y^{1:2}} \bigg|\rightarrow 0.}
This implies that along the subsequence $\{n_l\}_{l\in\Na}$, \eqref{eqn:cdeg} converges to $0$ as $l\rightarrow\infty$. We can now mimic Step 2 of the proof of Theorem \ref{thm:intc} in Appendix \ref{app:intc} to complete the proof of the lemma.

\section{Proof of Theorem \ref{thm:sub2}}\label{app:sub2}
By Theorem \ref{thm:sub},
\beqq{\lf{n} \int(\min\{c,m\}) \prod_{i=1}^N\lambda^i_{n}(d\VEC u^i|\VEC y^i)\pr{d\VEC x,d\VEC y^{1:N}}\qquad\qquad\qquad\\
= \int(\min\{c,m\})\; \prod_{i=1}^N\lambda^i_0(d\VEC u^i|\VEC y^i)\pr{d\VEC x,d\VEC y^{1:N}} \text{ for all } m\in\Na.}
Notice that $\min\{c,m\}\nearrow c$ as $m\rightarrow\infty$. We get
\beqq{& & \lif{n}\int c\;\prod_{i=1}^N\lambda^i_{n}(d\VEC u^i|\VEC y^i)\pr{d\VEC x,d\VEC y^{1:N}} \\
&\geq & \lif{k}\int \min\{c,m\}\;\prod_{i=1}^N\lambda^i_{n}(d\VEC u^i|\VEC y^i)\pr{d\VEC x,d\VEC y^{1:N}} \\
 & = & \int(\min\{c,m\})\; \prod_{i=1}^N\lambda^i_0(d\VEC u^i|\VEC y^i)\pr{d\VEC x,d\VEC y^{1:N}}.}
The left-side of the equation is independent of $m$ and the right-side of the equation holds for any $m\in\Na$. Taking the limit as $m\rightarrow\infty$ and using the monotone convergence theorem, we get
\beqq{\lif{n}\int c\;\prod_{i=1}^N\lambda^i_{n}(d\VEC u^i|\VEC y^i)\pr{d\VEC x,d\VEC y^{1:N}} \geq\int c\;\prod_{i=1}^N\lambda^i_0(d\VEC u^i|\VEC y^i)\pr{d\VEC x,d\VEC y^{1:N}}.}
This establishes the theorem.

\section{Proof of Lemma \ref{lem:q2kcom}}\label{app:q2kcom}
First, we recall a general version of Markov's inequality.
\begin{lemma}[Generalized Markov's Inequality]
\label{lem:cheb}
Let $\phi:\ALP A\rightarrow \Re$ be a non-negative measurable function and $\SF A\subset\ALP A$ be a Borel measurable set such that $\inf_{\VEC a\in \SF A} \phi(\VEC a)>0$. Then,
\beqq{\pr{\SF A}\leq \frac{\ex{\phi\IND_{\SF A}}}{\inf_{\VEC a\in \SF A} \phi(\VEC a)}.}
\end{lemma}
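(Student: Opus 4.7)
The plan is to prove this directly from the definition of the infimum and monotonicity of the expectation. The inequality is just a reformulation of the classical Markov (Chebyshev) inequality, adapted so that the denominator is the infimum of $\phi$ restricted to the set $\SF A$ rather than a uniform threshold.

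First I would set $m := \inf_{\VEC a \in \SF A} \phi(\VEC a)$, which is strictly positive by hypothesis. By definition of the infimum, $\phi(\VEC a) \geq m$ for every $\VEC a \in \SF A$, and hence the pointwise inequality
\[
\phi(\VEC a)\IND_{\SF A}(\VEC a) \;\geq\; m\,\IND_{\SF A}(\VEC a), \qquad \VEC a \in \ALP A,
\]
holds on all of $\ALP A$ (trivially outside $\SF A$, both sides are zero). Since $\phi$ is non-negative and measurable and $\IND_{\SF A}$ is measurable, both sides are integrable in the extended sense, and monotonicity of the Lebesgue integral yields
\[
\ex{\phi\IND_{\SF A}} \;\geq\; m\,\ex{\IND_{\SF A}} \;=\; m\,\pr{\SF A}.
\]
Dividing through by $m>0$ gives the stated bound.

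There is essentially no obstacle here: the only subtlety is that $m>0$ is required so that division is legitimate, and this is precisely the hypothesis $\inf_{\VEC a \in \SF A} \phi(\VEC a)>0$. The result holds whether or not $\ex{\phi\IND_{\SF A}}$ is finite, since if it is infinite the inequality is vacuous, and if it is finite the argument above applies verbatim.
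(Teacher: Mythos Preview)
Your proof is correct and follows essentially the same approach as the paper: both establish the pointwise inequality $\left(\inf_{\VEC a\in\SF A}\phi(\VEC a)\right)\IND_{\SF A}(\VEC a)\leq \phi(\VEC a)\IND_{\SF A}(\VEC a)$ and then take expectations. Your version is slightly more explicit about why the inequality holds on all of $\ALP A$ and about the role of the hypothesis $m>0$, but the argument is the same.
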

\begin{proof}
Note that $\left(\inf_{\VEC a\in \SF A} \phi(\VEC a)  \right)\IND_{\SF A}(\VEC a) \leq \phi(\VEC a)\IND_{\SF A}(\VEC a)$ for all $\VEC a\in \SF A$. Taking expectations on both sides leads us to the result.
\end{proof}

We want to show that the set of measures in $\ALP M$ is tight. Toward this end, we fix $\epsilon>0$, and show that there exist compact sets $\SF K_\epsilon\subset \ALP A$ and $\SF L_\epsilon\subset \ALP B$ such that $\pj^{\ALP A\times\ALP B}_{\#}\mu(\SF K_\epsilon\times \SF L_\epsilon) =\mu(\SF K_\epsilon\times \SF L_\epsilon\times\ALP C) >1-2\epsilon$ for all $\mu\in\ALP M$. This proves that the set of measures $\ALP M$ is tight.

Since the set of measures $\ALP N$ is tight, there exists a compact set $\SF K_{\epsilon}\subset\ALP A$ such that $\zeta(\SF K_{\epsilon}^\complement)<\epsilon$ for all $\zeta\in\ALP N$. Pick $M\in\Re^+$ sufficiently large such that $M>k/\epsilon$. We carry out the analysis for the two cases separately.
\begin{enumerate}
\item  Assume that $\phi$ satisfies the first condition in Definition \ref{def:icclass}. Given $M$ and $\SF K_{\epsilon}$, let $\SF L_\epsilon\subset \ALP B$ be the compact set such that
\beqq{\inf_{(\VEC a,\VEC b,\VEC c)\in (\SF K_{\epsilon}\times \SF L_{\epsilon}^\complement\times\ALP C)} \phi(\VEC a,\VEC b,\VEC c)\geq M.}
Now note that $\mu(\SF K_\epsilon^\complement\times \SF B\times\ALP C)<\epsilon$ for all Borel sets $\SF B\subset\ALP B$. Let $\SF E =(\SF K_{\epsilon}\times \SF L_{\epsilon})^\complement \times\ALP C$ and note that
$\SF E=  (\SF K_\epsilon\times \SF L_\epsilon^\complement\times\ALP C)\bigcup (\SF K_\epsilon^\complement\times \ALP B\times\ALP C)$.


Define $\SF E_1 = \SF K_{\epsilon}\times \SF L_{\epsilon}^\complement\times\ALP C$ and $\SF E_2 =  (\SF K_\epsilon^\complement\times \ALP B\times\ALP C)$. It is easy to verify that $\SF E_1\bigcap \SF E_2=\emptyset$. We now use generalized Markov's inequality (Lemma \ref{lem:cheb}) to get
\beqq{\mu(\SF K_{\epsilon}\times \SF L_{\epsilon}^\complement\times\ALP C) =\mu(\SF E_1)\leq \frac{\ex{\phi 1_{\SF E_1}}}{\inf_{x\in \SF E_1} \phi(x)}\leq \frac{k}{M}<\epsilon}
for all $\mu\in\ALP M$, where $1_{\SF E_1}$ is the indicator function over the set $\SF E_1$. By the additivity property of probability measures, we get $\mu(\SF E)=\mu(\SF E_1)+\mu(\SF E_2)<2\epsilon$.


\item Now suppose that $\phi$ satisfies the second assumption. For every $a\in \SF K_{\epsilon}$, let $\SF O_a\subset \ALP A$ be the open neighborhood of $a\in \SF K_{\epsilon}$ and $\SF L_{\VEC a}\subset\ALP B$ be the compact set in $\ALP B$ such that
\beqq{\inf_{(\VEC a,\VEC b,\VEC c)\in (\SF O_a\times \SF L_a^\complement\times\ALP C)} \phi(\VEC a,\VEC b,\VEC c)\geq M.}
Notice that $\{\SF O_{\VEC a}\}_{\VEC a\in \SF K_{\epsilon}}$ is an open cover for $\SF K_\epsilon$. By the definition of compactness, there exists a finite subcover, say $\{\SF O_{\VEC a_j}\}_{j=1}^n$, such that $\SF K_\epsilon\subset \SF O_\epsilon :=\bigcup_{j=1}^n \SF O_{\VEC a_j}$.
Define $\SF L_\epsilon := \bigcup_{j=1}^n \SF L_{\VEC a_j}\subset\ALP B$, which is a compact set in $\ALP B$. Since $\SF L_{\VEC a_j}\subset \SF L_{\epsilon}$, we get $\SF O_{\VEC a_j}\times \SF L_{\epsilon}^\complement\subset \SF O_{\VEC a_j}\times \SF L_{\VEC a_j}^\complement$ and as a result of this inclusion, we conclude
\beqq{\inf_{(\VEC a,\VEC b,\VEC c)\in (\SF O_{\VEC a_j}\times \SF L_{\epsilon}^\complement\times\ALP C)} \phi(\VEC a,\VEC b,\VEC c)\geq \inf_{(\VEC a,\VEC b,\VEC c)\in (\SF O_{\VEC a_j}\times \SF L_{\VEC a_j}^\complement\times\ALP C)} \phi(\VEC a,\VEC b,\VEC c)\geq M,}
for all $j\in\{1,\ldots,n\}$. Notice that $\bigcup_{j=1}^n (\SF O_{\VEC a_j}\times \SF L_{\epsilon}^\complement) =  \SF O_\epsilon\times \SF L_{\epsilon}^\complement$.
Since over each set $\SF O_{\VEC a_j}\times \SF L_{\epsilon}^\complement$, the infimum of $\phi$ is greater than or equal to $M$, we conclude that
\beqq{\inf_{(\VEC a,\VEC b,\VEC c)\in (\SF O_{\epsilon}\times \SF L_{\epsilon}^\complement\times\ALP C)} \phi(\VEC a,\VEC b,\VEC c)\geq M.}
Now again define $\SF E :=(\SF K_{\epsilon}\times \SF L_{\epsilon})^\complement\times\ALP C $, $\SF E_1 := \SF O_{\epsilon}\times \SF L_{\epsilon}^\complement \times\ALP C$, $\SF E_2 :=  (\SF K_\epsilon^\complement\times \ALP B\times\ALP C)$, and note that $\SF E\subsetneq \SF E_1\bigcup \SF E_2 $. By a similar argument as in Part 1 of this proof, we conclude that $\mu(\SF E_1)<\epsilon$ and $\mu(\SF E_2)<\epsilon$, which means $\mu(\SF E)<2\epsilon$.
\end{enumerate}
Note that $\SF K_\epsilon\times \SF L_\epsilon$ is a compact set in $\ALP A\times\ALP B$ and its complement has small measure. Thus, we conclude that the set of probability measures $\pj^{\ALP A\times\ALP B}_{\#}\ALP M$ is tight.

If $\phi$ is lower semi-continuous, then $\mu \mapsto \int \phi\;d\mu$ is a lower semi-continuous functional \cite[Lemma 4.3]{villani2009}. Thus, we conclude that $\ALP M$ is in fact weak* closed, and therefore $\ALP M$ is weak* compact. This completes the proof of the lemma.

\section{Proof of Lemma \ref{lem:mittight}}\label{app:mittight}
Assumption \ref{ass:one} and the structure of the cost function of the team as defined in \eqref{eqn:dyncost} imply that for any $i\in[N]$ and $t\in[T]$, we have
\beqq{\int_{\Omega_0\times\ALP Y^{1:N}_{1:t-1}\times\ALP U^{1:N}_{1:t-1}\times\ALP Y^i_t\times\ALP U^i_t} \bar{c}^i_t(\VEC u^i_t,\omega_0,\VEC u^{1:N}_{1:t-1},\VEC y^{1:N}_{1:t-1},\VEC y^i_t)\;d\lambda^{1:N}_{1:t-1}\;d\lambda^i_t\; \pr{d\omega_0}\leq J(\tilde\pi^{1:N}_{1:T})}
for any choice of $\lambda^{1:N}_{1:T}\in\ALP N^{1:N}_{1:T}$. Also recall from Lemma \ref{lem:cbarit} that $\bar{c}^i_t$ is in class $\ic{\Omega_0\times\ALP U^{1:N}_{1:t-1}\times\ALP Y^{1:N}_{1:t-1}\times\ALP Y^i_t,\ALP U^i_t}$. We now use the result of Lemma \ref{lem:q2kcom} and the principle of mathematical induction to prove the result.

{\it Step 1:} Consider Agent $(1,1)$. We know that $\bar{c}^1_1$ is a non-negative continuous function in class $\ic{\Omega_0\times\ALP Y^1_1,\ALP U^1_1}$. Moreover, the measure on $\Omega_0\times\ALP Y^1_1$ is tight. Using the result of Lemma \ref{lem:q2kcom}, we get that $\ALP N^1_1$ is a tight set of measures.

{\it Step 2:} Using the same argument as in Step 1, we conclude that $\ALP M^i_1$ is a tight set of measures for all $i\in[N]$.

{\it Step 3:} Let us assume that $\ALP M^i_s$ is a tight set of measures for all $i\in[N]$ and $1\leq s\leq t-1$. Consider any Agent $(i,t)$. We know that $\bar{c}^i_t$ is a non-negative continuous function in class $\ic{\Omega_0\times\ALP U^{1:N}_{1:t-1}\times\ALP Y^{1:N}_{1:t-1}\times\ALP Y^i_t,\ALP U^i_t}$. Moreover, the set of all possible measures on $\Omega_0\times\ALP U^{1:N}_{1:t-1}\times\ALP Y^{1:N}_{1:t-1}$ induced by $\ALP N^{1:N}_{1:t-1}$ is tight because $\ALP M^i_s$ is tight for all $i\in[N]$ and $1\leq s\leq t-1$ by the induction hypothesis. Therefore, using the result of Lemma \ref{lem:q2kcom}, we get that $\ALP M^i_t$ is a tight set of measures.

This completes the induction step and we conclude that $\ALP M^i_t$ is tight for all $i\in[N]$ and $t\in[T]$. This completes the proof of the lemma.

\bibliographystyle{ieeetr}
\bibliography{nonclass,mypaper,math,game,markov}

\begin{thebibliography}{10}

\bibitem{gupta2014teamacc}
A.~Gupta, S.~Y\"uksel, C.~Langbort, and T.~Ba\c{s}ar, ``On the existence of
  optimal strategies in multi-agent stochastic teams,'' in {\em Proc. 2014
  American Control Conference (ACC)}, June 2014.

\bibitem{gupta2014teamcdc}
A.~Gupta, S.~Y\"uksel, and T.~Ba\c{s}ar, ``On the existence of optimal
  strategies in a class of dynamic stochastic teams,'' {\em submitted to 53rd
  IEEE Conference in Decision and Control (CDC)}, March 2014.

\bibitem{teneket1996}
D.~Teneketzis, ``On information structures and nonsequential stochastic
  control,'' {\em Centrum voor Wiskunde en Informatica Quarterly}, vol.~9,
  no.~4, pp.~241--261, 1996.

\bibitem{wit1971b}
H.~Witsenhausen, ``On information structures, feedback and causality,'' {\em
  SIAM Journal on Control}, vol.~9, no.~2, pp.~149--160, 1971.

\bibitem{yukselbook}
S.~Y\"uksel and T.~Ba\c{s}ar, {\em Stochastic Networked Control Systems:
  Stabilization and Optimization under Information Constraints}.
\newblock Boston, MA: Birkh\"auser, 2013.

\bibitem{radner1962}
R.~Radner, ``Team decision problems,'' {\em The Annals of Mathematical
  Statistics}, vol.~33, no.~3, pp.~857--881, 1962.

\bibitem{krainak1982}
J.~Krainak, J.~Speyer, and S.~Marcus, ``Static team problems--part {I}:
  Sufficient conditions and the exponential cost criterion,'' {\em IEEE
  Transactions on Automatic Control}, vol.~27, no.~4, pp.~839--848, 1982.

\bibitem{ho1972a}
Y.~Ho and K.~Chu, ``Team decision theory and information structures in optimal
  control problems--{P}art {I},'' {\em IEEE Transactions on Automatic Control},
  vol.~17, no.~1, pp.~15--22, 1972.

\bibitem{chu1972}
K.~C. Chu, ``Team decision theory and information structures in optimal control
  problems--{P}art {II},'' {\em IEEE Transactions on Automatic Control},
  vol.~17, pp.~22--28, February 1972.

\bibitem{witcount}
H.~Witsenhausen, ``{A counterexample in stochastic optimum control},'' {\em
  SIAM Journal on Control}, vol.~6, pp.~131--147, 1968.

\bibitem{NayyarBookChapter}
A.~Nayyar, A.~Mahajan, and D.~Teneketzis, ``The common-information approach to
  decentralized stochastic control,'' in {\em {\em Information and Control in
  Networks}, Editors: G. Como, B. Bernhardsson, A. Rantzer}, Springer, 2013.

\bibitem{CDCTutorial}
A.~Mahajan, N.~Martins, M.~Rotkowitz, and S.~Y\"uksel, ``Information structures
  in optimal decentralized control,'' in {\em Proc. of 51st IEEE Conf. Decision
  and Control}, (Hawaii, USA), 2012.

\bibitem{wit1971a}
H.~Witsenhausen, ``Separation of estimation and control for discrete time
  systems,'' {\em Proceedings of the IEEE}, vol.~59, no.~11, pp.~1557--1566,
  1971.

\bibitem{wit1988}
H.~Witsenhausen, ``Equivalent stochastic control problems,'' {\em Mathematics
  of Control, Signals, and Systems (MCSS)}, vol.~1, no.~1, pp.~3--11, 1988.

\bibitem{bambos2013b}
C.~D. Charalambous and N.~U. Ahmed, ``Dynamic team theory of stochastic
  differential decision systems with decentralized noisy information structures
  via {G}irsanov's measure transformation.''
  \url{http://arxiv.org/abs/1309.1913}, 2013.
\newblock [Online; accessed 25-Sep-2013].

\bibitem{wu2011}
Y.~Wu and S.~Verd{\'u}, ``Witsenhausen's counterexample: A view from optimal
  transport theory,'' in {\em Proc. 50th IEEE Conf. Decision and Control and
  European Control Conference (CDC-ECC)}, pp.~5732--5737, 2011.

\bibitem{basar2008}
T.~Ba\c{s}ar, ``Variations on the theme of the {W}itsenhausen counterexample,''
  in {\em Proc. 47th IEEE Conf. Decision and Control}, pp.~1614--1619, 2008.

\bibitem{cover2006}
T.~Cover and J.~Thomas, {\em Elements of Information Theory}.
\newblock Wiley-Interscience, 2006.

\bibitem{bansal1987}
R.~Bansal and T.~Ba\c{s}ar, ``Stochastic teams with nonclassical information
  revisited: When is an affine law optimal?,'' {\em IEEE Transactions on
  Automatic Control}, vol.~32, no.~6, pp.~554--559, 1987.

\bibitem{puterman1994}
M.~Puterman, {\em Markov Decision Processes: Discrete Stochastic Dynamic
  Programming}.
\newblock John Wiley \& Sons, Inc., 1994.

\bibitem{her1996}
O.~Hern{\'a}ndez-Lerma and J.~Lasserre, {\em Discrete-time Markov Control
  Processes: Basic Optimality Criteria}.
\newblock Springer, New York, 1996.

\bibitem{bert1978}
D.~P. Bertsekas and S.~E. Shreve, {\em Stochastic Optimal Control: The Discrete
  Time Case}, vol.~139.
\newblock Academic Press, New York, 1978.

\bibitem{amb2008}
L.~Ambrosio, N.~Gigli, G.~Savar{\'e}, and M.~Struwe, ``Gradient flows: in
  metric spaces and in the space of probability measures,'' {\em Lectures in
  Mathematics ETH Z{\"u}rich}, 2008.

\bibitem{villani2009}
C.~Villani, {\em Optimal Transport: Old and New}, vol.~338.
\newblock Springer, 2009.

\bibitem{yuksel2012opt}
S.~Y{\"u}ksel and T.~Linder, ``Optimization and convergence of observation
  channels in stochastic control,'' {\em SIAM Journal on Control and
  Optimization}, vol.~50, no.~2, pp.~864--887, 2012.

\bibitem{borkar1993}
V.~S. Borkar, ``White-noise representations in stochastic realization theory,''
  {\em SIAM Journal on Control and Optimization}, vol.~31, no.~5,
  pp.~1093--1102, 1993.

\bibitem{borkar2005}
V.~Borkar, S.~Mitter, A.~Sahai, and S.~Tatikonda, ``Sequential source coding:
  an optimization viewpoint,'' in {\em Proc. of 44th IEEE Conf. Decision and
  Control and European Control Conference (CDC-ECC)}, pp.~1035--1042, 2005.

\bibitem{ali2006}
C.~Aliprantis and K.~Border, {\em Infinite Dimensional Analysis: A Hitchhiker's
  Guide}.
\newblock Springer, 2006.

\bibitem{durrett2010}
R.~Durrett, {\em Probability: Theory and Examples}.
\newblock Cambridge Univ Press, 2010.

\bibitem{blackwell1963}
D.~Blackwell and C.~Ryll-Nardzewski, ``Non-existence of everywhere proper
  conditional distributions,'' {\em The Annals of Mathematical Statistics},
  vol.~34, no.~1, pp.~223--225, 1963.

\bibitem{blackwell1964}
D.~Blackwell, ``Memoryless strategies in finite-stage dynamic programming,''
  {\em The Annals of Mathematical Statistics}, vol.~35, no.~2, pp.~863--865,
  1964.

\bibitem{bogachev2006b}
V.~Bogachev, {\em Measure Theory}, vol.~2.
\newblock Springer, 2006.

\bibitem{basar1976}
T.~Ba\c{s}ar, ``On the optimality of nonlinear designs in the control of linear
  decentralized systems,'' {\em IEEE Transactions on Automatic Control},
  vol.~21, p.~797, Oct 1976.

\bibitem{gastpar2008uncoded}
M.~Gastpar, ``Uncoded transmission is exactly optimal for a simple {G}aussian
  ``sensor'' network,'' {\em IEEE Transactions on Information Theory}, vol.~54,
  no.~11, pp.~5247--5251, 2008.

\bibitem{grover2010}
P.~Grover and A.~Sahai, ``Witsenhausen's counterexample as assisted
  interference suppression,'' {\em International Journal of Systems, Control
  and Communications}, vol.~2, no.~1, pp.~197--237, 2010.

\bibitem{lipsa2011}
G.~Lipsa and N.~Martins, ``Optimal memoryless control in {G}aussian noise: A
  simple counterexample,'' {\em Automatica}, vol.~47, no.~3, pp.~552--558,
  2011.

\bibitem{zaidi2011}
A.~A. Zaidi, S.~Y\"{u}ksel, T.~J. Oechtering, and M.~Skoglund, ``On optimal
  policies for control and estimation over a {G}aussian relay channel,'' in
  {\em Proc. 50th IEEE Conf. Decision and Control and European Control
  Conference (CDC-ECC)}, pp.~5720--5725, Dec. 2011.

\bibitem{stroock2011}
D.~W. Stroock, {\em Probability Theory: An Analytic View}.
\newblock Cambridge University Press, 2011.

\end{thebibliography}

\end{document}